\theoremstyle{plain}
\theoremstyle{plain}
\theoremstyle{plain}
\title{Maximum Unique Coverage on Streams: Improved FPT Approximation Scheme and Tighter Space Lower Bound} 
\titlerunning{Max.\ Unique Coverage on Streams: Improved FPT-AS and Tighter Space Lower Bound} 
\author{Philip Cervenjak}{School of Computing and Information Systems, The University of Melbourne, Australia}{pcervenjak@student.unimelb.edu.au}{https://orcid.org/0000-0002-8349-619X}{This work was supported by a Elizabeth and Vernon Puzey Scholarship, and by the Faculty of Engineering and Information Technology.}
\author{Junhao Gan}{School of Computing and Information Systems, The University of Melbourne, Australia}{junhao.gan@unimelb.edu.au}{https://orcid.org/0000-0001-9101-1503}{}
\author{Seeun William Umboh}{School of Computing and Information Systems, The University of Melbourne, Australia \\ ARC Training Centre in Optimisation Technologies, Integrated Methodologies, and Applications (OPTIMA)}{william.umboh@unimelb.edu.au}{https://orcid.org/0000-0001-6984-4007}{}
\author{Anthony Wirth}{School of Computer Science, The University of Sydney, Australia \and School of Computing and Information Systems, The University of Melbourne, Australia}{anthony.wirth@sydney.edu.au}{https://orcid.org/0000-0003-3746-6704}{}
\authorrunning{P. Cervenjak, J. Gan, S. W. Umboh, A. Wirth} 
\keywords{maximum unique coverage, maximum coverage, data streams, FPT approximation scheme} 
\DeclareMathOperator*{\argmin}{arg\,min}
\DeclareMathOperator{\poly}{poly}
\DeclareMathOperator{\polylog}{polylog}
\DeclareMathOperator*{\E}{\mathbb{E}}
\DeclareMathOperator{\Freq}{freq}
\let\oldnl\nl
\newcommand{\nonl}{\renewcommand{\nl}{\let\nl\oldnl}}
\newcommand{\Univ}{U}
\newcommand{\UnivSample}{\Univ'}
\newcommand{\UniqUniv}{\Tilde{\Univ}}
\newcommand{\Stream}{\mathcal{V}}
\newcommand{\StreamSample}{\Stream'}
\newcommand{\Part}{P}
\newcommand{\Elem}{x}
\newcommand{\UnivSize}{n}
\newcommand{\NumSets}{m}
\newcommand{\Pass}{p}
\newcommand{\SetS}{S}
\newcommand{\SetSSample}{\SetS'}
\newcommand{\SetT}{T}
\newcommand{\SetY}{Y}
\newcommand{\Indi}{i}
\newcommand{\SetSmall}{T}
\newcommand{\SetD}{D}
\newcommand{\Col}{\mathcal{C}}
\newcommand{\SubCol}{\mathcal{Q}}
\newcommand{\ColSize}{\ell}
\newcommand{\Sol}{\mathcal{B}}
\newcommand{\SolSample}{\Sol'}
\newcommand{\RecCol}{\Col\setminus\{\SetSmall\}}
\newcommand{\Cover}{\psi}
\newcommand{\UniqCover}{\Tilde{\Cover}}
\newcommand{\NUniqCover}{\Cover_{\geq 2}}
\newcommand{\Card}{k}
\newcommand{\MaxFreq}{r}
\newcommand{\MaxFreqCol}{r}
\newcommand{\MaxSize}{d}
\newcommand{\MaxSizeCol}{d}
\newcommand{\Error}{\varepsilon}
\newcommand{\SecError}{\Error'}
\newcommand{\Num}{z}
\newcommand{\Har}{H}
\newcommand{\HarNum}{\Num}
\newcommand{\HarInd}{t}
\newcommand{\OptValue}{\mathrm{OPT}}
\newcommand{\OptValueSample}{\OptValue'}
\newcommand{\Opt}{\mathcal{O}}
\newcommand{\OptIn}{\Opt^{\mathrm{in}}}
\newcommand{\OptOut}{\Opt^{\mathrm{out}}}
\newcommand{\OptSub}{\SubCol}
\newcommand{\StrInd}{i}
\newcommand{\IndStar}{\StrInd^*}
\newcommand{\PlayInd}{j}
\newcommand{\FreqInd}{t}
\newcommand{\FreqIndA}{u}
\newcommand{\SizeParam}{a}
\newcommand{\Space}{s}
\newcommand{\Prop}{q}
\newcommand{\NumSel}{\ell}
\newcommand{\SelCol}{\mathscr{L}}
\newcommand{\DistCol}{\SelCol_{\mathrm{di}}}
\newcommand{\IdenCol}{\SelCol_{\mathrm{id}}}
\newcommand{\RandVar}{X}
\newcommand{\Mean}{\mu}
\newcommand{\Bound}{b}
\newcommand{\Ratio}{\alpha}
\newcommand{\UniqCRatio}{\phi}
\newcommand{\Group}{\mathcal{G}}
\newcommand{\FreqCol}{\hat{\Col}}
\newcommand{\FreqSol}{\hat{\Sol}}
\newcommand{\FreqColSize}{\hat{\ColSize}}
\newcommand{\FreqColInd}{i}
\newcommand{\FreqColIndj}{j}
\newcommand{\SetOrder}{t}
\newcommand{\FreqError}{\Error_{\MaxFreqCol}}
\newcommand{\SizeCol}{\hat{\Col}}
\newcommand{\SizeSol}{\hat{\Sol}}
\newcommand{\FreqdUniv}{\hat{\Univ}}
\newcommand{\SizeError}{\Error_\MaxSizeCol}
\newcommand{\SecSizeError}{\hat{\Error}_\MaxSizeCol}
\newcommand{\TopSets}{\mathcal{A}}
\newcommand{\UniqRatio}{\phi}
\newcommand{\RandCol}{\mathcal{Z}}
\newcommand{\ProbSelSet}{p}
\newcommand{\OptGuess}{v}
\newcommand{\Const}{c}
\newcommand{\ConstA}{\Const_{1}}
\newcommand{\ConstB}{\Const_{2}}
\newcommand{\HashFunc}{h}
\newcommand{\ProbHash}{p}
\newcommand{\NumSetsStored}{s}
\newcommand{\Event}{\mathcal{E}}
\newcommand{\UG}{\textsc{UniqueGreedy}}
\newcommand{\UGF}{\textsc{UniqueGreedyFreq}}
\newcommand{\UGS}{\textsc{UniqueGreedySize}}
\newcommand{\UTS}{\textsc{UniqueTopSets}}
\newcommand{\RatioUGF}{\beta}
\newcommand{\Alg}{\textsc{Alg}}
\newcommand{\ParamFunc}{g}
\newcommand{\ParamList}{\boldsymbol{\gamma}}
\newcommand{\Inst}{\mathcal{I}}
\newcommand{\InstKernel}{\Inst'}
\newcommand{\MUCprob}{\textbf{Max Unique Coverage}}
\newcommand{\MCprob}{\textbf{Max Coverage}}
\newcommand{\UCprob}{Unique Coverage}
\newcommand{\Disj}{\textbf{Disj}}
\begin{document}

\begin{titlepage}
\def\thepage{}
\thispagestyle{empty}
\maketitle

\begin{abstract}
    We consider the Max Unique Coverage problem, including applications to the data stream model. The input is a universe of~$\UnivSize$ elements, a collection of~$\NumSets$ subsets of this universe, and a cardinality constraint,~$\Card$. The goal is to select a subcollection of at most~$\Card$ sets that maximizes unique coverage, i.e, the number of elements contained in exactly one of the selected sets. 
    The Max Unique Coverage problem has applications in wireless networks, radio broadcast, and envy-free pricing.

    Our first main result is a fixed-parameter tractable approximation scheme (FPT-AS) for Max Unique Coverage, parameterized by $\Card$ and the maximum element frequency,~$\MaxFreq$, which can be implemented on a data stream. Our FPT-AS finds a $(1-\Error)$-approximation while maintaining a kernel of size $\tilde{O}(\Card \MaxFreq/\Error)$, which can be combined with subsampling to use $\tilde{O}(\Card^2 \MaxFreq / \Error^3)$ space overall. This significantly improves on the previous-best FPT-AS with the same approximation, but a kernel of size $\tilde{O}(\Card^2 \MaxFreq / \Error^2)$. In order to achieve our result, we show upper bounds on the ratio of a collection's coverage to the unique coverage of a maximizing subcollection; this is by constructing explicit algorithms that find a subcollection with unique coverage at least a logarithmic ratio of the collection's coverage. We complement our algorithms with our second main result, showing that $\Omega(\NumSets / \Card^2)$ space is necessary to achieve a $(1.5 + o(1))/(\ln \Card - 1)$-approximation in the data stream. This dramatically improves the previous-best lower bound showing that $\Omega(\NumSets / \Card^2)$ is necessary to achieve better than a $e^{-1+1/\Card}$-approximation.
\end{abstract}
\end{titlepage}
\clearpage

\section{Introduction}
We study the \MUCprob{} problem, where we are given a universe of $\UnivSize$ elements, a collection of $\NumSets$ subsets of the universe, and an integer $\Card \in \{1, \dots, \NumSets\}$. The goal is to select a collection of at most~$\Card$ subsets that maximizes the number of elements covered by \emph{exactly} one set in the collection. This problem is a natural variant of the classic \MCprob{} problem, where the goal is to select a collection of $\Card$ subsets that maximizes the number of elements covered by \emph{at least} one set in the collection.

A weighted version of \MUCprob{} was first formally studied by Demaine et al.~\cite{Demaine2008}. In their motivating scenario, a number of wireless base stations, each with an associated cost, must be placed to maximize the number of mobile clients served. However, due to interference, if covered by more than one base station, a client receives bad service. Demaine et al.\ point out further applications to radio broadcast and envy-free pricing. They then showed an offline polynomial-time $\Omega(1 / \log \NumSets)$-approximation algorithm for their problem, which easily translates to a $\Omega( 1 / \log \Card)$-approximation for our problem.\footnote{This is by assuming that all sets have unit cost and that the budget is $\Card$.} Under various complexity assumptions, they showed (semi-)logarithmic inapproximability for polynomial-time algorithms; Guruswami and Lee \cite{Guruswami2017} later proved nearly logarithmic inapproximability, assuming NP does not admit quasipolynomial-time algorithms.

\subparagraph*{Streaming.}
Our work emphasizes solving \MUCprob{} approximately in the data stream model. All previous works, except McGregor et al.~\cite{mcgregor2021maximum}, only consider this problem in the offline model. In the data stream model, we focus on \emph{set-streaming}: each set in the stream is fully specified before the next; this setting is assumed in related works \cite{saha2009maximum,ausiello2012online,yu2013set,mcgregor2019better,mcgregor2021maximum}. We also constrain the space, measured in bits, to be $o(\NumSets \UnivSize)$, i.e., sublinear in both the number of sets,~$\NumSets$, and the size of the universe,~$\UnivSize$. Thus, we define the \MUCprob{} problem to include the cardinality constraint,~$\Card$. Previous works often formulate this problem without a cardinality constraint, simply referring to it as the `\UCprob{}' problem; this is equivalent to our formulation when $\Card = \NumSets$.

We are particularly interested in \MUCprob{} when parameterized by the \emph{maximum frequency},~$\MaxFreq$, defined as the maximum number of sets that an element belongs to; we also consider the \emph{maximum set size} $\MaxSize$ to a lesser extent. Parameter~$\MaxFreq$ has received considerable attention in studying fixed-parameter tractable (FPT) algorithms for classic coverage problems \cite{skowron2015fully,Bonnet2016,SKOWRON201765,Manurangsi2018,mcgregor2021maximum,Sellier2023}, but not as much for the \MUCprob{} problem \cite{mcgregor2021maximum}.

A central idea in achieving both FPT space and running time bounds is \emph{kernelization}. We transform a problem instance,~$\Inst$, into a smaller problem instance,~$\InstKernel$, called the \emph{(approximate) kernel}, such that $|\InstKernel| \leq \ParamFunc(\ParamList)$, where $\ParamFunc$ is a computable function in terms of problem parameters~$\ParamList$, while~$\InstKernel$ (approximately) preserves the optimal solution value of $\Inst$; a good solution can be found by brute-force search within $\InstKernel$. Consistent with parameterized streaming~\cite{Chitnis2015,Chitnis2016,chitnis2019towards,mcgregor2021maximum}, we further require an FPT streaming algorithm to use $O(\ParamFunc(\ParamList) \polylog|\Inst|)$ space.

\subsection{Our Contributions} \label{sec:contr}
Our first main result is a fixed-parameter tractable approximation scheme (FPT-AS) for \MUCprob{} with strong space, running time, and approximation bounds, that is applicable to the data stream model. A crucial step in achieving these performance bounds is showing improved upper bounds on what we call the \emph{unique coverage ratio} of a collection~$\Col$. This is the ratio between the coverage of~$\Col$ and the maximum unique coverage over all subcollections $\SubCol \subseteq \Col$; we let $\UniqRatio$ denote an \emph{upper bound} on the unique coverage ratio. We first outline the performance bounds of our FPT-AS in terms of $\UniqRatio$.

\subparagraph*{Main Result 1: FPT Approximation Scheme.}
We propose the FPT-AS \UTS{}, parameterized by the cardinality constraint $\Card$ and the maximum frequency $\MaxFreq$, which can be easily implemented in the data stream model. It achieves a $(1-\Error)$-approximation using a kernel of size $\left\lceil {\Card \MaxFreq (\UniqRatio + 1)}/{\Error} \right\rceil$. We formally present this algorithm in \cref{thm:UTS}.

\UTS{} is a refined version of the FPT-AS in Theorem 12 of McGregor et al.~\cite{mcgregor2021maximum}, in that our algorithm achieves a $(1-\Error)$ rather than a $(1/2-\Error)$-approximation using only an extra logarithmic factor of $(\UniqRatio+1)$ in the kernel size. Further, our algorithm improves on the FPT-AS in Theorem 10 of McGregor et al.~\cite{mcgregor2021maximum} by saving a factor of $O(\Card \log \NumSets / \Error)$ in the kernel size, and therefore a factor of $[O( \Card \log \NumSets / \Error )]^{\Card}$ in the running time, while achieving the same approximation factor. See \cref{table:FPT algorithms} for a comparison of our FPT-AS with others.
\begin{table}[htbp]
    \caption{Comparison of several FPT-AS for \MUCprob{}, parameterized by cardinality constraint,~$\Card$, and maximum frequency,~$\MaxFreq$. Note that the running time of each algorithm below is implied by its kernel size. Each finds a solution of size at most $\Card$ 
    by brute-force search in the kernel. Below, we can assign $\UniqRatio = \min( \ln \Card + 1, 2 \ln \MaxFreq + o(\log \MaxFreq), 2 \ln \MaxSize + o(\log \MaxSize))$.}
    \label{table:FPT algorithms}
    \begin{center}
    \begin{tabular}{l c c}
        \toprule
        \textbf{Reference} & \textbf{Approx.} & \textbf{Kernel Size}\\
        \midrule
         \cite[Theorem 10]{mcgregor2021maximum} & $1-\Error$ & $O\left({\Card^2 \MaxFreq \log^2 \NumSets}/{\Error^{2}} \right)$ \\
        \midrule
         \cite[Theorem 12]{mcgregor2021maximum} & ${1}/{2}-\Error$ & $\left\lceil {\Card \MaxFreq}/{\Error} \right\rceil$ \\
        \midrule
        {\cref{thm:UTS}, [ours,\UTS{}] } & $1-\Error$ & $\left\lceil {\Card \MaxFreq (\UniqRatio + 1)}/{\Error} \right\rceil$ \\
        \bottomrule
    \end{tabular}
    \end{center}
\end{table}

\subparagraph*{Unique Coverage Algorithms.}
In order to show good values for $\UniqRatio$, we propose a number of offline polynomial-time algorithms that, given an arbitrary $\Col$, explicitly return a $\Sol \subseteq \Col$ whose unique coverage is at least a logarithmic ratio of $\Col$'s coverage. We refer to them as \emph{unique coverage algorithms}; in fact, they can be thought as approximation algorithms for the unconstrained \UCprob{} problem on an input instance of $\Col$.

Our three offline polynomial-time algorithms, \UG{}, \UGF{}, and \UGS{}, each take a collection of sets,~$\Col$, and return a collection,~$\Sol \subseteq \Col$, whose unique coverage is at least a $1/(\ln \ColSize + 1)$, $1/(2 \ln \MaxFreq + o(\log \MaxFreq))$, and $1/(2 \ln \MaxSize + o(\log \MaxSize))$ proportion of $\Col$'s coverage respectively; in this context, $\ColSize = |\Col|$, $\MaxFreq$ is the maximum frequency in $\Col$, and $\MaxSize$ is the maximum set size in $\Col$. We formally present these algorithms in
\cref{thm:UG ratio}, \cref{thm:UGF ratio}, and in \cref{thm:UGS ratio},
respectively. See \cref{table:poly-time algorithms} for a comparison of our algorithms with those of Demaine et al. \cite{Demaine2008} along with their implied bounds,~$\UniqRatio$, albeit weaker than ours.

\subparagraph{Implication for FPT Approximation Scheme.}
The bound on the unique coverage ratio,~$\UniqRatio$, affects the kernel size and therefore the brute-force running time of \UTS{}. In particular, when $\MaxFreq = \Omega(\sqrt{\Card})$, the bound of $\UniqRatio$ implied by \UG{} is $10.66$ times smaller than implied by Demaine et al.~\cite{Demaine2008}; whereas when $\MaxFreq = o(\sqrt{\Card})$, the bound of $\UniqRatio$ implied by \UGF{} is almost $5.33$ smaller than implied by Demaine et al. This means, by using our implied bounds rather than those implied by Demaine et al., we save a factor of $10.66^{\Card}$ in \UTS{}'s running-time when $\MaxFreq = \Omega(\sqrt{\Card})$, and a factor of almost $5.33^{\Card}$ when $\MaxFreq = o(\sqrt{\Card})$.

\subparagraph*{Improvements in Polynomial-Time Approximation.}
As a separate contribution, each of our three unique coverage algorithms finds a logarithmic approximation to \MUCprob{}, both offline and in the data stream. We first find a solution $\Col$ to \MCprob{} in polynomial time, and then run one of our above algorithms on $\Col$ to return the subcollection $\Sol \subseteq \Col$. For this purpose, our algorithms \UG{}, \UGF{}, and \UGS{} improve the approximation factor due to Demaine et al.~\cite{Demaine2008} by a factor of~$10.66$,~$5.33$, and~$10.66$, respectively. Following the above approach, we propose a single-pass streaming algorithm for \MUCprob{}  that achieves a $(1/(2\UniqRatio) - \Error)$-approximation using $\tilde{O}({\Card^2}/{\Error^3})$ space, where we can assign $\UniqRatio = \min( \ln \Card + 1, 2 \ln \MaxFreq + o(\log \MaxFreq), 2 \ln \MaxSize + o(\log \MaxSize))$. We formally state this in \cref{thm:poly time data stream}.
\begin{table}[h]
    \caption{Polynomial-time algorithms for \MUCprob{}. Compared to others, our methods imply constant-factor improvements in the unique coverage ratio bound, $\UniqRatio$.}
    \label{table:poly-time algorithms}
    \begin{center}
    \begin{tabular}{l l c}
        \toprule
        \textbf{Parameter} & \textbf{Reference} & \textbf{(Implied) $\UniqRatio$} \\
        \midrule
        \multirow{2}{*}[-0.7em]{$\ColSize = \text{collection size}$} &
        \cite[Theorem 4.1]{Demaine2008} & $10.66 \ln (\ColSize+1)$ \\
        \cmidrule{2-3}
        & \makecell[l]{Ours, \cref{thm:UG ratio} \\(\UG{})} & $\ln \ColSize + 1$ \\
        \midrule
        \multirow{2}{*}[-0.7em]{\makecell[l]{$\MaxFreqCol = \text{maximum frequency}$ \\in a collection }} & \cite[Theorem 4.1]{Demaine2008} & $10.66 \ln (\MaxFreqCol+1)$ \\
        \cmidrule{2-3}
        & \makecell[l]{Ours, \cref{thm:UGF ratio} \\(\UGF{})} & $2\ln \MaxFreqCol + o(\log \MaxFreqCol)$ \\
        \midrule
        \multirow{2}{*}[-0.7em]{\makecell[l]{$\MaxSizeCol = \text{maximum set size}$ \\in a collection}} & \cite[Theorem 4.2]{Demaine2008} & $21.32 \ln (\MaxSizeCol+1)$ \\
        \cmidrule{2-3}
        & \makecell[l]{Ours, \cref{thm:UGS ratio} \\(\UGS{})} & $ 2 \ln \MaxSizeCol + o(\log \MaxSizeCol) $ \\
        \bottomrule
    \end{tabular}
    \end{center}
\end{table}

\subparagraph*{Main Result 2: Streaming Lower Bound.}
Our second main result is a significantly  improved streaming lower bound for \MUCprob{}.
In the data stream model, we prove that any randomized algorithm that achieves a $(1.5 + o(1))/(\ln \Card - 1)$-approximation for \MUCprob{} w.h.p.\ requires $\Omega({\NumSets}/{\Card^2})$ space. We formally state this in \cref{thm:lower bound adv}. Our lower bound improves on the lower bound by McGregor et al. \cite{mcgregor2021maximum}, which shows a similar result, but achieves w.h.p.~a $e^{-1+1/\Card} \geq 1/e$-approximation.  Interestingly, our approximation threshold is close to~$3$ times larger than the approximation (in terms of $\Card$) achieved by our $\tilde{O}(\Card^2/\Error^3)$ space algorithm in \cref{thm:poly time data stream}, indicating that a dramatic increase in space is needed to bridge this approximation gap.

\subsection{Technical Overview} \label{sec:technical overview}

\subparagraph*{FPT Approximation Scheme.}
\UTS{} refines the technique used in the FPT-AS for \MUCprob{} in Theorem 12 of McGregor et al. \cite{mcgregor2021maximum}, which is to construct an approximate kernel by storing a number of the largest sets by individual size, and then to find a subcollection of the kernel with maximum unique coverage by brute-force search. Similar techniques have been used in FPT-AS approaches for Max Vertex Cover~\cite{Manurangsi2018,Huang2022} and \MCprob{}~\cite{skowron2015fully,SKOWRON201765,mcgregor2021maximum,Sellier2023}. Our novelty is providing a stronger analysis of the approximation factor preserved by the kernel, allowing us to achieve a $(1-\Error)$-approximation while only increasing the kernel size by a logarithmic factor in $\Card$, $\MaxFreq$, or $\MaxSize$.

\subparagraph*{Unique Coverage Algorithms.}
All of our unique coverage algorithms are combinatorial in design. Our first two, \UG{} and \UGF{}, are novel algorithms that each, in some sense, use a greedy approach, noting that \UG{} is used as subroutine of \UGF{}. Our third algorithm, \UGS{}, is easily derived by combining \UGF{} with the approach by Demaine et al. \cite{Demaine2008} for sets with maximum cost $\MaxSize$ (maximum size in our case).

\subparagraph*{Streaming Lower Bound.}
Our streaming lower bound relies on a novel reduction from $\Card$-player Set Disjointness in the one-way communication model to \MUCprob{} in the data stream. 
In the hard instance of \MUCprob{} thus constructed, either all collections of $\NumSel \leq \Card$ sets have a unique coverage of $\SizeParam \Card^2(1.5 + o(1))$ w.h.p.\ or there exists a single collection of $\Card$ sets whose unique coverage is at least $\SizeParam \Card^2(\ln \Card-1)$, where $\SizeParam = \Omega(\Card \log \NumSets)$. By a standard argument, we show that distinguishing between these instances of \MUCprob{} with a streaming algorithm is as hard as solving Set Disjointness, implying the required space lower bound.
\begin{table}[htbp]
    \caption{Comparison of space lower bounds for \MUCprob{} in the data stream. Note that the lower bound by Assadi \cite{assadi2017tight} was shown for \MCprob{} with constant $\Card = 2$, but it is not difficult to adapt it for \MUCprob{} because, in the hard instance constructed for the lower bound, the unique coverage of any pair of sets behaves similarly to its coverage.}
    \label{table:lower bounds}
    \begin{center}
    \begin{tabular}{l c c}
        \toprule
        \textbf{Reference} & \textbf{Approx.} & \textbf{Space LB} \\
        \midrule
        \cite[Theorem 4]{assadi2017tight} & $1-\Error$ & $\Omega\left({\NumSets}/{\Error^2}\right)$ \\
        \cmidrule{1-3}
        \cite[Theorem 16]{mcgregor2021maximum} & $1/e$ & $\Omega\left( {\NumSets}/{\Card^2} \right)$ \\
        \cmidrule{1-3}
        Ours, \cref{thm:lower bound adv} & $( 1.5 + o(1) )/(\ln\Card - 1)$ & $\Omega\left( {\NumSets}/{\Card^2} \right)$ \\
        \bottomrule
    \end{tabular}
    \end{center}
\end{table}

\subsection{Paper Structure}
After preliminaries in \cref{sec:prelim},  \cref{sec:FPT} presents our FPT-AS \UTS{} and a polynomial-time algorithm, both applicable to the data stream. In \cref{sec:greedy algorithms}, we present our component algorithms for bounding the unique coverage ratio. In \cref{sec:lower bound}, we present a space lower bound for achieving a $(1.5 + o(1))/(\ln \Card - 1)$-approximation for \MUCprob{}. We conclude in \cref{sec:conclusions}.

\section{Preliminaries} \label{sec:prelim}
\subparagraph{Notation.}
For convenience, we hence let~$[\UnivSize]$ denote the set of integers~$\{1,2,\ldots,\UnivSize\}$. Likewise, $\Univ = [\UnivSize]$ denotes a universe of $\UnivSize$ elements, while~$\Stream$ denotes a collection of~$\NumSets$ subsets of~$\Univ$.

Given a collection~$\Col$ of sets, the \emph{unique cover} of $\Col$ is the subset the universe covered by exactly one set in $\Col$.
Formally, $\UniqCover(\Col) \coloneqq (\bigcup_{\SetS \in \Col} \SetS) \setminus (\bigcup_{\SetS \neq \SetT \in \Col} \SetS \cap \SetT)$, and the \emph{unique coverage} of $\Col$ is~$|\UniqCover(\Col)|$.
For convenience, the \emph{cover} of~$\Col$ is the union of the sets in~$\Cover(\Col) \coloneqq \bigcup_{\SetS \in \Col} \SetS$, and the \emph{coverage} of $\Col$ is $|\Cover(\Col)|$. Further, the \emph{non-unique cover} of~$\Col$ is the subset of the universe covered by at least two sets from $\Col$, denoted by $\NUniqCover(\Col) = \bigcup_{\SetS \neq \SetT \in \Col} \SetS \cap \SetT$ -- equivalently $\NUniqCover(\Col) = \Cover(\Col) \setminus \UniqCover(\Col)$ -- and the \emph{non-unique coverage} of~$\Col$ is $|\NUniqCover(\Col)|$. The \emph{maximum unique coverage} of~$\Col$ is the largest unique coverage of a sub-collection consisting of at most $k$ sets.\footnote{Unlike Max~Coverage, the optimal solution to Max Unique Coverage may contain fewer than $k$ sets.} The \emph{unique coverage ratio} of~$\Col$ is the ratio between its coverage and maximum unique coverage. In other words, if $\SubCol$ is the subcollection of $\Col$ that has maximum unique coverage, then the unique coverage ratio of $\Col$ is $|\Cover(\Col)|/|\UniqCover(\SubCol)|$.

Given an element $\Elem \in \Univ$ and a collection~$\Col$ of sets, the \emph{frequency} of $\Elem$ in $\Col$ is defined as $\Freq_{\Col}(\Elem) \coloneqq |\{ \SetS \in \Col : \Elem \in \SetS \}|$, i.e., the number of sets in $\Col$ that contain $\Elem$; and the \emph{maximum frequency} is defined as $\MaxFreq \coloneqq \max_{\Elem \in \Univ} \Freq_{\Col}(\Elem)$. Also, the \emph{maximum set size} is defined as $\MaxSize \coloneqq \max_{\SetS \in \Col} |\SetS|$. We often use $\MaxFreq$ and $\MaxSize$ to refer to the maximum frequency and set size, respectively, in $\Col = \Stream$ unless stated otherwise. Note that $\MaxFreq \leq |\Col|$ holds for every $\Col$.
We let $\Har_\HarNum \coloneqq \sum_{\HarInd=1}^{\HarNum}1/\HarInd$ denote the $\HarNum^{\text{th}}$ harmonic number, a
term that appears several times.

\subparagraph{Formal Problem Definition.}
An instance of \MUCprob{} consists of an element universe $\Univ$, a collection $\Stream$ of $m$ subsets of $\Univ$, and an integer $\Card \in [\NumSets]$; when the context is clear, we represent an instance with just $\Stream$ for simplicity. 
The goal of \MUCprob{} is to return a subcollection $\Sol \subseteq \Stream$ (more precisely, a collection of IDs of sets), with $|\Sol| \leq \Card$, that maximizes $|\UniqCover(\Sol)|$. We let~$\Opt$ denote an optimal solution to this \MUCprob{} problem, and $\OptValue \coloneqq |\UniqCover(\Opt)|$ as the maximum unique coverage.

\subparagraph{Subsampling for the Data Stream Model.}
The universe subsampling technique has been widely successful in the development of streaming algorithms for coverage problems \cite{DemaineIMV14,Har-PeledIMV16,BateniEM17,mcgregor2019better}.
In this work, we follow the approach of McGregor and Vu~\cite{mcgregor2019better}, and sample the universe so that each set has size~$O(\Card \log \NumSets / \Error^2)$.
We assume that $\Card \in o(\NumSets \UnivSize)$, and also that $\Card$ is known prior to reading the stream.
The main result is given in the following lemma, with
a proof sketch of the subsampling approach in \cref{sec:subsampling approach}.

\begin{lemma}[Subsampling Approach \cite{mcgregor2021maximum}] \label{lemma:subsampling approach}
    Let $\Error \in (0,1)$ be the subsampling error parameter. Given an instance of \MUCprob{} and an $\Ratio$-approximation streaming algorithm, we can run the algorithm on $\lceil \log_2 \UnivSize \rceil$ parallel subsampled instances and select one of them such that the algorithm's solution corresponds to a $(\Ratio-2\Error)$-approximation for the original instance with probability $1-1/\poly(\NumSets)$. Moreover, if the streaming algorithm stores at most $\NumSetsStored$ sets in every subsampled instance, then the total space complexity of the subsampling approach is bounded by $\lceil \log_2 \UnivSize \rceil \cdot \NumSetsStored \cdot
    O\left(\Card \log \NumSets \log \UnivSize/{\Error^2}\right)$.
\end{lemma}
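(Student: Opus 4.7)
The plan is to guess $\OptValue$ geometrically: for each $i \in \{0, 1, \ldots, \lceil \log_2 \UnivSize \rceil\}$, let $\OptGuess_i \coloneqq 2^i$, and in parallel run the $\Ratio$-approximation streaming algorithm on a subsampled instance $\StreamSample_i$ obtained by sampling $\UnivSample_i \subseteq \Univ$ element-wise independently with probability $\ProbHash_i \coloneqq \min\!\bigl(1,\, \Const \Card \log \NumSets / (\OptGuess_i \Error^2)\bigr)$ for a sufficiently large constant $\Const$, and replacing each arriving $\SetS \in \Stream$ with $\SetS \cap \UnivSample_i$. Because $\OptValue \in [1, \UnivSize]$, some $i^*$ satisfies $\OptValue \leq \OptGuess_{i^*} < 2 \OptValue$, and this is the run whose output will realize the target approximation.

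The key analytical step is a uniform concentration argument. For any fixed subcollection $\Sol$ of at most $\Card$ sets, $|\UniqCover(\Sol) \cap \UnivSample_{i^*}|$ is a sum of $|\UniqCover(\Sol)|$ independent $\mathrm{Bernoulli}(\ProbHash_{i^*})$ variables. A multiplicative Chernoff bound combined with a union bound over the at most $\NumSets^{\Card}$ subcollections of size $\leq \Card$ (and a separate additive tail for subcollections with unique coverage below $\Error\,\OptGuess_{i^*}$) yields, with probability $1 - 1/\poly(\NumSets)$,
\[
\bigl|\,|\UniqCover(\Sol) \cap \UnivSample_{i^*}| \,-\, \ProbHash_{i^*}\, |\UniqCover(\Sol)|\,\bigr| \;\leq\; \Error\,\ProbHash_{i^*}\,\OptValue
\]
simultaneously for every such $\Sol$. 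This is affordable because $\ProbHash_{i^*}\,\OptValue = \Theta(\Card \log \NumSets / \Error^2)$, so taking $\Const$ large enough lets the Chernoff tail dominate the $\NumSets^{\Card}$ union bound. Applying the inequality to both $\Opt$ and the algorithm's output $\Sol_{i^*}$ on the $i^*$-th instance, and dividing by $\ProbHash_{i^*}$, converts the $\Ratio$-approximation in the subsampled instance into a $(\Ratio - 2\Error)$-approximation in the original.

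For the space bound I use that every singleton $\{\SetS\}$ is a feasible solution with unique coverage $|\SetS|$, so $|\SetS| \leq \OptValue \leq \OptGuess_{i^*}$ for every $\SetS \in \Stream$. Hence $\mathbb{E}\,|\SetS \cap \UnivSample_{i^*}| \leq \ProbHash_{i^*}\,\OptGuess_{i^*} = O(\Card \log \NumSets / \Error^2)$, and a Chernoff-plus-union bound over the $\NumSets$ sets gives $|\SetS \cap \UnivSample_{i^*}| = O(\Card \log \NumSets / \Error^2)$ w.h.p.; encoding each surviving element in $O(\log \UnivSize)$ bits yields $O(\Card \log \NumSets \log \UnivSize / \Error^2)$ bits per stored set. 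Summing over the $\NumSetsStored$ stored sets and the $\lceil \log_2 \UnivSize \rceil$ parallel runs gives the claimed total space; for guesses $i \neq i^*$, any oversized subsampled set can be truncated, since only the $i^*$-th run drives correctness. Finally, since $\OptValue$ is unknown, we output the run maximizing the normalized estimate $|\UniqCover(\Sol_i) \cap \UnivSample_i| / \ProbHash_i$; concentration on every run ensures the selected run inherits the $(\Ratio - 2\Error)$ guarantee.

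The main obstacle is the uniform concentration step, which must hold over all $\NumSets^{\Card}$ candidate subcollections simultaneously; this is what forces the $\Card \log \NumSets / \Error^2$ factor in $\ProbHash_{i^*}\,\OptValue$, and subcollections whose unique coverage is small compared to $\OptGuess_{i^*}$ must be treated by an additive rather than multiplicative Chernoff tail.
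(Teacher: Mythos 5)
Your core approach is the same as the paper's: geometrically guess $\OptValue$ via $\OptGuess_i = 2^i$, run the streaming algorithm on each subsampled instance in parallel, and rely on concentration (essentially a re-derivation of \cite[Lemma 23]{mcgregor2021maximum}, which the paper cites as a black box) to show that the run with the right guess preserves the approximation up to $2\Error$. However, two details diverge from the paper in ways that leave genuine gaps.

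\textbf{Selection rule and truncation.} The paper's mechanism is: \emph{terminate} any run in which a subsampled set exceeds the size budget $O(\Card \log \NumSets / \Error^2)$, and output the solution from the surviving run with the smallest guess. This is clean because the cited concentration lemma applies uniformly to every $\OptGuess \leq \OptValue$, and w.h.p.\ the critical run is not terminated, so the smallest surviving guess is $\leq \OptValue$ and its output inherits the guarantee. Your proposal instead \emph{truncates} oversized sets in runs $i \neq i^*$ and then selects the run that maximizes the normalized estimate $|\UniqCover(\Sol_i) \cap \UnivSample_i| / \ProbHash_i$, appealing to ``concentration on every run.'' But truncation breaks that concentration: removing an element $\Elem$ from a set $A$ while $\Elem$ remains in another chosen set $B$ \emph{promotes} $\Elem$ from non-uniquely covered to uniquely covered, so truncation can inflate the observed unique coverage of a subcollection in a truncated run relative to the true subsampled unique coverage. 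You cannot simultaneously say that ``only the $i^*$-th run drives correctness'' (because the others are truncated) and that ``concentration on every run ensures the selected run inherits the guarantee.'' Either you must discard (not truncate) oversized runs, as the paper does, or you must argue separately that the normalized estimate of every truncated run is an underestimate, which, as the example above shows, is false.

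\textbf{Independence and space for the sampling randomness.} You sample each element independently with probability $\ProbHash_i$. In the streaming model this requires either $\Omega(\UnivSize)$ random bits or storing the sampled universe $\UnivSample_i$ explicitly, both of which blow past the claimed space bound. The paper uses a hash function $\HashFunc\colon\Univ \to \{0,1\}$ that is only $\Omega(\Card \log \NumSets / \Error^2)$-wise independent, which suffices for the concentration bounds while being storable in $O(\Card \log \NumSets \log \UnivSize / \Error^2)$ bits. Your Chernoff-plus-union-bound step should also be re-examined under limited independence, but this is standard once you switch to the bounded-independence hash.
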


\section{Streaming FPT-AS and Polynomial-Time Algorithms}
\label{sec:FPT}

In Section~\ref{sec:UTS Lemma}, we prove a kernelization lemma. Then, we use it to obtain an FPT-AS and a parameterized streaming algorithm in Section~\ref{sec:UTS}. Finally, we show how to use a bound on the unique coverage ratio to obtain a polynomial-time streaming algorithm in Section~\ref{sec:app of poly time to streams}.

\subsection{Kernelization Lemma} 
\label{sec:UTS Lemma}
Our Kernelization Lemma below, as well as its proof, is a refinement of Lemma~11 by McGregor et al.~\cite{mcgregor2021maximum}. We first provide some intuition on why our kernel preserves a $(1-\Error)$-approximation for \MUCprob{}.

\subparagraph*{Intuition of Kernelization Lemma.}
For convenience, let $\SecError$ be an intermediate error parameter and define the kernel $\TopSets$ as the collection of $\lceil \Card \MaxFreq/\SecError \rceil$ largest sets in instance $\Stream$ by individual size. Given the optimal solution for \MUCprob{}, $\Opt$, let~$\OptIn$ and~$\OptOut$ be the collections of optimal sets found and not found in $\TopSets$ respectively.

One main step in proving our Kernelization Lemma is showing that, in expectation, a collection of $|\OptOut|$ sets sampled without replacement from $\TopSets$, denoted by  $\RandCol$, can be appended to $\OptIn$ with little overlap in their unique covers. In particular, we can prove that $\E[|\UniqCover(\OptIn \cup \RandCol)|] \geq (1-\SecError)|\UniqCover(\Opt)| - \SecError|\Cover(\Opt)|$. 

However, due to the~$\SecError|\Cover(\Opt)|$ term, this is not enough to achieve the required approximation factor. This term reflects the fact that, even if the unique cover of~$\OptIn$ has little overlap with the unique cover of~$\RandCol$, the \emph{entire} cover of $\OptIn$ could be more extensive and, thus, overlap significantly with the unique cover of $\RandCol$. To address this, in \cref{claim:opt unique cover ratio}, we show $\UniqRatio |\UniqCover(\Opt)| \geq |\Cover(\Opt)|$, where~$\UniqRatio$ upper bounds the unique coverage ratio. Substituting this into the lower bound for $\E[|\UniqCover(\OptIn \cup \RandCol)|]$, and assigning $\SecError = \Error/(\UniqRatio + 1)$, we obtain $\E[|\UniqCover(\OptIn \cup \RandCol)|] \geq (1-\Error)|\UniqCover(\Opt)|$, implying the existence of a $(1-\Error)$-approximate subcollection of $\TopSets$. Lastly, the final kernel size of $|\TopSets| = \lceil \Card \MaxFreq (\UniqRatio + 1)/\Error \rceil$ follows from the assignment of $\SecError$.

\begin{lemma}[Kernelization Lemma]
    \label{lemma:kernel}
    Suppose that every collection of sets has unique coverage ratio at most $\UniqCRatio$.
    Let $\Stream$ denote a collection of sets. Then, for every $\Error \in (0,1)$, the sub-collection,~$\TopSets$, of the $\lceil \Card \MaxFreq (\UniqRatio + 1)/\Error \rceil$-largest sets of $\Stream$ (by size) contains a sub-collection of at most $k$ sets with unique coverage at least $(1-\Error)\OptValue$.
\end{lemma}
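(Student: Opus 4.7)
The plan is to make the informal argument in the ``Intuition of Kernelization Lemma'' paragraph above rigorous. Set the intermediate parameter $\SecError \coloneqq \Error/(\UniqCRatio+1)$, so the kernel $\TopSets$ consists of the $\lceil \Card\MaxFreq/\SecError\rceil$ largest sets of $\Stream$ by individual size. Fix an optimal solution $\Opt$ (so $|\UniqCover(\Opt)| = \OptValue$ and $|\Opt| \leq \Card$) and split it as $\Opt = \OptIn \cup \OptOut$ with $\OptIn \coloneqq \Opt \cap \TopSets$ and $\OptOut \coloneqq \Opt \setminus \TopSets$. I would draw a uniformly random sample $\RandCol$ of $|\OptOut|$ sets, without replacement, from $\TopSets$, and consider the candidate subcollection $\OptIn \cup \RandCol \subseteq \TopSets$, whose cardinality is at most $|\OptIn| + |\OptOut| = |\Opt| \leq \Card$.

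The main probabilistic claim to establish is
\[
\E\bigl[|\UniqCover(\OptIn \cup \RandCol)|\bigr] \;\geq\; (1-\SecError)\,|\UniqCover(\Opt)| \;-\; \SecError\,|\Cover(\Opt)|.
\]
I would derive this by a per-element analysis: for each $x \in \UniqCover(\Opt)$, let $\SetS_x \in \Opt$ denote the unique set of $\Opt$ covering $x$. If $\SetS_x \in \OptIn$, then $x$ stays uniquely covered by $\OptIn \cup \RandCol$ unless some set in $\RandCol$ also contains $x$; since at most $\MaxFreq$ sets of $\TopSets$ contain $x$ and $|\RandCol| \leq \Card$, a union bound combined with $|\TopSets| \geq \Card\MaxFreq/\SecError$ caps this failure probability at $\SecError$. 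If $\SetS_x \in \OptOut$, then $x$ is uniquely covered by $\OptIn \cup \RandCol$ precisely when $\SetS_x \in \RandCol$ and no other set in $\RandCol$ covers $x$; these inclusion probabilities are controlled using the definitional fact that every set in $\OptOut$ has size at most $\min_{\SetS \in \TopSets}|\SetS|$, so a size-averaging estimate converts the $|\OptOut|/|\TopSets|$ inclusion rates into an aggregate contribution matching the right-hand side. The $-\SecError\,|\Cover(\Opt)|$ term absorbs the collision loss between $\RandCol$ and elements of $\Cover(\OptIn) \setminus \UniqCover(\OptIn)$, which can be hit (not just $\UniqCover(\OptIn)$).

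Once the expected-value inequality is in hand, I invoke \cref{claim:opt unique cover ratio}, which (because every subcollection of $\Opt$ of size at most $\Card$ is a feasible Max Unique Coverage solution) gives $|\Cover(\Opt)| \leq \UniqCRatio\,|\UniqCover(\Opt)|$. Substituting,
\[
\E\bigl[|\UniqCover(\OptIn \cup \RandCol)|\bigr] \;\geq\; \bigl(1 - \SecError(\UniqCRatio+1)\bigr)\OptValue \;=\; (1-\Error)\OptValue,
\]
and by the probabilistic method some realization of $\RandCol$ achieves at least this value, producing a subcollection of $\TopSets$ of size at most $\Card$ with unique coverage at least $(1-\Error)\OptValue$, as required. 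The hardest step will be the $\SetS_x \in \OptOut$ case: simultaneously bounding the probability that $\SetS_x$ is sampled and that none of the up to $\MaxFreq-1$ other sets in $\TopSets$ containing $x$ is sampled, under sampling without replacement, while channeling the residual collision losses into the $\SecError|\Cover(\Opt)|$ term rather than a larger expression. The size-averaging observation on $\OptOut$ is what makes the constants line up so that kernel size $\lceil \Card\MaxFreq/\SecError\rceil$ suffices.
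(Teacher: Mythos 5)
Your overall strategy matches the paper's: same decomposition $\Opt = \OptIn \cup \OptOut$, same random sample $\RandCol$ of $|\OptOut|$ sets from $\TopSets$, same target inequality $\E[|\UniqCover(\OptIn \cup \RandCol)|] \geq (1-\SecError)|\UniqCover(\Opt)| - \SecError|\Cover(\Opt)|$, the same invocation of $|\Cover(\Opt)| \leq \UniqCRatio\,|\UniqCover(\Opt)|$, and the same choice $\SecError = \Error/(\UniqCRatio+1)$. Your per-element treatment of the case $\SetS_x \in \OptIn$ is also sound: at most $\MaxFreq-1$ other members of $\TopSets$ contain $x$, each lands in $\RandCol$ with probability $|\OptOut|/|\TopSets| \leq \SecError/\MaxFreq$, and the union bound gives failure probability at most $\SecError$, which is exactly the paper's \cref{claim:exp inter opt in}.

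The genuine gap is in the $\SetS_x \in \OptOut$ case. You write that ``$x$ is uniquely covered by $\OptIn \cup \RandCol$ precisely when $\SetS_x \in \RandCol$ and no other set in $\RandCol$ covers $x$,'' but $\SetS_x \in \OptOut = \Opt \setminus \TopSets$ means $\SetS_x \notin \TopSets$, and $\RandCol$ is drawn from $\TopSets$, so the event $\SetS_x \in \RandCol$ is impossible and this case contributes nothing under your per-element accounting. The fix cannot be per-element at all: the sets of $\RandCol$ do not recover the \emph{same} elements as $\UniqCover(\OptOut)$ — they recover a comparable \emph{amount} of unique coverage on potentially disjoint elements. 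The paper (\cref{claim:exp uniq cov rand col}) handles this at the set level: lower bound $|\UniqCover(\RandCol)|$ by $\sum_{\SetS \in \RandCol}\bigl(|\SetS| - \sum_{\SetT \in \RandCol\setminus\{\SetS\}}|\SetS \cap \SetT|\bigr)$, take expectations using the selection probability $\ProbSelSet$ and the frequency bound, arriving at $\ProbSelSet(1-\SecError)\sum_{\SetS\in\TopSets}|\SetS|$, and then apply the size-averaging step $\sum_{\SetS\in\TopSets}|\SetS| \geq |\TopSets|\cdot\bigl(\sum_{\SetY\in\OptOut}|\SetY|\bigr)/|\OptOut| \geq |\TopSets|\cdot|\UniqCover(\OptOut)|/|\OptOut|$ (valid because every set in $\TopSets$ is at least as large as every set in $\OptOut$). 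Your final sentence shows you are aware a size-averaging step is needed, but the per-element framing you commit to earlier cannot host it; you would need to switch to the set-level bound on $\E[|\UniqCover(\RandCol)|]$ for the $\OptOut$ part, after which the two case analyses recombine exactly via the inclusion–exclusion inequality $|\UniqCover(\OptIn\cup\RandCol)| \geq |\UniqCover(\OptIn)| + |\UniqCover(\RandCol)| - |\UniqCover(\OptIn)\cap\Cover(\RandCol)| - |\Cover(\OptIn)\cap\UniqCover(\RandCol)|$, which is the paper's starting point.
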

    \begin{proof}    
    Assume that $|\Stream| \geq \lceil \Card \MaxFreq (\UniqRatio + 1)/\Error \rceil$: otherwise,~$\TopSets$ would contain every set in~$\Stream$ and so would trivially have~$\Opt$ as a subcollection. Let $\OptIn = \Opt \cap \TopSets$ and $\OptOut = \Opt \setminus \TopSets$. Let~$\RandCol$ be a uniform random sample of~$|\OptOut|$ sets chosen from~$\TopSets$ without replacement. The main goal is to prove \cref{claim:ex uniq cover col}, below. Since $\OptIn$ and $\RandCol$ are subsets of $\TopSets$, this implies the existence of subcollection $\Sol \subseteq \TopSets$ as required by the lemma.
  
    We start with the following lower bound on the expected unique coverage of $\OptIn \cup \RandCol$, as shown in inequality~\eqref{eqn:ex uniq cover opt in rand col} below. Then we lower bound each of the RHS terms separately and simplify afterwards. By definition,
    \begin{align}
        |\UniqCover(\OptIn \cup \RandCol)| &\geq |\UniqCover(\OptIn)| + |\UniqCover(\RandCol)| - (|\UniqCover(\OptIn) \cap \Cover(\RandCol)| + |\Cover(\OptIn) \cap \UniqCover(\RandCol)|)\,, \notag \\ \shortintertext{hence, by linearity of expectation,}
        \E[ |\UniqCover(\OptIn \cup \RandCol)| ] &\geq |\UniqCover(\OptIn)| + \E[ |\UniqCover(\RandCol)| ] - \E[|\UniqCover(\OptIn) \cap \Cover(\RandCol)|] - \E[|\Cover(\OptIn) \cap \UniqCover(\RandCol)|]\,.
        \label{eqn:ex uniq cover opt in rand col}
    \end{align}
    
    Define an intermediate error parameter, $\SecError={\Error}/(\UniqRatio + 1)$, meaning $|\TopSets| = \lceil \Card \MaxFreq / \SecError \rceil$. The probability of a set $\SetS \in \TopSets$ being selected in $\RandCol$ is
$  
        \ProbSelSet \coloneqq |\OptOut|/|\TopSets| \leq \Card/(\Card \MaxFreq / \SecError) = \SecError/\MaxFreq
$.
    Now \cref{claim:exp uniq cov rand col}, below, is easily derived from the proof of Lemma~11 in by McGregor et al.~\cite{mcgregor2021maximum}.
    \begin{claim} \label{claim:exp uniq cov rand col}
        It holds that $\E[ |\UniqCover(\RandCol)| ] \geq (1-\SecError) |\UniqCover(\OptOut)|$\,.
    \end{claim}
    \begin{proof}
    Quantity~$|\UniqCover(\RandCol)|$ can be lower bounded by summing, over every $\SetS \in \RandCol$, the number of elements in~$\SetS$ not contained in any other~$\SetT \in \RandCol \setminus \{\SetS\}$. From there, we prove inequality~\eqref{claim:exp uniq cov rand col}, below. We let~$[ \Event ]$ denote the indicator variable for event $\Event$.
    \begin{align}
        &|\UniqCover(\RandCol)| \geq \sum_{\SetS \in \RandCol} \left(|\SetS| - \sum_{\SetT \in \RandCol \setminus \{\SetS\}} |\SetS \cap \SetT| \right), \text{\ \ hence,}\notag  \\
        &\E[|\UniqCover(\RandCol)|] \notag \\
        &\geq \E \left[ \sum_{\SetS \in \TopSets} \left(|\SetS|[\SetS \in \RandCol] - \sum_{\SetT \in \TopSets \setminus \{\SetS\}} |\SetS \cap \SetT|[\SetS \in \RandCol \wedge \SetT \in \RandCol] \right) \right] \notag \\
        &\geq \sum_{\SetS \in \TopSets} \left( |\SetS| \ProbSelSet - \sum_{\SetT \in \TopSets \setminus \{\SetS\}} |\SetS \cap \SetT|\ProbSelSet^2 \right) &\text{$\Pr[\SetS \in \RandCol \wedge \SetT \in \RandCol] \leq \ProbSelSet^2$} \notag \\
        &\geq \sum_{\SetS \in \TopSets} \left( |\SetS| \ProbSelSet - |\SetS| \ProbSelSet^2 (\MaxFreq-1) \right) &\begin{aligned}
        &&\text{each $\Elem \in \SetS$ intersects}\\
        &&\text{$\leq \MaxFreq-1$ other sets}
        \end{aligned} \notag \\
        &\geq \ProbSelSet(1-\ProbSelSet\MaxFreq) \sum_{\SetS \in \TopSets} |\SetS| \notag \\
        &\geq \ProbSelSet (1-\SecError) \sum_{\SetS \in \TopSets} |\SetS| &\text{$\ProbSelSet \leq \frac{\SecError}{\MaxFreq}$} \notag \\
        &\geq \ProbSelSet (1-\SecError) |\TopSets| \frac{\sum_{\SetY \in \OptOut} |\SetY|}{|\OptOut|} \geq \ProbSelSet (1-\SecError) |\TopSets| \frac{|\UniqCover(\OptOut)|}{|\OptOut|} &\begin{aligned}
        &&\text{for all $\SetS \in \TopSets$ and all}\\
        &&\text{$\SetY \in \OptOut \colon |\SetS| \geq |\SetY|$}
        \end{aligned} \notag \\
        &= \ProbSelSet (1-\SecError) \frac{|\UniqCover(\OptOut)|}{\ProbSelSet} \notag = (1-\SecError)|\UniqCover(\OptOut)|\,. \tag*{\qedhere}
    \end{align}
    \end{proof}
    
    \cref{claim:exp inter opt in}  upper bounds the expected size of the overlap between $\UniqCover(\OptIn)$ and $\Cover(\RandCol)$ and the expected size of the overlap between $\UniqCover(\RandCol)$ and $\Cover(\OptIn)$.
    \begin{claim} \label{claim:exp inter opt in}
         $\E[ |\UniqCover(\OptIn) \cap \Cover(\RandCol)| ] \leq \SecError |\UniqCover(\OptIn)|$ and $\E[ |\Cover(\OptIn) \cap \UniqCover(\RandCol)| ] \leq \SecError |\Cover(\OptIn)|$.
    \end{claim}
    \begin{proof}
    To prove the first inequality,
    \begin{align}
        \E[ |\UniqCover(\OptIn) \cap \Cover(\RandCol)| ] \leq \sum_{\Elem \in \UniqCover(\OptIn)} \sum_{\SetS \in \TopSets \colon \Elem \in \SetS} \Pr[\SetS \in \RandCol] \leq \sum_{\Elem \in \UniqCover(\OptIn)} \MaxFreq \ProbSelSet \leq \SecError |\UniqCover(\OptIn)|\,. \notag
    \end{align}
    To prove the second inequality,
    it is clear that $\UniqCover(\RandCol) \subseteq \Cover(\RandCol)$  for all~$\RandCol$,
    so we have $\E[ |\Cover(\OptIn) \cap \UniqCover(\RandCol)| ] \leq \E[ |\Cover(\OptIn) \cap \Cover(\RandCol)| ]$.
    Then
    substituting~$\Cover(\OptIn)$
    for~$\UniqCover(\OptIn)$
    in the argument for the first inequality, we see
    $\E[ |\Cover(\OptIn) \cap \Cover(\RandCol)| ] \leq \SecError |\Cover(\OptIn)|$.
    \end{proof}
We now turn to a property of the optimal solution for \MUCprob{},~$\Opt$.
    \begin{claim}
        \label{claim:opt unique cover ratio}
        $\UniqCRatio|\UniqCover(\Opt)| \geq |\Cover(\Opt)|$\,.
    \end{claim}
    \begin{proof}
        Recall that we assumed that every collection of sets has unique coverage ratio at most~$\UniqCRatio$. In particular, $\Opt$ has a sub-collection,~$\OptSub$, of at most~$k$ sets with $\UniqCRatio |\UniqCover(\OptSub)| \geq |\Cover(\Opt)|$.
        By optimality, $\Opt$'s unique coverage is at least that of $\OptSub$. Thus, we get the desired inequality.
    \end{proof}
Starting from Ineq.~\eqref{eqn:ex uniq cover opt in rand col},
    we can now lower bound $\E[ |\UniqCover(\OptIn \cup \RandCol)| ]$. 
    \begin{claim}
    We have the lower bound $\E[ |\UniqCover(\OptIn \cup \RandCol)| ] \geq (1-\Error) |\UniqCover(\Opt)|\,.$ \qedhere
    \label{claim:ex uniq cover col}
    \end{claim}
    \begin{proof}
    \begin{align}
        &\E[ |\UniqCover(\OptIn \cup \RandCol)| ] \notag \\
        &\geq |\UniqCover(\OptIn)| + \E[ |\UniqCover(\RandCol)| ] - \E[|\UniqCover(\OptIn) \cap \Cover(\RandCol)|] - \E[|\Cover(\OptIn) \cap \UniqCover(\RandCol)|] \notag &\text{Ineq.~\eqref{eqn:ex uniq cover opt in rand col}}\\
        &\geq |\UniqCover(\OptIn)| + (1-\SecError)|\UniqCover(\OptOut)| - \SecError |\UniqCover(\OptIn)| - \SecError |\Cover(\OptIn)| &\text{Claims~\ref{claim:exp uniq cov rand col} and \ref{claim:exp inter opt in}} \notag \\
        &= (1-\SecError) \left( |\UniqCover(\OptIn)| + |\UniqCover(\OptOut)| \right) - \SecError |\Cover(\OptIn)| \notag \\
        &\geq (1-\SecError)|\UniqCover(\Opt)| - \SecError |\Cover(\OptIn)| &\text{subadditivity of $\UniqCover$} \notag \\
        &\geq (1-\SecError)|\UniqCover(\Opt)| - \SecError |\Cover(\Opt)| &\text{monotonicity of $\Cover$} \notag \\
        &\geq (1-\SecError)|\UniqCover(\Opt)| -\SecError \UniqRatio |\UniqCover(\Opt)| &\text{\cref{claim:opt unique cover ratio}} \notag \\
        &= \left(1-\SecError\left(1 + \UniqRatio \right) \right)|\UniqCover(\Opt)| \notag \\
        &=(1-\Error) |\UniqCover(\Opt)|\,. &\text{$\SecError = \frac{\Error}{\UniqRatio + 1}$} \notag
    \end{align}
    \end{proof}
    \end{proof}

\subsection{Applications of the Kernelization Lemma}
\label{sec:UTS}
We now apply the Kernelization Lemma to prove the following theorem. 
\begin{theorem}
    \label{thm:UTS}
    Suppose that every collection of sets has unique coverage ratio at most $\UniqCRatio$.
    Let $\Stream$ denote a collection of sets, $\Card \geq 2$ denote the cardinality constraint, $\MaxFreq \geq 2$ denote the maximum frequency in $\Stream$, and $\Error \in (0,1)$ denote an error parameter. Then, there exist
    \begin{enumerate}
    \item an FPT-AS that finds a $(1-\Error)$-approximation for \MUCprob{} and has a running time of
    $\left( e \MaxFreq (\UniqRatio + 1)/{\Error} \right)^{\Card} \poly(\NumSets, \UnivSize, 1/\Error)$; and
    
    \item a streaming algorithm that finds a $(1-\Error)$-approximation for \MUCprob{} with probability $1 - 1/\poly(\NumSets)$ and uses $\tilde{O}(\UniqCRatio \Card^2 \MaxFreq / \Error^3)$ space.

    \end{enumerate}
\end{theorem}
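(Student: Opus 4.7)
The plan is to derive both parts as direct applications of the Kernelization Lemma (\cref{lemma:kernel}), which already guarantees that the sub-collection $\TopSets$ of the $s \coloneqq \lceil \Card \MaxFreq (\UniqRatio + 1)/\Error \rceil$ largest sets in $\Stream$ contains some $\Sol \subseteq \TopSets$ with $|\Sol| \leq \Card$ and $|\UniqCover(\Sol)| \geq (1-\Error)\OptValue$. Consequently, it suffices in both settings to construct $\TopSets$ and then locate such a $\Sol$ by brute-force enumeration over its size-at-most-$\Card$ sub-collections.

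For part~1, the algorithm reads $\Stream$, retains $\TopSets$ (the top $s$ sets by size), and enumerates every sub-collection of $\TopSets$ of size at most $\Card$, returning the one maximizing unique coverage. The number of candidate sub-collections is at most $\sum_{j=0}^{\Card}\binom{s}{j} \leq (\Card+1)\binom{s}{\Card} \leq (\Card+1)(es/\Card)^{\Card} = (\Card+1) \bigl( e \MaxFreq (\UniqRatio+1)/\Error \bigr)^{\Card}$, and evaluating $|\UniqCover(\cdot)|$ for each is done in $\poly(\NumSets,\UnivSize)$ time by scanning the universe. Combining these two factors yields the claimed $(e \MaxFreq (\UniqRatio+1)/\Error)^{\Card} \poly(\NumSets,\UnivSize,1/\Error)$ bound, and correctness is immediate from \cref{lemma:kernel}.

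For part~2, I would compose the above with the subsampling reduction of \cref{lemma:subsampling approach}. The kernel $\TopSets$ can be maintained on a stream with a simple priority queue keeping the $s$ largest sets seen so far, but storing each such set explicitly would cost $\Omega(s \UnivSize)$ bits; subsampling reduces each retained set to $O(\Card \log \NumSets / \Error^2)$ elements. Concretely, I run $\lceil \log_2 \UnivSize \rceil$ parallel subsampled instances, and within each instance apply the part-1 kernel-and-brute-force algorithm with a tighter error parameter $\Error/3$, so that it returns a $(1-\Error/3)$-approximation on the subsampled input. \cref{lemma:subsampling approach} then lifts this to a $(1-\Error/3 - 2\Error/3) = (1-\Error)$-approximation for the original instance with probability $1-1/\poly(\NumSets)$. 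Each subsampled instance stores $s = O(\UniqRatio \Card \MaxFreq / \Error)$ sets of size $O(\Card \log \NumSets / \Error^2)$, and the $\lceil \log_2 \UnivSize \rceil$ factor from parallelization gives total space $\tilde{O}(\UniqRatio \Card^2 \MaxFreq / \Error^3)$ as required.

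The main technical obstacle is not any of these steps individually, since both \cref{lemma:kernel} and \cref{lemma:subsampling approach} are already black-box applicable; rather, the care needed is in bookkeeping the error budgets so that the kernelization loss and the subsampling loss compose to exactly $\Error$, and in verifying that the enumeration-plus-evaluation step respects the promised $\poly$ factors (in particular, that unique-coverage evaluation on the subsampled universe can be done within $\poly(\NumSets,\UnivSize,1/\Error)$ time per candidate). Once these constants are aligned, both items of the theorem follow.
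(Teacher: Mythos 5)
Your proposal is correct and follows essentially the same approach as the paper: invoke the Kernelization Lemma for the approximation guarantee, bound the number of size-$\leq\Card$ sub-collections of $\TopSets$ for the running time, and compose with the subsampling reduction for the streaming version. The only (inconsequential) difference is that you explicitly rescale the error parameters to $\Error/3$ to land exactly on a $(1-\Error)$-approximation, whereas the paper uses the same $\Error$ in both the kernel and the subsampling and states the result with an implicit rescaling.
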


Our algorithm, \UTS{}, takes a collection of sets $\Stream$ with maximum frequency $\MaxFreq \geq 2$, a cardinality constraint $\Card$, and an error parameter $\Error \in (0,1)$, and returns a $(1-\Error)$-approximation for \MUCprob{}. It also takes parameter~$\UniqCRatio$, an upper bound on the unique coverage ratio of every subcollection. $\UTS{}$ first finds $\TopSets$, the $\left\lceil \Card \MaxFreq (\UniqRatio + 1)/\Error \right\rceil$-largest sets $\SetS \in \Stream$ by size $|\SetS|$. Then, it brute-forces over $\TopSets$, i.e.~it finds the sub-collection of $\TopSets$ containing at most $k$ sets and has the maximum unique coverage.

\subparagraph{FPT-AS.} Let us first see how \UTS{} has the properties of the FPT-AS claimed in Theorem~\ref{thm:UTS}. The Kernelization Lemma (Lemma~\ref{lemma:kernel}) immediately implies that the solution returned by \UTS{} is a $(1-\Error)$-approximation. The running time bound follows by bounding the number of sub-collections of $\TopSets$ containing at most~$\Card$ sets.
\begin{lemma}
    \UTS{} has running time in $\left( e \MaxFreq (\UniqRatio + 1)/{\Error} \right)^{\Card} \poly(\NumSets, \UnivSize, 1/\Error)$.
\end{lemma}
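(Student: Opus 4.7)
The plan is to upper bound the total running time as the cost of constructing the kernel $\TopSets$ plus the cost of the brute-force search over its sub-collections of size at most $\Card$. Building $\TopSets$ amounts to scanning $\Stream$ and selecting the $\lceil \Card \MaxFreq (\UniqRatio + 1)/\Error \rceil$ largest sets by size, which takes $\poly(\NumSets, \UnivSize)$ time. For each candidate sub-collection $\Sol \subseteq \TopSets$ with $|\Sol| \leq \Card$, computing $|\UniqCover(\Sol)|$ by scanning the at most $\Card \UnivSize$ element occurrences takes $\poly(\Card, \UnivSize)$ time. So the running time is dominated by (number of candidate sub-collections) $\times \poly(\NumSets, \UnivSize)$.

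The main step is bounding the number of such sub-collections. Writing $N \coloneqq |\TopSets| = \lceil \Card \MaxFreq (\UniqRatio + 1)/\Error \rceil$, I will apply the standard estimate $\binom{N}{\Card} \leq (eN/\Card)^{\Card}$, together with
\[
\sum_{i=0}^{\Card} \binom{N}{i} \leq (\Card + 1) \binom{N}{\Card} \leq (\Card+1) \left( \frac{eN}{\Card} \right)^{\Card}.
\]
Substituting the value of $N$ yields
\[
\frac{eN}{\Card} \leq \frac{e}{\Card}\left(\frac{\Card \MaxFreq (\UniqRatio + 1)}{\Error} + 1 \right) \leq \frac{e\MaxFreq(\UniqRatio+1)}{\Error}\left(1 + \frac{\Error}{\Card \MaxFreq (\UniqRatio+1)}\right),
\]
and since the bracketed slack is at most a constant, it is absorbed into the $\poly(\NumSets, \UnivSize, 1/\Error)$ factor. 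Combining this with the per-candidate cost gives the claimed total running time of $\left(e \MaxFreq (\UniqRatio+1)/\Error\right)^{\Card} \poly(\NumSets, \UnivSize, 1/\Error)$.

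There is no real obstacle here; the only item worth a sentence of care is absorbing the $(\Card+1)$ factor and the ceiling-induced $+1$ from the definition of $|\TopSets|$ into the polynomial factor, so that the exponential term is exactly $(e \MaxFreq (\UniqRatio+1)/\Error)^{\Card}$ without hidden dependencies. Everything else is routine.
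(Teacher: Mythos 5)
Your approach is essentially the same as the paper's: the running time is dominated by the brute-force search, you count sub-collections of $\TopSets$ of size at most $\Card$, bound the sum of binomials by $(\Card+1)\binom{N}{\Card}$, and apply $\binom{N}{\Card} \leq (eN/\Card)^{\Card}$. The only real difference is bookkeeping for the ceiling: the paper uses the identity $\binom{\Num+1}{\Card} = \binom{\Num}{\Card}(\Num+1)/(\Num+1-\Card)$ to strip the $+1$ before applying the binomial bound, whereas you factor the slack out multiplicatively afterward.

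One sentence in your write-up needs care, though it is easily repaired. You justify absorbing the bracketed slack by noting that $1 + \Error/(\Card\MaxFreq(\UniqRatio+1))$ is at most a constant; but that term is raised to the $\Card$-th power inside $(eN/\Card)^{\Card}$, and a bounded base does not automatically give a bounded $\Card$-th power (e.g.\ $2^{\Card}$). What actually saves you is that the slack is $1 + O(1/\Card)$, so
\[
\left(1 + \frac{\Error}{\Card\MaxFreq(\UniqRatio+1)}\right)^{\Card} \leq \exp\!\left(\frac{\Error}{\MaxFreq(\UniqRatio+1)}\right) \leq e\,,
\]
which \emph{is} an absolute constant (using $\MaxFreq\geq 2$, $\UniqRatio\geq 1$, $\Error<1$). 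State the bound on the $\Card$-th power explicitly rather than just the bound on the base, and the argument is airtight.
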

\begin{proof}
\UTS{} considers every possible collection of $\NumSel \in [\Card]$ sets from $\TopSets$ and outputs the one with the best unique coverage. Below, the second inequality holds since replacing $\NumSel$ with $\Card$ makes each binomial coefficient larger, as $\NumSel \leq \Card \leq \Card \MaxFreq/2$ due to $\MaxFreq \geq 2$; the equality holds since $\binom{\Num+1}{\Card} = \binom{\Num}{\Card} (\Num+1)/(\Num + 1 - \Card)$; and the final inequality holds since $\binom{\Num}{\Card} \leq ( e \Num 
/ \Card )^{\Card}$. Thus, the running time is bounded as follows.
    \begin{align}
        &\sum_{\NumSel = 1}^{\Card} \binom{|\TopSets|}{\NumSel} \poly(\NumSets, \UnivSize) 
        \leq \poly(\NumSets, \UnivSize) \sum_{\NumSel = 1}^{\Card} \binom{\frac{\Card \MaxFreq (\UniqRatio + 1)}{\Error}+1}{\NumSel} 
        \leq \poly(\NumSets, \UnivSize) \Card \binom{\frac{\Card \MaxFreq (\UniqRatio + 1)}{\Error}+1}{\Card} \notag \\
        &= \poly(\NumSets, \UnivSize, 1/\Error) \binom{\frac{\Card \MaxFreq (\UniqRatio + 1)}{\Error}}{\Card}
        \leq \poly(\NumSets, \UnivSize, 1/\Error)\left( \frac{e \MaxFreq (\UniqRatio + 1)}{\Error} \right)^{\Card}. \tag*{\qedhere}
    \end{align}
\end{proof}
\subparagraph{Streaming Algorithm.}
This algorithm can also be run on a data stream using the subsampling approach in \cref{lemma:subsampling approach}, in which case it returns  w.h.p.\ a $(1-3\Error)$-approximation for \MUCprob{}.
\proofsubparagraph{Subsampled Streaming Approximation Factor.}
    Given an instance of \MUCprob{}, we can use the subsampling approach from \cref{sec:subsampling approach} and run \UTS{} on the subsampled instances in parallel (we use the same $\Error$ in the subsampling as in the algorithm). Since \UTS{} achieves a $(1-\Error)$-approximation for each subsampled instance by \cref{lemma:kernel}, it follows from \cref{lemma:subsampling approach} that we can return a $(1-3\Error)$-approximation for the original instance with probability $1-1/\poly(\NumSets)$.

    \proofsubparagraph{Subsampled Streaming Space Complexity.}
    Using the subsampling approach from \cref{lemma:subsampling approach}, \UTS{} stores $|\TopSets|  \leq \lceil \Card \MaxFreq (\UniqRatio + 1) / \Error \rceil $ sets in each subsampled instance. It follows from \cref{lemma:subsampling approach} that the overall space complexity is $\lceil\log_2 \UnivSize \rceil \cdot |\TopSets| \cdot \tilde{O}(\Card / \Error^2) = \tilde{O} ( \UniqRatio \Card^2 \MaxFreq / \Error^3 )$.
    
\subsection{Polynomial-Time Streaming Algorithm} \label{sec:app of poly time to streams}
Here we present a single-pass streaming algorithm that returns a $(1/(2\UniqRatio) - 3\Error)$-approximation for \MUCprob{}, given a bound on the unique coverage ratio, $\UniqRatio$. We present the algorithm in \cref{thm:poly time data stream} below. 

In the theorem statement, we assume we can use an offline polynomial-time algorithm, \Alg{}, that takes a collection $\Col$ and returns a subcollection $\Sol \subseteq \Col$ such that $|\UniqCover(\Sol)| \geq |\Cover(\Col)| / \UniqRatio$ for a ratio $\UniqRatio$ depending on $\Card = |\Col|$, the maximum frequency $\MaxFreq$, and the maximum set size~$\MaxSize$. \Alg{} can be substituted with a procedure that runs all of our unique coverage algorithms from \cref{sec:greedy algorithms} on $\Col$ and returns the solution with the best unique coverage.

\begin{theorem}
    \label{thm:poly time data stream}
Let $\Stream$ denote a data stream of $\NumSets$ sets, $\Card \geq 2$ denote a cardinality constraint, $\MaxFreq \geq 2$ denote the maximum frequency in $\Stream$, $\MaxSize \geq 2$ denote the maximum set size in $\Stream$, and $\Error \in (0,1)$ denote an error parameter. Further, assume we have a polynomial-time algorithm \Alg{} with unique coverage ratio $\UniqRatio$ depending on $\Card, \MaxFreq,$ and $\MaxSize$. Then we can find a $(1/(2\UniqRatio) - 3\Error)$-approximation for \MUCprob{} with probability $1-1/\poly(\NumSets)$, using one pass, $\tilde{O}(\Card^2/\Error^{3})$ space, and in polynomial-time.
\end{theorem}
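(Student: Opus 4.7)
The plan is to chain a streaming \MCprob{} subroutine with the offline algorithm \Alg{} inside the subsampling framework of \cref{lemma:subsampling approach}. Concretely, for each of the $\lceil \log_2 \UnivSize \rceil$ parallel subsampled instances I would run a single-pass $(1/2-\Error)$-approximation streaming algorithm for \MCprob{} with cardinality constraint $\Card$ that stores $O(\Card/\Error)$ sets, for example \SSpp{}. Once the stream ends, for the subsampled instance selected by the lemma I would feed its output collection $\Col$ (with $|\Col|\leq \Card$) into \Alg{} offline, and return $\Sol=\Alg(\Col)$ as the final answer.

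The approximation analysis proceeds in a short chain. Fix the selected subsampled instance and let $\Opt$ be its optimal \MUCprob{} solution of value $|\UniqCover(\Opt)|$. Because the unique cover of any collection is a subset of its cover, $\Opt$ is a feasible \MCprob{} solution of coverage at least $|\UniqCover(\Opt)|$, so the \MCprob{} subroutine returns $\Col$ with $|\Cover(\Col)| \geq (1/2-\Error)|\UniqCover(\Opt)|$. Applying the guarantee of \Alg{} gives $|\UniqCover(\Sol)| \geq |\Cover(\Col)|/\UniqRatio$. Since the maximum unique coverage of any collection is always bounded above by its coverage, $\UniqRatio\geq 1$, and therefore $|\UniqCover(\Sol)| \geq (1/(2\UniqRatio) - \Error)|\UniqCover(\Opt)|$. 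Feeding this per-subsample guarantee into \cref{lemma:subsampling approach} with $\Ratio = 1/(2\UniqRatio)-\Error$ transfers it to the original instance at an additional additive cost of $2\Error$, yielding the claimed $(1/(2\UniqRatio)-3\Error)$-approximation with probability $1-1/\poly(\NumSets)$.

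For the space and running time, each of the $\lceil\log_2\UnivSize\rceil$ parallel instances stores $s=O(\Card/\Error)$ sets, so by \cref{lemma:subsampling approach} the total working space is $\lceil\log_2\UnivSize\rceil\cdot s\cdot O(\Card\log\NumSets\log\UnivSize/\Error^2)=\tilde{O}(\Card^2/\Error^3)$ bits. Stream updates are polynomial-time per the streaming \MCprob{} subroutine, and the single post-stream invocation of \Alg{} runs on at most $\Card$ sets in polynomial time by hypothesis.

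The main obstacle is essentially bookkeeping: I need to confirm that the chosen \MCprob{} streaming routine is compatible with the subsampling scheme (it stores $O(\Card/\Error)$ set identifiers, each with $\tilde{O}(\Card/\Error^2)$ surviving elements after subsampling) and to verify that the three error contributions---the $1/\UniqRatio$ factor from converting coverage to unique coverage, the $\Error$ slack in the streaming \MCprob{} approximation, and the $2\Error$ slack from subsampling---compose additively into exactly $3\Error$. No new combinatorial ideas are required beyond the building blocks already assembled in the paper.
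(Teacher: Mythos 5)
Your proposal is correct and matches the paper's proof essentially step for step: subsample via Lemma~\ref{lemma:subsampling approach}, run a one-pass $(1/2-\Error)$-approximate Max Coverage routine storing $\tilde{O}(\Card/\Error)$ sets per subsampled instance, post-process the returned $\Col$ through \Alg{} to get a $(1/(2\UniqRatio)-\Error)$-approximation per subsampled instance, and let the lemma absorb the remaining $2\Error$. You also make explicit the small observation $\UniqRatio\geq 1$ (needed to collapse $\Error/\UniqRatio$ to $\Error$), which the paper leaves implicit.
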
 

\begin{proof}
    We use the subsampling approach from \cref{lemma:subsampling approach}.
    In each subsampled instance, we use an existing polynomial-time streaming algorithm \cite{mcgregor2019better} to find a $(1/2-\Error)$-approximation, $\Col$, for \MCprob{} in one pass while storing $\tilde{O}(\Card/\Error)$ sets; the sets in $\Col$ must be stored explicitly so that we can run \Alg{} on $\Col$. Running \Alg{} on $\Col$ returns a $\Sol\subseteq \Col$ that is a $(1/(2\UniqRatio)-\Error)$-approximation for the subsampled instance of \MUCprob{}. This implies a $(1/(2\UniqRatio) - 3\Error)$-approximation for the original instance, $\Stream$, with probability $1-1/\poly(\NumSets)$.
    The overall space complexity follows from explicitly storing $\tilde{O}(\Card / \Error)$ sets in each subsampled instance and from \cref{lemma:subsampling approach}, giving $ \lceil \log_{2}\UnivSize \rceil \cdot \tilde{O}(\Card / \Error) \cdot \tilde{O}(\Card / \Error^2) = \tilde{O}(\Card^2 / \Error^{3})$.
\end{proof}

\section{Algorithms for Bounding the Unique Coverage Ratio} \label{sec:greedy algorithms}

We here present algorithms that run in polynomial time. Given a collection,~$\Col$, each returns a subcollection $\Sol \subseteq \Col$ such that~$\Sol$'s unique coverage is within logarithmic ratio of $\Col$'s coverage. We hence call this the \emph{unique coverage ratio} of an algorithm. Our algorithms \UG{} (\cref{sec:unique greedy}), \UGF{} (\cref{sec:unique greedy freq}), and \UGS{} (\cref{sec:unique greedy size}) have unique coverage ratios that is logarithmic in $\ColSize = |\Col|$, $\MaxFreq$, and $\MaxSize$, respectively. 

\subsection{\UG{}} \label{sec:unique greedy}
We present and analyze our algorithm \UG{}, with pseudocode in \cref{alg:UG}. Its purpose is to take a collection $\Col$ of $\ColSize$ sets and return a collection $\Sol\subseteq\Col$ whose \emph{unique} coverage is at least a $1/\Har_{\ColSize}$ factor of $\Col$'s coverage. We formally state this in \cref{thm:UG ratio}.

\subparagraph*{\UG{} Overview.}
\UG{} first checks whether $\Col$'s unique coverage is at least $1/\Har_\ColSize$ of its own coverage. If so, then it immediately returns $\Col$ as the solution, which of course occurs if $\ColSize=1$. If not, then the idea is to discard the set $\SetSmall \in \Col$ with the smallest contribution to $\Col$'s unique coverage. It follows that the total loss in coverage from $\Col$ to $\Col\setminus\{\SetSmall\}$ is only $1 / \ColSize$ of $\Col$'s unique coverage. Observe that~$\SetSmall$ contributes at most $1 / \ColSize$ to $\Col$'s unique coverage, and any elements in $\SetSmall$ that are also in $\Col$'s non-unique cover must remain in $\RecCol$'s cover.
We then apply~$\UG$ recursively, to~$\RecCol$. As we show in \cref{thm:UG ratio}, since the performance of~$\UG$ relates unique coverage to coverage, $\RecCol$ has sufficient coverage so that the recursive solution from $\UG(\RecCol)$ has a unique coverage of at least $1/\Har_\ColSize$ of $\Col$'s coverage.

\begin{algorithm}
    \DontPrintSemicolon
    \caption{\UG}
    \label{alg:UG}
    \KwIn{$\Col$: collection of $\ColSize$ sets.}
    \KwOut{$\Sol\subseteq\Col$: subcollection satisfying $|\UniqCover(\Sol)| \geq {|\Cover(\Col)|}/{\Har_\ColSize} $.}
    \If
    {$|\UniqCover(\Col)| \geq {|\Cover(\Col)|}/{\Har_\ColSize}$ %
    }{\label{line:UG check ratio}
        $\Sol \gets \Col$\;
        \label{line:UG assign col}
    }
    \Else{\label{line:UG rec step}
        $\SetSmall \gets \argmin_{\SetS\in\Col} |\SetS\cap\UniqCover(\Col)|$\; \label{line:UG smallest unique set}
        $\Sol \gets \UG( \Col \setminus \{\SetSmall\})$\;
        \label{line:UG assign rec col}
    }
    \KwRet{$\Sol$}\; \label{line:UG return sol}
\end{algorithm}

\begin{theorem} \label{thm:UG ratio}
    Given a collection of $\ColSize$ sets, $\Col$, 
    \UG{} returns a collection $\Sol\subseteq\Col$ satisfying
    \begin{align}
        |\UniqCover(\Sol)| \geq \frac{|\Cover(\Col)|}{\Har_\ColSize} \,.
        \label{eqn:UG ratio}
    \end{align}
\end{theorem}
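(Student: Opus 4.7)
I would prove the theorem by induction on $\ColSize$, the size of the input collection. For the base case $\ColSize = 1$, the single set has all its elements uniquely covered, so $|\UniqCover(\Col)| = |\Cover(\Col)|$; since $\Har_1 = 1$, the if-condition on Line~\ref{line:UG check ratio} holds and the algorithm returns $\Col$ itself, which satisfies~\eqref{eqn:UG ratio} with equality.

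For the inductive step, assume the theorem holds for all collections of size less than $\ColSize$. If the condition on Line~\ref{line:UG check ratio} triggers, then $\Sol = \Col$ and we are immediately done. The interesting case is the recursive branch. Here I would establish two geometric identities about the chosen $\SetSmall$. First, by the pigeonhole/averaging principle, since every element in $\UniqCover(\Col)$ belongs to exactly one set of $\Col$, we have $\sum_{\SetS \in \Col}|\SetS \cap \UniqCover(\Col)| = |\UniqCover(\Col)|$, so the $\argmin$ choice of $\SetSmall$ on Line~\ref{line:UG smallest unique set} yields $|\SetSmall \cap \UniqCover(\Col)| \leq |\UniqCover(\Col)|/\ColSize$. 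Second, I would observe that the only elements of $\Cover(\Col)$ that are lost upon deleting $\SetSmall$ are exactly those covered \emph{solely} by $\SetSmall$, namely $\SetSmall \cap \UniqCover(\Col)$; every other element of $\SetSmall$ lies in at least one other set of $\Col$ and hence remains in $\Cover(\RecCol)$. Thus $|\Cover(\RecCol)| = |\Cover(\Col)| - |\SetSmall \cap \UniqCover(\Col)|$.

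Since we are in the else branch, $|\UniqCover(\Col)| < |\Cover(\Col)|/\Har_\ColSize$, so combining the two observations gives $|\Cover(\RecCol)| > |\Cover(\Col)|\bigl(1 - 1/(\ColSize \Har_\ColSize)\bigr)$. The inductive hypothesis applied to $\RecCol$ (which has $\ColSize - 1$ sets) yields $|\UniqCover(\Sol)| \geq |\Cover(\RecCol)|/\Har_{\ColSize-1}$. To close the argument I need the harmonic-number identity $(1 - 1/(\ColSize \Har_\ColSize))/\Har_{\ColSize-1} \geq 1/\Har_\ColSize$, which simplifies via $\Har_\ColSize - \Har_{\ColSize-1} = 1/\ColSize$ to the trivial $1 \geq 1$. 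Chaining these inequalities gives $|\UniqCover(\Sol)| \geq |\Cover(\Col)|/\Har_\ColSize$, completing the induction.

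The main obstacle, such as it is, lies in recognizing the right two-step accounting in the recursive case: that the loss in \emph{coverage} when removing $\SetSmall$ equals the loss in \emph{unique coverage}, and that this loss can be controlled via an averaging argument over the quantity appearing in the algorithm's $\argmin$. Once these are in hand, the harmonic-number arithmetic is routine because the $\Har_\ColSize$ denominator in the algorithm was precisely chosen to make the telescoping work. No additional ingredients (linear programming, probabilistic arguments, etc.) are needed.
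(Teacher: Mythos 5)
Your proposal is correct and follows essentially the same approach as the paper: induction on $\ColSize$, the averaging bound $|\SetSmall \cap \UniqCover(\Col)| \leq |\UniqCover(\Col)|/\ColSize$, the observation that $|\Cover(\RecCol)| = |\Cover(\Col)| - |\SetSmall \cap \UniqCover(\Col)|$, and the telescoping harmonic identity $\Har_\ColSize - 1/\ColSize = \Har_{\ColSize-1}$. The paper only differs cosmetically, packaging the coverage-loss observation as a standalone set identity before taking cardinalities.
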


\begin{proof}
    We prove \cref{thm:UG ratio} by induction on $\ColSize = |\Col|$.

    \proofsubparagraph{Base Case.}
    If $\ColSize = 1$, then $|\UniqCover(\Col)| = |\Cover(\Col)|$ and $\Sol = \Col$, so we are done.

    \proofsubparagraph{Inductive Case.} Consider the case $\ColSize \geq 2$, and assume that \cref{thm:UG ratio} holds for $\ColSize - 1$. Then one of two subcases must hold: (i) $|\UniqCover(\Col)| \geq |\Cover(\Col)|/\Har_\ColSize$; or (ii) the negation,
        $|\UniqCover(\Col)| < |\Cover(\Col)| / \Har_\ColSize$.
    In subcase~(i), the \cref{line:UG check ratio} condition succeeds and $\UG$ returns the subcollection~$\Sol = \Col$, which clearly satisfies Ineq. \eqref{eqn:UG ratio}.

    So, we focus on subcase~(ii); since $|\UniqCover(\Col)| < |\Cover(\Col)| / \Har_\ColSize$,
     the \cref{line:UG check ratio} condition fails, thus \cref{line:UG assign rec col} assigns to $\Sol$ the solution from the recursive call on $\RecCol$.
     
  \cref{claim:coverage of rec col} lower bounds the coverage of this subcollection, $|\Cover(\RecCol)|$.
  Prior to that, we prove a handy claim.

  \begin{claim}
  \label{claim:one two}
  $|\Cover(\RecCol)| = |\NUniqCover(\Col)| + |\UniqCover(\Col)\setminus \SetSmall|$\,.
  \end{claim}
  \begin{claimproof}
  Observe that $\NUniqCover(\Col)$ and $\UniqCover(\Col)\setminus \SetSmall$ are disjoint; so it suffices to show that
  $\Cover(\RecCol) = \NUniqCover(\Col) \cup (\UniqCover(\Col)\setminus \SetSmall)$.
  We first show that RHS is a subset of LHS.
  Each element covered at least twice in $\Col$ remains covered in $\Col\setminus \{\SetSmall\}$; while each element uniquely covered in~$\Col$ that is not in $\SetSmall$ remains covered in $\Col\setminus \{\SetSmall\}$.
  Going the other way, consider an element that is in neither $\NUniqCover(\Col)$ nor
  $\UniqCover(\Col)\setminus \SetSmall$: then the only set it was in was~$T$, and hence it is not in~$\Col\setminus \{\SetSmall\}$.
  \end{claimproof}
    
    \begin{claim} \label{claim:coverage of rec col}
    Subcollection~$\Col \setminus \{\SetSmall\}$ satisfies
        \begin{align*}
            |\Cover(\RecCol)| \geq \left(1-\frac{1}{\ColSize \Har_\ColSize}\right) |\Cover(\Col)|\,.
        \end{align*}
    \end{claim}
    \begin{claimproof}
        First, observe that the contribution of each $\SetS\in\Col$ to $\UniqCover(\Col)$, i.e., $|\SetS\cap\UniqCover(\Col)|$, is disjoint from the contributions of all other sets in~$\Col$: each element in $\UniqCover(\Col)$ is covered by exactly one set.
        Therefore, the set $\SetSmall = \argmin_{\SetS\in\Col} |\SetS\cap\UniqCover(\Col)|$ in \cref{line:UG smallest unique set} satisfies
        \begin{align}
            |\SetSmall \cap \UniqCover(\Col)| \leq \frac{|\UniqCover(\Col)|}{\ColSize}\,. \label{eqn:smallest unique set}
        \end{align}
         With \cref{claim:one two}, we now prove \cref{claim:coverage of rec col}.
        \begin{align*}
            |\Cover(\RecCol)| &= |\NUniqCover(\Col)| + |\UniqCover(\Col)\setminus \SetSmall| \\
            &= |\Cover(\Col)| - |\UniqCover(\Col)| + |\UniqCover(\Col)| -  |\SetSmall\cap\UniqCover(\Col)|\\
            &= |\Cover(\Col)| -  |\SetSmall \cap \UniqCover(\Col)|\\
            &\geq |\Cover(\Col)| - \frac{|\UniqCover(\Col)|}{\ColSize} &\text{Ineq.~\eqref{eqn:smallest unique set}}\\
            &> |\Cover(\Col)| - \frac{|\Cover(\Col)|}{\ColSize\Har_\ColSize} &\text{subcase (ii)}\\
            &= \left(1-\frac{1}{\ColSize \Har_\ColSize}\right) |\Cover(\Col)|\,. \tag*{\claimqedhere}
        \end{align*}
    \end{claimproof}
    
    Recall that in subcase~(ii), \cref{line:UG assign rec col} assigns to~$\Sol$ the solution from the recursive call on~$\RecCol$. Since $|\RecCol| = \ColSize - 1$, we apply the inductive assumption to prove that~$\Sol$ satisfies Ineq.~\eqref{eqn:UG ratio}.
    \begin{align*}
        |\UniqCover(\Sol)| &\geq \frac{|\Cover(\RecCol)|}{\Har_{\ColSize-1}} &\text{inductive assumption} \\
        &\geq \frac{1}{\Har_{\ColSize-1}} \left(1-\frac{1}{\ColSize \Har_\ColSize}\right) |\Cover(\Col)| &\text{\cref{claim:coverage of rec col}} \\
        &= \frac{1}{\Har_{\ColSize-1}} \frac{\ColSize \Har_\ColSize - 1}{\ColSize \Har_\ColSize} |\Cover(\Col)| \\
        &= \frac{1}{\Har_{\ColSize-1}} \frac{\Har_\ColSize - \frac{1}{\ColSize}}{\Har_\ColSize} |\Cover(\Col)| \\
        &= \frac{|\Cover(\Col)|}{\Har_\ColSize}\,. &\text{$\Har_\ColSize - \frac{1}{\ColSize}=\Har_{\ColSize-1}$, for $\ColSize \geq 2$}
    \end{align*}
    
    We have proven that $\Sol$ satisfies Ineq.~\eqref{eqn:UG ratio} in the base case and the inductive case, proving \cref{thm:UG ratio}.
\end{proof}

\subsection{\UGF{}} \label{sec:unique greedy freq}
In this section, we present and analyze our algorithm \UGF{}, with pseudocode in \cref{alg:UGF}. The purpose of this algorithm is to take a collection $\Col$ with maximum frequency $\MaxFreqCol \leq |\Col|$, and an error parameter $\FreqError \in (0,1)$, and return a collection $\Sol \subseteq \Col$ whose unique coverage is at least a $(1/\Har_{\lceil \MaxFreqCol(\MaxFreqCol-1) / \FreqError \rceil} - \FreqError)$ factor of $\Col$'s coverage. By an appropriate choice of $\FreqError$ depending on $\MaxFreqCol$, this factor can be simplified to $1 / (2\ln \MaxFreqCol + o(\log \MaxFreqCol))$.

\subparagraph*{\UGF{} Overview.}
The idea of \UGF{} is to group all of the sets from $\Col$ into $\FreqColSize$ disjoint collections, $\Group_{1}, \dots, \Group_{\FreqColSize}$, so that the sets must be selected into the solution $\Sol$ in these groups, i.e., for each $\FreqColInd \in [\FreqColSize]$, either all of the sets in $\Group_{\FreqColInd}$, or none of the sets in $\Group_{\FreqColInd}$, must be selected into $\Sol$. Then, letting $\FreqCol$ be the collection of the covers of $\Group_{1}, \dots, \Group_{\FreqColSize}$, we can call \UG{} on $\FreqCol$ to find a selection of these covers, namely $\FreqSol$. The returned solution,~$\Sol$, is constructed by merging each $\Group_{\FreqColInd}$ whose cover was selected into $\FreqSol$, which ensures that the sets are selected in groups.

It can be seen that, by calling \UG{} on $\FreqCol$ and by \cref{thm:UG ratio}, the unique coverage of $\FreqSol$ is at least $1/\Har_{\FreqColSize}$ of $\FreqCol$'s coverage, and therefore at least $1/\Har_{\FreqColSize}$ of $\Col$'s coverage since $\FreqCol$ and $\Col$ have the same cover. The issue now is that sets from the same $\Group_{\FreqColInd}$ can overlap after being selected as a group into $\Sol$, which would make $\Sol$'s unique coverage smaller than $\FreqSol$'s unique coverage. This is addressed by setting the number of groups to be $\FreqColSize = \lceil \MaxFreqCol(\MaxFreqCol-1)/\FreqError \rceil$, and by the way \UGF{} allocates the sets into these groups: it allocates each $\SetS \in \Col$ to the group $\Group_{\FreqColInd}$ whose unique coverage intersects the least with $\SetS$. In this way, the total unique coverage that is lost due to overlapping sets in the same $\Group_{\FreqColInd}$ can be bounded by $\FreqError|\Cover(\Col)|$. Thus, $\Sol$'s unique coverage is at least $(1/\Har_{\FreqColSize} - \FreqError) = (1/\Har_{\lceil \MaxFreqCol(\MaxFreqCol-1)/\FreqError \rceil} - \FreqError)$ of $\Col$'s coverage. Details are given in the proof of \cref{thm:UGF ratio}.

\begin{algorithm}
    \DontPrintSemicolon
    \caption{\UGF}
    \label{alg:UGF}
    \KwIn{$\Col$: collection with maximum frequency $\MaxFreqCol \geq 2$, $\FreqError \in (0,1)$: error parameter.}
    \KwOut{$\Sol\subseteq\Col$: collection satisfying $|\UniqCover(\Sol)| \geq \left( 1/\Har_{\lceil \MaxFreqCol(\MaxFreqCol-1)/\FreqError \rceil} - \FreqError \right) |\Cover(\Col)|$.}
    $\FreqColSize \gets \left\lceil \MaxFreqCol(\MaxFreqCol-1) / \FreqError \right\rceil$\; \label{line:UGF num groups}
    \For(\tcp*[f]{Initialize empty groups}){$\FreqColInd \in [\FreqColSize]$}{\label{line:UGF init parts}
        $\Group_{\FreqColInd} \gets \varnothing$\;
    } 
    \For(\tcp*[f]{Allocate sets to groups}){$\SetS \in \Col$}{\label{line:UGF set loop}
        $\FreqColInd \gets \argmin_{\FreqColIndj \in [\FreqColSize]}{|\UniqCover(\Group_{\FreqColIndj}) \cap \SetS|}$\; \label{line:UGF find best part}
        $\Group_{\FreqColInd} \gets \Group_{\FreqColInd} \cup \{\SetS\}$\; \label{line:UGF update part}
    }
    $\FreqCol \gets \{ \Cover(\Group_{1}), \dots, \Cover(\Group_{\FreqColSize}) \}$ \tcp*[r]{Define collection of groups' covers} \label{line:UGF define freq col}
    $\FreqSol \gets \UG(\FreqCol)$\; \label{line:UGF call UG}
    $\Sol \gets \varnothing$\;
    \For(\tcp*[f]{Construct returned solution}){$\Cover(\Group_{\FreqColInd}) \in \FreqSol$}{\label{line:UGF merge groups loop}
        $\Sol \gets \Sol \cup \Group_{\FreqColInd}$\; \label{line:UGF update sol}
    }
    \KwRet{$\Sol$}\; \label{line:UGF return sol}
\end{algorithm}

\begin{theorem} \label{thm:UGF ratio}
    Given $\Col$ with  maximum frequency $\MaxFreqCol \geq 2$, $\FreqError \in (0,1)$ denote an error parameter, algorithm \UGF{} returns a collection $\Sol\subseteq\Col$ satisfying
    \begin{align}
        |\UniqCover(\Sol)| \geq \left( \frac{1}{\Har_{\lceil \MaxFreqCol(\MaxFreqCol-1) / \FreqError \rceil}} - \FreqError \right) |\Cover(\Col)|\,.
        \label{eqn:UGF ratio}
    \end{align}
    Moreover, setting $\FreqError = (9.28 \ln \MaxFreqCol)^{-1} (2 \ln \MaxFreqCol + 2 \ln \ln \MaxFreqCol + 5.61)^{-1}$, we obtain 
    \begin{align}
         |\UniqCover(\Sol)| \geq \left( \frac{1 - 1/ (9.28 \ln \MaxFreqCol)}{2 \ln \MaxFreqCol + 2 \ln \ln \MaxFreqCol + 5.61} \right) |\Cover(\Col)| \geq \frac{1}{2\ln \MaxFreqCol + o(\log \MaxFreqCol)} |\Cover(\Col)|\,. \label{eqn:UGF simple ratio}
    \end{align}
\end{theorem}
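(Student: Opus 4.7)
The plan is to exploit the two-stage structure of \UGF{}: the allocation phase on lines~\ref{line:UGF set loop}--\ref{line:UGF update part} distributes every set of $\Col$ among $\FreqColSize = \lceil \MaxFreqCol(\MaxFreqCol-1)/\FreqError \rceil$ groups, and the call \UG{}$(\FreqCol)$ on line~\ref{line:UGF call UG} selects a sub-collection $\FreqSol$ of the corresponding group covers. Because the groups partition $\Col$, we have $\Cover(\FreqCol) = \Cover(\Col)$ and $|\FreqCol| = \FreqColSize$, so \cref{thm:UG ratio} immediately yields $|\UniqCover(\FreqSol)| \geq |\Cover(\Col)|/\Har_{\FreqColSize}$.

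The main obstacle is controlling how much unique coverage is lost when the selected group covers are unpacked into their constituent sets on line~\ref{line:UGF update sol}. I would first observe that $\UniqCover(\Sol) \subseteq \UniqCover(\FreqSol)$, and that any $\Elem \in \UniqCover(\FreqSol) \setminus \UniqCover(\Sol)$ must lie in $\NUniqCover(\Group_{\FreqColInd})$ for the unique selected group that contains $\Elem$. Hence it suffices to upper bound $\sum_{\FreqColInd \in [\FreqColSize]} |\NUniqCover(\Group_{\FreqColInd})|$. Since adding a set $\SetS$ to $\Group_{\FreqColInd}$ transitions exactly the elements in $\SetS \cap \UniqCover(\Group_{\FreqColInd})$ (evaluated just before the addition) from uniquely- to non-uniquely covered, summing across the allocation order and invoking the argmin rule of line~\ref{line:UGF find best part} gives
\begin{align*}
\sum_{\FreqColInd \in [\FreqColSize]} |\NUniqCover(\Group_{\FreqColInd})| \;=\; \sum_{\SetS \in \Col} \min_{\FreqColIndj \in [\FreqColSize]} |\SetS \cap \UniqCover(\Group_{\FreqColIndj})|\,.
\end{align*}

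To bound this sum I would replace the minimum by the average. Each element $\Elem \in \SetS$ has frequency at most $\MaxFreqCol$ in $\Col$, so it lies in at most $\MaxFreqCol - 1$ sets other than $\SetS$, and can therefore contribute to $|\SetS \cap \UniqCover(\Group_{\FreqColIndj})|$ for at most $\MaxFreqCol - 1$ distinct $\FreqColIndj$. Summing over $\Elem \in \SetS$ yields $\sum_{\FreqColIndj} |\SetS \cap \UniqCover(\Group_{\FreqColIndj})| \leq (\MaxFreqCol - 1)|\SetS|$, so the minimum is at most $(\MaxFreqCol-1)|\SetS|/\FreqColSize$. Summing over $\SetS \in \Col$ and using $\sum_{\SetS \in \Col} |\SetS| \leq \MaxFreqCol |\Cover(\Col)|$, the total loss is at most $\MaxFreqCol(\MaxFreqCol-1)|\Cover(\Col)|/\FreqColSize \leq \FreqError |\Cover(\Col)|$ by the choice of $\FreqColSize$ on line~\ref{line:UGF num groups}. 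Combining this with the lower bound on $|\UniqCover(\FreqSol)|$ yields \eqref{eqn:UGF ratio}.

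For \eqref{eqn:UGF simple ratio}, I would substitute the prescribed $\FreqError$ into $\FreqColSize$, use $\Har_n \leq \ln n + 1 + O(1/n)$, and expand $\ln \FreqColSize = \ln(\MaxFreqCol(\MaxFreqCol-1)) + \ln(1/\FreqError)$, which evaluates to $2\ln \MaxFreqCol + 2\ln \ln \MaxFreqCol$ plus absolute constants. The numerical constants in $\FreqError$ are calibrated so that $\Har_{\FreqColSize} \leq 2\ln \MaxFreqCol + 2\ln \ln \MaxFreqCol + 5.61$, after which the identity $1/\Har_{\FreqColSize} - \FreqError = [1 - 1/(9.28 \ln \MaxFreqCol)]/(2\ln \MaxFreqCol + 2\ln \ln \MaxFreqCol + 5.61)$ is immediate from factoring; the $o(\log \MaxFreqCol)$ form follows by absorbing lower-order terms into the denominator. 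This final step is purely arithmetic and should not present any conceptual difficulty.
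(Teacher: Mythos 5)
Your proof of Ineq.~\eqref{eqn:UGF ratio} follows the paper's argument almost exactly: the subset relation $\UniqCover(\Sol) \subseteq \UniqCover(\FreqSol)$ with the difference lying in the non-unique covers of selected groups, the telescoping expression for $|\NUniqCover(\Group_\FreqColInd)|$ as a sum of increments $|\SetS_\SetOrder \cap \UniqCover(\Group_{\FreqColInd,\SetOrder-1})|$, the frequency bound $\sum_{\FreqColIndj}|\SetS\cap\UniqCover(\Group_{\FreqColIndj})|\leq (\MaxFreqCol-1)|\SetS|$, and the averaging-via-argmin step are all the same ingredients, combined the same way. One small notational slip: in your formula $\sum_\FreqColInd|\NUniqCover(\Group_\FreqColInd)| = \sum_{\SetS\in\Col}\min_{\FreqColIndj}|\SetS\cap\UniqCover(\Group_{\FreqColIndj})|$ the time at which $\UniqCover(\Group_{\FreqColIndj})$ is evaluated must be made explicit (the paper's $\Group_{\FreqColIndj,\SetOrder-1}$ notation does this), but you do flag this in words.

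For Ineq.~\eqref{eqn:UGF simple ratio}, you again identify the right strategy, but you assert rather than verify the key numerical claim ``$\Har_{\FreqColSize} \leq 2\ln\MaxFreqCol + 2\ln\ln\MaxFreqCol + 5.61$,'' which is where all the work of the paper's proof lies. The paper shows this by first bounding $1/\Har_{\lceil x\rceil} \geq 1/(\ln x + 1)$, splitting $\ln(1/\FreqError)$ into its two factors, and then using $\ln\ln\MaxFreqCol \leq \ln\MaxFreqCol/e$ and $1 \leq \ln\MaxFreqCol/\ln 2$ to absorb the lower-order terms into a single constant $\ConstB$ under the constraint $\ln\ConstA + \ln(2 + 2/e + \ConstB/\ln 2) + 1 \leq \ConstB$; the constants $\ConstA = 9.28$, $\ConstB = 5.61$ are then chosen to approximately maximize the resulting ratio at $\MaxFreqCol=2$. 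Also, your line ``$1/\Har_{\FreqColSize} - \FreqError = [1-1/(9.28\ln\MaxFreqCol)]/(\,\cdots)$'' should be a $\geq$, not an $=$ (the right-hand side equals $1/D - \FreqError$ where $D$ is the asserted upper bound on $\Har_{\FreqColSize}$, not $1/\Har_{\FreqColSize} - \FreqError$ itself). These arithmetic steps are not conceptually deep but do need to be carried out to confirm the advertised constants actually work.
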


\begin{proof}
We first prove Ineq.~\eqref{eqn:UGF ratio}, starting with the following claim. 
    \begin{claim}
        $ \UniqCover(\Sol)
            = \UniqCover(\FreqSol) \setminus \bigcup_{\FreqColInd \colon \Cover(\Group_{\FreqColInd}) \in \FreqSol} \NUniqCover(\Group_{\FreqColInd})$.
        \label{claim:sol freqsol}
    \end{claim}
    
    \begin{claimproof}
    Consider an element~$x$ that is in exactly one set in~$\Sol$.
    This means that~$x$ is in exactly one set from exactly one group, say~$\Group_y$, chosen in~$\Sol$.
    Focusing on~$\FreqSol$, element~$x$ is clearly in~$\Cover(\Group_y)$ only, but might
    occur more than once in~$\Group_y$.
    Excluding elements that are in $\NUniqCover(\Group_{\FreqColInd})$ for every~$\FreqColInd$, we thus have the claim
    statement.
    \end{claimproof}

    \noindent
    With \cref{claim:sol freqsol}, we prove \cref{claim:lower bound UGF sol}.   
    \begin{claim} \label{claim:lower bound UGF sol}
    The solution $\Sol$ satisfies $|\UniqCover(\Sol)| \geq |\UniqCover(\FreqSol)| - \FreqError |\Cover(\Col)|$.
    \end{claim}
    \begin{claimproof}
     Given \cref{claim:sol freqsol}, $\Sol$ satisfies Ineq.~\eqref{eqn:init bound uniq cover sol},  
        \begin{align}
            |\UniqCover(\Sol)|
            &\geq |\UniqCover(\FreqSol)| - \sum_{\FreqColInd \colon \Cover(\Group_{\FreqColInd}) \in \FreqSol} |\NUniqCover(\Group_{\FreqColInd})| \notag \\
            &\geq |\UniqCover(\FreqSol)| - \sum_{\FreqColInd \in [\FreqColSize]} |\NUniqCover(\Group_{\FreqColInd})|\,. \label{eqn:init bound uniq cover sol}
        \end{align}
        
        We upper bound $\sum_{\FreqColInd \in [\FreqColSize]} |\NUniqCover(\Group_{\FreqColInd})|$ in Ineq.~\eqref{eqn:init bound uniq cover sol}. Let $\SetS_{\SetOrder}$ be the $\SetOrder^{\text{th}}$ set allocated in the \cref{line:UGF set loop} loop, let $\Group_{\FreqColInd, 0} = \varnothing$, and let $\Group_{\FreqColInd, \SetOrder}$ be the subcollection $\Group_{\FreqColInd}$ just after allocating $\SetS_{\SetOrder}$.
        
        Upon inserting $\SetS_{\SetOrder}$ into $\Group_{\FreqColInd}$, every element in $\UniqCover(\Group_{\FreqColInd, \SetOrder-1})$ that becomes non-uniquely covered is accounted for by $\UniqCover(\Group_{\FreqColInd, \SetOrder-1}) \cap \SetS_{\SetOrder}$. So it holds that
      $
            |\NUniqCover(\Group_{\FreqColInd, \SetOrder})| - |\NUniqCover(\Group_{\FreqColInd, \SetOrder-1})| = |\UniqCover(\Group_{\FreqColInd, \SetOrder-1}) \cap \SetS_{\SetOrder}|$.
        Thus, $|\NUniqCover(\Group_{\FreqColInd})|$ can be expressed by \cref{eqn:sum non uniq cover group} below, observing that for~$\SetS_{\SetOrder}$ the relevant difference term is zero.
        \begin{align}
            |\NUniqCover(\Group_{\FreqColInd})|
            &= \sum_{\SetS_{\SetOrder} \in \Group_{\FreqColInd}} ( |\NUniqCover(\Group_{\FreqColInd, \SetOrder})| - |\NUniqCover(\Group_{\FreqColInd, \SetOrder-1})| ) &\text{telescoping series} \notag \\
            &= \sum_{\SetS_{\SetOrder} \in \Group_{\FreqColInd}} |\UniqCover(\Group_{\FreqColInd, \SetOrder-1}) \cap \SetS_{\SetOrder}|\,. \label{eqn:sum non uniq cover group}
        \end{align}
        
        For each $\FreqColInd \in [\FreqColSize]$ and each $\SetS_{\SetOrder} \in \Group_{\FreqColInd}$, we want to show an upper bound of $|\UniqCover(\Group_{\FreqColInd, \SetOrder-1}) \cap \SetS_{\SetOrder}| \leq (\MaxFreqCol-1)|\SetS_{\SetOrder}| / \FreqColSize$. To see this, since the maximum frequency is $\MaxFreqCol$, each element $\Elem \in \SetS_{\SetOrder}$ is covered by at most $\MaxFreqCol - 1$ other sets, each possibly in a different group.
        Therefore, we have that
        \begin{align}
            \sum_{\FreqColIndj \in [\FreqColSize]} |\UniqCover(\Group_{\FreqColIndj, \SetOrder-1}) \cap \{\Elem\}| &\leq \MaxFreqCol - 1\,, \notag \\
            \sum_{\Elem \in \SetS_{\SetOrder}} \sum_{\FreqColIndj \in [\FreqColSize]} |\UniqCover(\Group_{\FreqColIndj, \SetOrder-1}) \cap \{\Elem\}| &\leq \sum_{\Elem \in \SetS_{\SetOrder}} (\MaxFreqCol - 1)\,, \notag \\
            \sum_{\FreqColIndj \in [\FreqColSize]} |\UniqCover(\Group_{\FreqColIndj, \SetOrder-1}) \cap \SetS_{\SetOrder}| &\leq (\MaxFreqCol - 1) |\SetS_{\SetOrder}|\,. \notag
        \end{align}
        Recall that $\SetS_{\SetOrder}$ was allocated to the group $\Group_{\FreqColInd} = \argmin_{\FreqColIndj \in [\FreqColSize]} |\UniqCover(\Group_{\FreqColIndj, \SetOrder-1}) \cap \SetS_{\SetOrder}|$ in \crefrange{line:UGF find best part}{line:UGF update part}. Therefore, by averaging on the above inequality, we have that for each $\FreqColInd \in [\FreqColSize]$ and each $\SetS_{\SetOrder}$ that ends up in $\Group_{\FreqColInd}$,
        \begin{align}
            |\UniqCover(\Group_{\FreqColInd, \SetOrder-1}) \cap \SetS_{\SetOrder}| \leq \frac{\MaxFreqCol-1}{\FreqColSize} |\SetS_{\SetOrder}|\,. \label{eqn:upper bound group inter}
        \end{align}
        Now we upper bound $\sum_{\FreqColInd \in [\FreqColSize]} |\NUniqCover(\Group_{\FreqColInd})|$.
        \begin{align}
            \sum_{\FreqColInd \in [\FreqColSize]} |\NUniqCover(\Group_{\FreqColInd})|
            &= \sum_{\FreqColInd \in [\FreqColSize]} \sum_{\SetS_{\SetOrder} \in \Group_{\FreqColInd}} |\UniqCover(\Group_{\FreqColInd, \SetOrder-1}) \cap \SetS_{\SetOrder}| &\text{\cref{eqn:sum non uniq cover group}} \notag \\
            &\leq \frac{\MaxFreqCol-1}{\FreqColSize} \sum_{\FreqColInd \in [\FreqColSize]} \sum_{\SetS_{\SetOrder} \in \Group_{\FreqColInd}} |\SetS_{\SetOrder}| &\text{Ineq.~\eqref{eqn:upper bound group inter}} \notag \\
            &= \frac{\MaxFreqCol-1}{\FreqColSize} \sum_{\SetS \in \Col} |\SetS| &\text{$\Group_{1}, \dots, \Group_{\FreqColSize}$ partitions $\Col$} \notag \\
            &\leq \frac{\MaxFreqCol-1}{\FreqColSize} \MaxFreqCol 
            |\Cover(\Col)| &\text{each $\Elem \in \Cover(\Col)$ is counted by $\leq\MaxFreqCol$ sets} \notag \\
            &= \frac{\MaxFreqCol(\MaxFreqCol-1)}{\lceil \MaxFreqCol (\MaxFreqCol-1) / \FreqError \rceil} |\Cover(\Col)| &\text{value of $\FreqColSize$:\ \cref{line:UGF num groups}} \notag \\
            &\leq \frac{\MaxFreqCol(\MaxFreqCol-1)}{ \MaxFreqCol (\MaxFreqCol-1) / \FreqError} |\Cover(\Col)| \notag \\
            &= \FreqError |\Cover(\Col)|\,. \label{eqn:upper bound sum overlap groups}
        \end{align}
        Applying the Ineq.~\eqref{eqn:upper bound sum overlap groups} upper bound to Ineq.~\eqref{eqn:init bound uniq cover sol} completes the proof of the claim.
    \end{claimproof}

    To prove Ineq.~\eqref{eqn:UGF ratio}, it remains to lower bound $|\UniqCover(\FreqSol)|$, in the inequality of \cref{claim:lower bound UGF sol}, in terms of $|\Cover(\Col)|$. Below, $|\Cover(\FreqCol)| = |\Cover(\Col)|$ holds since every $\SetS \in \Col$ is allocated to some $\Group_{\FreqColInd} \in \FreqCol$.
    \begin{align}
        |\UniqCover(\Sol)| &\geq |\UniqCover(\FreqSol)| - \FreqError |\Cover(\Col)| &\text{\cref{claim:lower bound UGF sol}} \notag \\
        &\geq \frac{|\Cover(\FreqCol)|}{\Har_{\FreqColSize}} - \FreqError |\Cover(\Col)|
        &\text{\cref{line:UGF call UG} and \cref{thm:UG ratio}} \notag \\
        &=\frac{|\Cover(\FreqCol)|}{\Har_{\lceil \MaxFreqCol(\MaxFreqCol-1) / \FreqError \rceil}} - \FreqError |\Cover(\Col)| &\text{value of $\FreqColSize$ (\cref{line:UGF num groups})} \notag \\
        &=\frac{|\Cover(\Col)|}{\Har_{\lceil \MaxFreqCol(\MaxFreqCol-1) / \FreqError \rceil}}  - \FreqError |\Cover(\Col)| &\text{$|\Cover(\FreqCol)| = |\Cover(\Col)|$}
        \notag \\
        &=\left( \frac{1}{\Har_{\lceil \MaxFreqCol(\MaxFreqCol-1) / \FreqError \rceil}}  - \FreqError \right) |\Cover(\Col)|\,. \notag
    \end{align}

    \proofsubparagraph{Ineq.~(\ref{eqn:UGF simple ratio}).}
        It remains to show that there exists a choice of $\FreqError$ that implies Ineq.~\eqref{eqn:UGF simple ratio}.
    \begin{claim}
        Setting $\FreqError = (9.28 \ln \MaxFreqCol)^{-1} (2 \ln \MaxFreqCol + 2 \ln \ln \MaxFreqCol + 5.61)^{-1}$ implies Ineq.~\eqref{eqn:UGF simple ratio}.
    \end{claim}
\begin{claimproof}
    First, we lower bound the ratio of Ineq.~\eqref{eqn:UGF ratio} without assigning $\FreqError$.
    \begin{align}
        \frac{1}{\Har_{\lceil \MaxFreqCol(\MaxFreqCol-1) / \FreqError \rceil}}  - \FreqError \notag
        &\geq \frac{1}{\ln ( \MaxFreqCol(\MaxFreqCol-1) / \FreqError + 1 ) + 1}  - \FreqError \notag \\
        &= \frac{1}{\ln ( \MaxFreqCol^2 / \FreqError - \MaxFreqCol / \FreqError + 1 ) + 1}  - \FreqError \notag \\
        &\geq \frac{1}{\ln ( \MaxFreqCol^2 / \FreqError ) + 1}  - \FreqError &\text{$\MaxFreqCol \geq 2$ and $\FreqError < 1$} \notag \\
        &= \frac{1}{2 \ln \MaxFreqCol + \ln(1/\FreqError) + 1}  - \FreqError\,. \label{eqn:UGF ratio lower bound}
    \end{align}

    Now, let $\ConstA, \ConstB > 0$ be constants to be assigned shortly, and assign $\FreqError = (\ConstA \ln \MaxFreqCol)^{-1} (2 \ln \MaxFreqCol + 2 \ln \ln \MaxFreqCol + \ConstB)^{-1}$.
    This choice of $\FreqError$ allows us to simplify the RHS of Ineq.~\eqref{eqn:UGF ratio lower bound} to $1/(2\ln \MaxFreqCol + o(\log \MaxFreqCol))$, though it does not strictly maximize it.\footnote{We did not maximize the RHS of Ineq.~\eqref{eqn:UGF ratio lower bound} at every $\MaxFreqCol \geq 2$ as there is no simple closed-form solution for $\FreqError$ in terms of $\MaxFreqCol$.} We continue below after substituting $\FreqError$ into Ineq.~\eqref{eqn:UGF ratio lower bound}. Note that the inequality below holds since: $\ln \ln \MaxFreqCol \leq \ln \MaxFreqCol / e$, and for all $\MaxFreqCol \geq 2 \colon 1 \leq \ln \MaxFreqCol / \ln 2$.
    \begin{align}
        &= \frac{1}{2 \ln \MaxFreqCol + \ln( (\ConstA \ln \MaxFreqCol) (2 \ln \MaxFreqCol + 2 \ln \ln \MaxFreqCol + \ConstB) ) + 1}  - \frac{1/ (\ConstA \ln \MaxFreqCol)}{2 \ln \MaxFreqCol + 2 \ln \ln \MaxFreqCol + \ConstB} \notag \\
        &\geq \frac{1}{2 \ln \MaxFreqCol + \ln( (\ConstA \ln \MaxFreqCol) (2 \ln \MaxFreqCol + 2 \ln \MaxFreqCol / e + \ConstB \ln \MaxFreqCol / \ln 2) ) + 1}  - \frac{1/ (\ConstA \ln \MaxFreqCol)}{2 \ln \MaxFreqCol + 2 \ln \ln \MaxFreqCol + \ConstB} \notag \\
        &= \frac{1}{2 \ln \MaxFreqCol + \ln( (\ConstA \ln^2 \MaxFreqCol) (2 + 2/e + \ConstB/\ln 2) ) + 1}  - \frac{1/ (\ConstA \ln \MaxFreqCol)}{2 \ln \MaxFreqCol + 2 \ln \ln \MaxFreqCol + \ConstB} \notag \\
        &= \frac{1}{2 \ln \MaxFreqCol + 2 \ln \ln \MaxFreqCol + \ln \ConstA + \ln (2 + 2/e + \ConstB/\ln 2) + 1}  - \frac{1/ (\ConstA \ln \MaxFreqCol)}{2 \ln \MaxFreqCol + 2 \ln \ln \MaxFreqCol + \ConstB}\,. \notag
    \end{align}
    To further simplify the above lower bound while ensuring it is always positive, we constrain $\ConstA$ such that, for all $\MaxFreqCol \geq 2\colon 1/ (\ConstA \ln \MaxFreqCol) < 1$, and constrain $\ConstB$ such that: $\ln \ConstA + \ln (2 + 2/e + \ConstB / \ln 2) + 1 \leq \ConstB$.
    \begin{align}
        &\geq \frac{1}{2 \ln \MaxFreqCol + 2 \ln \ln \MaxFreqCol + \ConstB}  - \frac{1/ (\ConstA \ln \MaxFreqCol)}{2 \ln \MaxFreqCol + 2 \ln \ln \MaxFreqCol + \ConstB} \notag \\
        &= \frac{1 - 1/ (\ConstA \ln \MaxFreqCol)}{2 \ln \MaxFreqCol + 2 \ln \ln \MaxFreqCol + \ConstB}\,. \label{eqn:UGF ratio lower bound constants}
    \end{align}
    We assign $\ConstA = 9.28$ and $\ConstB = 5.61$ as these values roughly maximize the RHS of Ineq.~\eqref{eqn:UGF ratio lower bound constants} at $\MaxFreqCol=2$ subject to the above constraints. In this way, we derive the ratio of Ineq.~\eqref{eqn:UGF simple ratio} and simplify it to $1/(2\ln \MaxFreqCol + o(\log \MaxFreqCol))$ below.
    \begin{align}
        &= \frac{1 - 1/ (9.28 \ln \MaxFreqCol)}{2 \ln \MaxFreqCol + 2 \ln \ln \MaxFreqCol + 5.61} \notag \\
        &= \frac{1}{(2 \ln \MaxFreqCol + 2 \ln \ln \MaxFreqCol + 5.61)(1 + 1/(9.28 \ln \MaxFreqCol - 1))} &\text{$1-\frac{1}{x} = \frac{1}{1+1/(x-1)}$} \notag \\
        &=\frac{1}{2 \ln \MaxFreqCol + o(\log \MaxFreqCol)}\,. \notag \tag*{\claimqedhere}
    \end{align}
\end{claimproof}
\noindent This completes the proof of \cref{thm:UGF ratio}.
\end{proof}

\subsection{\UGS{}} \label{sec:unique greedy size}
In this section, we present \UGS{}, with pseudocode in \cref{alg:UGS}, derived by combining \UGF{} with the approach in Theorem~4.2 of Demaine et al.~\cite{Demaine2008}. The purpose of this algorithm is to take a collection,~$\Col$, with maximum set size~$\MaxSizeCol$, an error parameter,~$\SizeError \in (0,1)$, and another error parameter,~$\SecSizeError \in (0,1)$, and return a $\Sol \subseteq \Col$ whose unique coverage is at least a factor of $\Col$'s coverage, where the factor depends on~$\MaxSizeCol$,~$\SizeError$, and~$\SecSizeError$. We state this formally in \cref{thm:UGS ratio} and give the proof for completeness; in fact, our proof slightly generalizes the proof of Theorem~4.2 by Demaine et al.~\cite{Demaine2008}, by allowing an arbitrary~$\SizeError$ rather than fixing $\SizeError = 1/2$.

\subparagraph*{\UGS{} Overview.}
\UGS{} first modifies $\Col$ into a `minimal' collection by discarding each set~$\SetT$ that uniquely covers no element. Then it checks if $\Col$'s size is at least an $\SizeError$ factor of its own coverage. If so, then it assigns $\Col$ to the solution $\Sol$. Otherwise, it constructs a sub-instance on those elements of frequency at most $\MaxSizeCol$ and calls $\UGF{}$ on the sub-instance with error $\SecSizeError$ to get~$\SizeSol$. Returned solution $\Sol$ comprises each set $\SetS \in \Col$ whose intersection with $\FreqdUniv$ was selected into $\SizeSol$.

\begin{algorithm}
    \DontPrintSemicolon
    \caption{\UGS}
    \label{alg:UGS}
    \KwIn{$\Col$: collection with maximum set size~$\MaxSizeCol$, $\SizeError \in (0,1)$: error parameter, $\SecSizeError \in (0,1)$: error parameter used in \UGF{}.}
    \KwOut{$\Sol\subseteq\Col$: subcollection satisfying $|\UniqCover(\Sol)| \geq \min( \SizeError, (1-\SizeError) \RatioUGF(\MaxSizeCol, \SecSizeError) ) |\Cover(\Col)| $ where $\RatioUGF(\MaxSizeCol, \SecSizeError) = {1}/\Har_{\lceil \MaxSizeCol (\MaxSizeCol-1) / \SecSizeError \rceil} - \SecSizeError$.}
    \While(\tcp*[f]{Make $\Col$ minimal}){\textnormal{$\SetT \gets \argmin_{\SetS \in \Col} |\SetS \cap \UniqCover(\Col)|$ satisfies $\SetT = \varnothing$}}{ \label{line:UGS reduce col loop}
        $\Col \gets \Col \setminus \{ \SetT \}$\; \label{line:UGS reduce col}
    }
    \If{$|\Col| \geq \SizeError |\Cover(\Col)|$}{ \label{line:UGS check size col}
        $\Sol \gets \Col$\; \label{line:UGS assign col to sol}
    }
    \Else(\tcp*[f]{Define instance on elements with freq.\,$\leq \MaxSize$}){
        $\FreqdUniv \gets \{ \Elem \in \Cover(\Col) : \Freq_{\Col}
        (\Elem) \leq \MaxSizeCol \}$\; \label{line:UGS define freq d elems}
        $\SizeCol \gets \{ \SetS \cap \FreqdUniv : \SetS \in \Col \}$\; \label{line:UGS define inter freq d}
        $\SizeSol \gets \UGF(\SizeCol, \SecSizeError)$\; \label{line:UGS call UGF}
        $\Sol \gets \varnothing$\; \label{line:UGS init sol}
        \For(\tcp*[f]{Construct returned solution}){$\SetS \cap \FreqdUniv \in \SizeSol$}{ \label{line:UGS add sets loop}
            $\Sol \gets \Sol \cup \{\SetS\}$ \; \label{line:UGS update sol}
        }
    }
    \KwRet{$\Sol$}\; \label{line:UGS return sol}
\end{algorithm}

\begin{theorem} \label{thm:UGS ratio}
Let $\Col$ denote a collection of sets, $\MaxSizeCol$ denote the maximum size of a set in $\Col$, $\SizeError \in (0,1)$ denote an error parameter, and $\SecSizeError \in (0,1)$ denote an error parameter passed to \UGF{}. Then \UGS{} returns a collection $\Sol\subseteq\Col$ satisfying
\begin{align}
    |\UniqCover(\Sol)| \geq \min ( \SizeError, (1-\SizeError)\RatioUGF(\MaxSizeCol,\SecSizeError) ) |\Cover(\Col)|\,, \label{eqn:UGS ratio}
\end{align}
where $\RatioUGF(\MaxSizeCol, \SecSizeError) = 1/(\Har_{\lceil \MaxSizeCol (\MaxSizeCol-1) / \SecSizeError \rceil}) - \SecSizeError$ denotes the (reciprocal) unique coverage ratio of \UGF{}.
Moreover, by assigning $\SizeError = (1/\RatioUGF(\MaxSizeCol,\SecSizeError)+1)^{-1}$, $\SecSizeError =  (\ConstA \ln \MaxSizeCol)^{-1} (2 \ln \MaxSizeCol + 2 \ln \ln \MaxSizeCol + \ConstB)^{-1}$, and appropriate constants to $\ConstA$ and $\ConstB$, we derive from Ineq.~\eqref{eqn:UGS ratio} the simpler inequality below.
\begin{align}
    |\UniqCover(\Sol)| \geq \frac{1}{2 \ln \MaxSizeCol + o(\log \MaxSizeCol)} |\Cover(\Col)|\,. \label{eqn:UGS simple ratio}
\end{align}
\end{theorem}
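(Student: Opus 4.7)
The plan is to analyze the two branches of \UGS{} separately and then balance the error parameter~$\SizeError$. First, I would establish an invariant of the while loop at \crefrange{line:UGS reduce col loop}{line:UGS reduce col}: each discarded~$\SetT$ satisfies $\SetT \cap \UniqCover(\Col) = \varnothing$, meaning every element of~$\SetT$ is already covered by some other set in~$\Col$, so $\Cover(\Col)$ is preserved across the loop. Upon termination, every remaining $\SetS \in \Col$ contributes at least one element to $\UniqCover(\Col)$, and since different sets contribute disjoint portions of the unique cover, we obtain the key bound $|\UniqCover(\Col)| \geq |\Col|$ with $|\Cover(\Col)|$ unchanged from the input.

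For the first branch (\cref{line:UGS check size col}), where $|\Col| \geq \SizeError |\Cover(\Col)|$, the invariant immediately gives $|\UniqCover(\Sol)| = |\UniqCover(\Col)| \geq |\Col| \geq \SizeError|\Cover(\Col)|$, yielding the first argument of the min in Ineq.~\eqref{eqn:UGS ratio}. For the second branch, where $|\Col| < \SizeError|\Cover(\Col)|$, I would bound $|\Cover(\Col) \setminus \FreqdUniv|$ by double counting: every element outside $\FreqdUniv$ has frequency at least $\MaxSizeCol + 1$, while $\sum_{x \in \Cover(\Col)} \Freq_\Col(x) = \sum_{\SetS \in \Col} |\SetS| \leq \MaxSizeCol |\Col|$, so $(\MaxSizeCol+1)|\Cover(\Col) \setminus \FreqdUniv| \leq \MaxSizeCol |\Col|$ and hence $|\Cover(\Col) \setminus \FreqdUniv| \leq |\Col| < \SizeError|\Cover(\Col)|$. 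Thus $|\Cover(\SizeCol)| = |\FreqdUniv| \geq (1-\SizeError)|\Cover(\Col)|$. By construction the maximum frequency in $\SizeCol$ is at most~$\MaxSizeCol$, so applying \cref{thm:UGF ratio} to $\SizeCol$ with error~$\SecSizeError$ gives $|\UniqCover(\SizeSol)| \geq \RatioUGF(\MaxSizeCol, \SecSizeError)(1-\SizeError)|\Cover(\Col)|$. Finally, because $\Sol$ consists of exactly those $\SetS \in \Col$ whose restriction $\SetS \cap \FreqdUniv$ lies in $\SizeSol$, any element uniquely covered in $\SizeSol$ is necessarily contained in exactly one such $\SetS$ and so is uniquely covered in $\Sol$; this gives $\UniqCover(\SizeSol) \subseteq \UniqCover(\Sol)$ and thus the second argument of the min.

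To derive Ineq.~\eqref{eqn:UGS simple ratio}, I would set $\SizeError = (1/\RatioUGF(\MaxSizeCol,\SecSizeError) + 1)^{-1}$, which equalizes the two arguments of the min at the common value $\RatioUGF(\MaxSizeCol,\SecSizeError)/(1+\RatioUGF(\MaxSizeCol,\SecSizeError))$. Choosing $\SecSizeError = (\ConstA \ln \MaxSizeCol)^{-1}(2\ln\MaxSizeCol + 2\ln\ln\MaxSizeCol + \ConstB)^{-1}$ with suitable constants~$\ConstA,\ConstB$ (as in \cref{thm:UGF ratio}), Ineq.~\eqref{eqn:UGF simple ratio} with $\MaxFreqCol$ replaced by $\MaxSizeCol$ yields $\RatioUGF(\MaxSizeCol,\SecSizeError) \geq 1/(2\ln\MaxSizeCol + o(\log \MaxSizeCol))$. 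Since the extra factor $1/(1+\RatioUGF)$ tends to~$1$, it is absorbed into the $o(\log \MaxSizeCol)$ term, producing the claimed bound.

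The main obstacle I expect is carefully tracking the two loop invariants (coverage preservation and unique-coverage lower bound after minimalisation), together with the double-counting step that translates the frequency threshold~$\MaxSizeCol$ into a cover loss of at most $|\Col|$; these are the places where a looser argument would cost a constant factor and would propagate into the final asymptotic. The containment $\UniqCover(\SizeSol) \subseteq \UniqCover(\Sol)$ and the algebra that simplifies $\RatioUGF/(1+\RatioUGF)$ to the stated $1/(2\ln\MaxSizeCol + o(\log \MaxSizeCol))$ rate are essentially routine once those invariants are in place.
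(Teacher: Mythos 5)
Your proposal is correct and follows essentially the same route as the paper: it makes the input minimal so that every remaining set uniquely covers at least one element (giving $|\UniqCover(\Col)| \geq |\Col|$), splits into the same two cases on $|\Col|$ versus $\SizeError|\Cover(\Col)|$, uses the same double-counting argument to lower-bound $|\FreqdUniv|$, invokes \cref{thm:UGF ratio} on the restricted instance, and balances $\SizeError$ by equalizing the two arguments of the min. You spell out two points the paper leaves implicit — that the while loop preserves $\Cover(\Col)$, and that $\UniqCover(\SizeSol) \subseteq \UniqCover(\Sol)$ — and your double count with threshold $\MaxSizeCol + 1$ is a marginally sharper phrasing of the same bound, but the substance and structure match.
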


\begin{proof}
We begin by proving Ineq.~\eqref{eqn:UGS ratio}. Discarding sets from~$\Col$ that uniquely cover no elements, as in \crefrange{line:UGS reduce col loop}{line:UGS reduce col}, does not affect $\Cover(\Col)$.
   So assume that $\Col$ is minimal, i.e., every $\SetS \in \Col$ uniquely covers at least one element. This means that $|\UniqCover(\Col)| \geq |\Col|$.

    Now one of two cases must hold:~(i) $|\Col| \geq \SizeError|\Cover(\Col)|$; or~(ii) $|\Col| < \SizeError|\Cover(\Col)|$. The final ratio in Ineq.~\eqref{eqn:UGS ratio} is the the worst-case ratio.
    
    In case~(i), \UGS{} returns the solution $\Sol = \Col$, by the success of the condition in \cref{line:UGS check size col}. Further, $|\UniqCover(\Sol)| = |\UniqCover(\Col)| \geq |\Col| \geq \SizeError|\Cover(\Col)|$ holds by the minimality of $\Col$. Thus,~$\Sol$ satisfies Ineq.~\eqref{eqn:UGS ratio} in case~(i).

    In case~(ii), we show that the set $\FreqdUniv$ of elements $\Elem \in \Cover(\Col)$ with $\Freq_{\Col}(\Elem) \leq \MaxSizeCol$, as in \cref{line:UGS define freq d elems}, satisfies $|\FreqdUniv| \geq (1-\SizeError) |\Cover(\Col)|$. We have
    \begin{align}
        \MaxSizeCol\, |\Univ \setminus \FreqdUniv| &< \sum_{\SetS \in \Col} |\SetS| &\text{each $\Elem \in \Univ \setminus \FreqdUniv$ is counted by $> \MaxSizeCol$ sets} \notag \\
        &\leq \MaxSizeCol\, |\Col| &\text{max. set size is $\MaxSizeCol$} \notag \\
        &< \MaxSizeCol\, \SizeError |\Cover(\Col)| &\text{case~(ii)} \notag \\
        |\Univ \setminus \FreqdUniv| &< \SizeError |\Cover(\Col)| \notag \\
        |\FreqdUniv| &> (1-\SizeError) |\Cover(\Col)|\,. \notag
    \end{align}

By the \cref{line:UGS define inter freq d} definition, $\Cover(\SizeCol) = \FreqdUniv$,  so $|\Cover(\SizeCol)| \geq (1-\SizeError)|\Cover(\Col)|$. Therefore, calling \UGF{} on $\SizeCol$ with maximum frequency $\MaxSizeCol$ and error $\SecSizeError$, as in \cref{line:UGS call UGF}, gives a collection $\SizeSol$ satisfying $|\UniqCover(\SizeSol)| \geq \RatioUGF(\MaxSizeCol,\SecSizeError) |\Cover(\SizeCol)| \geq \RatioUGF(\MaxSizeCol,\SecSizeError) (1-\SizeError) |\Cover(\Col)|$.
Likewise, by definition, in \crefrange{line:UGS init sol}{line:UGS update sol}, $|\UniqCover(\Sol)| \geq |\UniqCover(\SizeSol)|$. Thus, $\Sol$ satisfies Ineq.~\eqref{eqn:UGS ratio} in case (2), proving \cref{thm:UGS ratio}.

\proofsubparagraph{Ineq.~(\ref{eqn:UGS simple ratio}).}
We first maximize $\min(\SizeError, (1-\SizeError)\RatioUGF(\MaxSizeCol, \SecSizeError))$ with respect to $\SizeError$ by setting the two arguments as equal; this makes the RHS of Ineq.~\eqref{eqn:UGS ratio} equal to $\SizeError = (1/\RatioUGF(\MaxSizeCol, \SecSizeError)+1)^{-1}$. Then, by assigning $\SecSizeError = (\ConstA \ln \MaxSizeCol)^{-1} (2 \ln \MaxSizeCol + 2 \ln \ln \MaxSizeCol + \ConstB)^{-1}$ and appropriate constants to $\ConstA$ and $\ConstB$, \UGF{}'s unique coverage ratio satisfies $\RatioUGF(\MaxSizeCol, \SecSizeError) \geq (2 \ln \MaxSizeCol + o(\log \MaxSizeCol))^{-1}$ as in \cref{thm:UGF ratio}. Further substituting this into the RHS of Ineq.~\eqref{eqn:UGS ratio} proves Ineq.~\eqref{eqn:UGS simple ratio}. This completes the proof of \cref{thm:UGS ratio}.
\end{proof}

\section{Space Lower Bound for a $(1.5 + o(1)) / (\ln \Card - 1)$-Approximation} \label{sec:lower bound}
In this section, we prove the following theorem:
\begin{theorem} \label{thm:lower bound adv}
Let $e^{2.5} \leq \Card \leq \NumSets$, $\SizeParam = \Card \log \NumSets + \log(\Card/0.05)$, and assume the universe size to be $\UnivSize = \Card(\Card-1)\sum_{\FreqInd=1}^{\Card}\left\lceil \SizeParam/\FreqInd \right\rceil $. Then every constant-pass randomized streaming algorithm
for \MUCprob{} that,
with probability at least~$0.95$,
has an approximation factor of $( 3/2 + 3 / \sqrt{2\Card} )/(\Har_\Card - 1)$,
requires~$\Omega(\NumSets/\Card^2)$ space.
\end{theorem}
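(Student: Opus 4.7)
The plan is a reduction from $\Card$-player Set Disjointness \Disjmk{} in the one-way communication model, whose total communication complexity is $\Omega(\NumSets/\Card)$ for an appropriate choice of ground set size. As foreshadowed by the technical overview, the objective is to build a hard instance of \MUCprob{} such that (a) in the \emph{YES} (pairwise disjoint) case of Disjointness there is a planted collection of $\Card$ sets with unique coverage at least $\SizeParam \Card^2(\Har_\Card - 1)$, and (b) in the \emph{NO} case, with probability at least $0.95$, every collection of at most $\Card$ sets has unique coverage at most $\SizeParam \Card^2(3/2 + 3/\sqrt{2\Card})$. A streaming algorithm with the claimed approximation factor would output a value exceeding the latter threshold iff we are in the YES case, hence decide Disjointness.

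\textbf{Instance construction.} I would partition the universe into $\Card(\Card-1)$ disjoint \emph{blocks}, indexed by ordered pairs $(i,j)$ of distinct players. Within each block, for every rank $\FreqInd \in [\Card]$, I place $\lceil \SizeParam/\FreqInd \rceil$ elements whose intended frequency, among a specific planted collection of $\Card$ sets (one per player), is exactly $\FreqInd$; this matches the stipulated universe size. Each player contributes $\Theta(\NumSets/\Card)$ sets constructed from their Disjointness input, so the total set count is $\Theta(\NumSets)$. The gadget is engineered so that if the $\Card$ Disjointness subsets are pairwise disjoint then the planted $\Card$-set collection realises the intended rank structure throughout every block, whereas any common Disjointness element forces certain planted elements to merge into higher-frequency groupings inside the corresponding blocks, collapsing the unique-cover contribution at the ranks that matter most.

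\textbf{Gap analysis.} For the YES case, the planted collection contributes each rank-$\FreqInd$ element to the unique cover of exactly one planted set, giving per-block unique coverage $\sum_{\FreqInd=1}^{\Card} \lceil \SizeParam/\FreqInd \rceil \geq \SizeParam \Har_\Card$; summing over the $\Card(\Card-1)$ blocks and subtracting a lower-order correction gives $\OptValue \geq \SizeParam \Card^2(\Har_\Card - 1)$. For the NO case, I would argue that for any collection of $\NumSel \leq \Card$ sets the dominant per-block unique coverage comes from rank-$1$ elements (at most $\SizeParam$) and rank-$2$ elements (at most $\SizeParam/2$, since each contributes to unique cover only when exactly one of its two owners is chosen), yielding a $3\SizeParam/2$ per-block bound with higher-rank contributions absorbed into the $3/\sqrt{2\Card}$ slack. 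The choice $\SizeParam = \Card \log \NumSets + \log(\Card/0.05)$ is tuned so that a Chernoff concentration combined with a union bound over the $\binom{\NumSets}{\Card} \leq 2^{\Card \log \NumSets}$ candidate collections succeeds with probability at least $0.95$.

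\textbf{Simulation and main obstacle.} Given a constant-pass streaming algorithm using space $\Space$, each of the $\Card$ players in turn feeds their $\Theta(\NumSets/\Card)$ sets into the algorithm and forwards the $\Space$-bit state, giving total one-way communication $O(\Space \Card)$. Combined with the $\Omega(\NumSets/\Card)$ Disjointness lower bound and the success probability $0.95 > 1/2 + \Omega(1)$, this yields $\Space = \Omega(\NumSets/\Card^2)$, as required. I expect the main obstacle to be the gadget design: ensuring simultaneously that (i) the planted collection's unique coverage is tight at $\SizeParam \Card^2(\Har_\Card - 1)$, (ii) no competing collection in the NO case exceeds $\SizeParam \Card^2 (3/2 + 3/\sqrt{2\Card})$ even after optimizing over mixed selections that cross many blocks, and (iii) the encoding is bit-efficient enough that the $\Omega(\NumSets/\Card)$ Disjointness bound propagates to the $\Omega(\NumSets/\Card^2)$ space bound rather than a weaker dependence on $\Card$; matching the $3/2$ constant exactly, rather than some worse value, seems to require the careful telescoping over rank classes sketched in the gap analysis above.
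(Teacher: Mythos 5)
Your high-level plan—reduce from $\Card$-player Set Disjointness, build a $\MUCprob{}$ instance where the two promise cases differ in optimal unique coverage by more than the target approximation factor, use concentration plus a union bound over $\binom{\NumSets}{\leq\Card}$ candidate collections, and simulate the streaming algorithm as a one-way protocol—matches the paper. However, your reduction has the directions reversed in a way that is not merely a labelling slip but a genuine flaw: you claim that the \emph{pairwise-disjoint} case of Disjointness realises the planted $\Card$-set collection with unique coverage $\geq \SizeParam\Card^2(\Har_\Card-1)$, and that a common element ``forces certain planted elements to merge,'' collapsing the unique coverage. In one-way communication, player $\PlayInd$ sees only $\SetD_\PlayInd$; the sets they stream cannot depend on whether the instances are globally disjoint or share an element. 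Consequently, the only mechanism by which all $\Card$ coordinated ``planted'' sets can simultaneously appear in the stream is if there is a common index $\IndStar$ that every player holds—which is precisely the \emph{intersecting} case. In the pairwise-disjoint case, each index $\StrInd$ belongs to at most one player, so no coordinated collection of $\Card$ sets indexed by the same $\StrInd$ can ever be streamed, and there is no way for the players to agree on which of their locally-held indices should serve as the planted one. The paper's construction exploits exactly this asymmetry: the sets $\SetS_\PlayInd^{\StrInd}$ are determined by public randomness alone (a sunflower for each $\StrInd$, with petals and a shared kernel inside each sub-universe $\Univ_\FreqInd$), and player $\PlayInd$ streams $\SetS_\PlayInd^{\StrInd}$ iff $\StrInd\in\SetD_\PlayInd$. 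The intersecting (YES) case yields the large unique coverage via the fully-present sunflower $\{\SetS_1^{\IndStar},\dots,\SetS_\Card^{\IndStar}\}$; the disjoint (NO) case yields only ``player-distinct'' collections whose unique coverage concentrates below $\SizeParam\Card^2(3/2+3/\sqrt{2\Card})$ by Hoeffding and a union bound. Your mechanism would require player $\PlayInd$'s set to change structure depending on whether an intersection exists, which no one-way protocol can do.

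A secondary gap: your gadget (blocks indexed by ordered pairs of players, elements with ``intended frequency exactly $\FreqInd$'' inside a planted collection) is not the paper's construction and you have not shown it realises the two bounds. In the paper, the index $\FreqInd$ of the sub-universe $\Univ_\FreqInd$ does not mean the element frequency is $\FreqInd$; rather each $\SetS_\PlayInd^\StrInd$ covers a $\FreqInd/\Card$ proportion of $\Univ_\FreqInd$, and inside the sunflower for fixed $\StrInd$ elements have frequency either $1$ (petals) or $\Card$ (kernel). Your heuristic that the $3/2$ constant comes from ``rank~$1$ plus rank~$2$'' also does not match the actual derivation: in the paper the bound in \cref{claim:finalize lemma} is $\Card^2(\SizeParam + \Card/(\NumSel+1)) + (\text{Hoeffding slack})$, and $1 + 1/(\NumSel+1)$ is maximised at $\NumSel=1$, giving $3/2$ after using $\Card\leq\SizeParam$; this is a single-set-coverage argument, not a rank decomposition. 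You correctly identified that the gadget design is the crux, but the proposal leaves it unbuilt and, as written, pointing in the wrong direction.
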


\subsection{High-Level Ideas of the Reduction}
\label{sec:reduction-overview}
Similar to other approaches~\cite{mcgregor2019better,mcgregor2021maximum}, we prove our space lower bound by reducing the problem of $\Card$-player Set Disjointness (with the unique intersection promise) in the one-way communication model, denoted by \Disj{}, to \MUCprob{} in the stream model.

\subparagraph*{Set Disjointness in the One-Way Communication Model.}
In the one-way communication model, players must take turns in some fixed order to send a message to the player next in order, i.e., the $\PlayInd^{\text{th}}$ player can only send a message to the $(\PlayInd+1)^{\text{th}}$ player. There can be $\Pass \geq 1$ rounds of communication, where a single round is completed once every player has taken their turn. The last player can send a message back to the $1^{\text{st}}$ player at the end of a round if there is a next round.

In an instance of \Disj{}, each player $\PlayInd \in [\Card]$ is given a set of integers $\SetD_{\PlayInd}\subseteq[\NumSets]$. Moreover, it is promised that only two kinds of instances can occur:

\begin{description}
    \item[NO instance.] All sets $\SetD_{\PlayInd}$ are pairwise disjoint.
    \item [YES instance.] There is a unique integer $\IndStar\in[\NumSets]$ such that, for all $\PlayInd\in[\Card]$, $\IndStar\in\SetD_{\PlayInd}$.
\end{description}

The goal then is for the $\Card^{\text{th}}$ player (in the final round) to correctly answer, with probability at least~$0.9$,  whether the given sets form a YES or NO instance.

The communication complexity of \Disj{} in the $\Pass$-round one-way communication model is $\Omega(\NumSets / \Card)$, even for randomized protocols and even when the players can use public randomness~\cite{Chakrabarti2003}. As there are $\leq \Pass \Card$ messages, every (randomized) protocol for \Disj{}, at least one of the messages has size $\Omega(\NumSets / (\Pass \Card^2))$ in the worst case.

\subparagraph*{Reduction Overview.}
Given an instance of \Disj{}, the main goal of the reduction, with parameter~$\SizeParam$, is for the players to construct an instance of \MUCprob{} in a stream such that if they were given a NO instance of \Disj{}, the optimal unique coverage is less than $\SizeParam \Card^2 ( 1.5 + o(1) )$ (with high probability); whereas if the players were given a YES instance of \Disj{}, the optimal unique coverage is at least $\SizeParam\Card^2(\Har_\Card-1)$.
The ratio of these optimal unique coverages is less than 
$(1.5 + o(1))/(\Har_\Card - 1)$, so the players can use a $(1.5 + o(1))/(\Har_\Card - 1)$-approximation streaming algorithm on the \MUCprob{} instance to distinguish between a NO and YES instance.
By a standard argument, this implies a protocol for \Disj{} which involves each player sending the memory of the streaming algorithm in a message to the next player. A constant-pass $O(\Space)$-space streaming algorithm implies a protocol with a maximum message size of $O(\Space)$ in constant rounds of communication where each pass of the streaming algorithm takes one round. Thus, a $(1.5 + o(1))/(\Har_\Card - 1)$-approximation streaming algorithm for \MUCprob{} requires $\Omega(\NumSets / \Card^2)$ space.

\subparagraph*{Intuition of \MUCprob{} Construction.}
Here, we give the intuition for constructing the streaming instance of \MUCprob{} that achieves the optimal unique coverages above, with details in the proof of \cref{thm:lower bound adv}.

Let the universe of the \MUCprob{} instance be $\Univ = \Univ_{1} \cup \dots \cup \Univ_{\Card}$, where $\Univ_{1}, \dots, \Univ_{\Card}$ are $\Card$ disjoint sub-universes such that $|\Univ_{\FreqInd}| = \Card(\Card - 1)\lceil \SizeParam/\FreqInd \rceil$ (for a sufficiently large $\SizeParam$ as in \cref{thm:lower bound adv}). Then, for each $\StrInd \in [\NumSets]$, each player $\PlayInd$ constructs $\SetS_{\PlayInd}^{\StrInd} \subseteq \Univ$ such that $\SetS_{1}^{\StrInd}, \dots, \SetS_{\Card}^{\StrInd}$ satisfy the following properties:
    \begin{enumerate}
        \item Each set $\SetS_{\PlayInd}^{\StrInd}$  covers $\FreqInd/\Card$ proportion of $\Univ_\FreqInd$ for all $t$.  
        \item For each $\FreqInd \in [\Card]$, the sets $\SetS_{1}^{\StrInd}, \dots, \SetS_{\Card}^{\StrInd}$  partition a proportion, $\Prop_{\FreqInd} \in [0, 1]$, of $\Univ_\FreqInd$ while having a common intersection in the remaining $(1 - q_t)$ proportion of $\Univ_\FreqInd$. I.e., sets with identical~$i$ form a `sunflower', with their overlap concentrated in the sunflower's `kernel'.
        \item The choice of elements to be covered by $\SetS_{\PlayInd}^{\StrInd}$ are independent and uniform random with respect to $\StrInd \in [\NumSets]$.
    \end{enumerate}

The above construction ensures that (with high probability) every collection of $\NumSel \in [\Card]$ sets, $\SetS_{\PlayInd_1}^{\StrInd_1}, \dots, \SetS_{\PlayInd_\NumSel}^{\StrInd_\NumSel}$, with distinct $\StrInd_1, \dots, \StrInd_\NumSel$ has a unique coverage less than $\SizeParam \Card^2 ( 1.5 + o(1) )$ (with high probability); whereas a collection of $\NumSel = \Card$ sets with identical $\StrInd_1, \dots, \StrInd_\NumSel$ has a unique coverage of at least $\SizeParam\Card^2(\Har_\Card-1)$. Observe that $k \geq e^{2.5}$ ensures that $H_k - 1 > 1.5 + o(1)$.

Finally, to construct the streaming instance of \MUCprob{}, each player $\PlayInd$ inserts $\SetS_{\PlayInd}^{\StrInd}$ into the stream iff $\StrInd \in \SetD_{\PlayInd}$. This means that, given a NO instance, every set $\SetS_{\PlayInd}^{\StrInd}$ in the stream has a distinct $\StrInd$; whereas given a YES instance, there exists a collection of $\NumSel=\Card$ sets in the stream all indexed by $\IndStar$, the unique integer contained in all $\SetD_1, \dots, \SetD_\Card$. This results in the optimal unique coverages for the NO and YES instances as required.

\subsection{Proof of \cref{thm:lower bound adv}}
\label{sec:lb-proof}
    We show a reduction from \Disj{} to \MUCprob{}. Assume without loss of generality that the sets $\SetD_\PlayInd$ are padded so that $|\SetD_1 \cup \dots \cup \SetD_\Card | \geq \NumSets / 4 \geq \NumSets / \Card ^ 2$ holds for $\Card \geq 2$.

\subparagraph{Construction of \MUCprob{} Instance.}
    First, the players define the \MUCprob{} universe as $\Univ = \Univ_{1} \cup \dots \cup \Univ_{\Card}$, where $\Univ_{1}, \dots, \Univ_{\Card}$ are $\Card$ disjoint sub-universes such that $|\Univ_{\FreqInd}| = \Card(\Card - 1) \lceil \SizeParam/\FreqInd \rceil$. Observe that, as per the assumption in \cref{thm:lower bound adv}, we have $\UnivSize = |\Univ| = \sum_{\FreqInd=1}^{\Card}|\Univ_\FreqInd| = \Card(\Card - 1) \sum_{\FreqInd=1}^{\Card} \left\lceil \SizeParam/\FreqInd \right\rceil$. 

    The players now construct the \MUCprob{} sets so that they satisfy the properties given in the overview. For each $\StrInd \in [\NumSets]$ and $\FreqInd\in[\Card]$, the players define $\UniqUniv_{\FreqInd}^{\StrInd}\subseteq\Univ_{\FreqInd}$ as an independent and uniformly chosen random subset of size
    $\Prop_{\FreqInd} = (\Card-\FreqInd)/(\Card-1)$
    proportion of $\Univ_{\FreqInd}$;
    they then independently and uniformly-at-random partition $\UniqUniv_{\FreqInd}^{\StrInd}$ into $\Card$ equally sized sets, $\Part_{\FreqInd, 1}^{\StrInd}, \dots, \Part_{\FreqInd, \Card}^{\StrInd}$; the players agree on all of these choices using public randomness. For example, the players obtain a common random permutation of~$U_t$ and pick the corresponding parts in order.
    Note that $\UniqUniv_{\FreqInd}^{\StrInd}$ can be divided into $\Card$ equal sets since $|\UniqUniv_{\FreqInd}^{\StrInd}|/\Card$ is an integer, viz.
    \begin{align*}
        \frac{|\UniqUniv_{\FreqInd}^{\StrInd}|}{\Card}
        &= \frac{\Prop_{\FreqInd}|\Univ_{\FreqInd}|}{\Card} =
        \frac{(\Card-\FreqInd)\Card (\Card-1)}{\Card(\Card-1)} \left\lceil \frac{\SizeParam}{\FreqInd} \right\rceil
        = (\Card-\FreqInd)\left\lceil \frac{\SizeParam}{\FreqInd} \right\rceil.
    \end{align*}
    Then, for each $\StrInd \in [\NumSets]$, each player $\PlayInd$ defines their set $\SetS_{\PlayInd}^{\StrInd}$ such that, for each $\FreqInd \in [\Card]$, it covers the $\PlayInd^{\text{th}}$ set in the partition of $\UniqUniv_{\FreqInd}^{\StrInd}$, namely $\Part_{\FreqInd, \PlayInd}^{\StrInd}$; and it covers all of $\Univ_{\FreqInd} \setminus \UniqUniv_{\FreqInd}^{\StrInd}$. More precisely,
    \begin{equation*}
    \SetS_{\PlayInd}^{\StrInd} = \bigcup_{\FreqInd=1}^{\Card} \left[ \Part_{\FreqInd, \PlayInd}^{\StrInd} \cup (\Univ_{\FreqInd} \setminus \UniqUniv_{\FreqInd}^{\StrInd}) \right].
    \end{equation*}
    Observe \cref{claim:prop of univ t covered}, which we use in \cref{claim:exp uniq cover of dist i in t} later.
    \begin{claim} \label{claim:prop of univ t covered}
        For each $\StrInd \in [\NumSets]$, $\PlayInd \in [\Card]$, and $\FreqInd \in [\Card]$, $\SetS_{\PlayInd}^{\StrInd}$ covers $\FreqInd/\Card$ proportion of $\Univ_{\FreqInd}$.
    \end{claim}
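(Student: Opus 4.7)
\textbf{Proof plan for Claim~\ref{claim:prop of univ t covered}.} My plan is a direct counting argument from the construction. Fix $\StrInd \in [\NumSets]$, $\PlayInd \in [\Card]$, and $\FreqInd \in [\Card]$. From the definition
\[
\SetS_{\PlayInd}^{\StrInd} = \bigcup_{\FreqIndA=1}^{\Card} \left[ \Part_{\FreqIndA, \PlayInd}^{\StrInd} \cup (\Univ_{\FreqIndA} \setminus \UniqUniv_{\FreqIndA}^{\StrInd}) \right],
\]
and since the sub-universes $\Univ_{1}, \dots, \Univ_{\Card}$ are pairwise disjoint, the portion of $\SetS_{\PlayInd}^{\StrInd}$ that lies in $\Univ_{\FreqInd}$ is exactly $\Part_{\FreqInd, \PlayInd}^{\StrInd} \cup (\Univ_{\FreqInd} \setminus \UniqUniv_{\FreqInd}^{\StrInd})$. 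The two sets in this union are disjoint because $\Part_{\FreqInd, \PlayInd}^{\StrInd} \subseteq \UniqUniv_{\FreqInd}^{\StrInd}$, so their sizes add.

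Next, I would substitute the known sizes of each piece. The set $\Part_{\FreqInd, \PlayInd}^{\StrInd}$ is one of $\Card$ equally sized parts of $\UniqUniv_{\FreqInd}^{\StrInd}$, and $|\UniqUniv_{\FreqInd}^{\StrInd}| = \Prop_{\FreqInd} |\Univ_{\FreqInd}|$ with $\Prop_{\FreqInd} = (\Card - \FreqInd)/(\Card - 1)$. Thus
\[
|\Part_{\FreqInd, \PlayInd}^{\StrInd}| + |\Univ_{\FreqInd} \setminus \UniqUniv_{\FreqInd}^{\StrInd}| = \frac{\Prop_{\FreqInd}|\Univ_{\FreqInd}|}{\Card} + (1 - \Prop_{\FreqInd})|\Univ_{\FreqInd}| = \left( 1 - \frac{\Card - 1}{\Card} \Prop_{\FreqInd} \right) |\Univ_{\FreqInd}|.
\]

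Finally, I would simplify this coefficient: substituting $\Prop_{\FreqInd} = (\Card - \FreqInd)/(\Card - 1)$ gives $(\Card - 1)\Prop_{\FreqInd}/\Card = (\Card - \FreqInd)/\Card$, so the coefficient equals $1 - (\Card - \FreqInd)/\Card = \FreqInd/\Card$, which is the desired proportion. I do not anticipate a real obstacle here, since the claim is essentially a bookkeeping identity that is baked into the definition of $\Prop_{\FreqInd}$; the only thing worth noting is that the integrality of $|\UniqUniv_{\FreqInd}^{\StrInd}|/\Card$ was already justified in the construction, so the partition into $\Card$ equal parts is well-defined and the computation above is exact rather than approximate.
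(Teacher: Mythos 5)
Your proposal is correct and follows essentially the same route as the paper's proof: decompose $\SetS_{\PlayInd}^{\StrInd} \cap \Univ_{\FreqInd}$ into the disjoint pieces $\Part_{\FreqInd,\PlayInd}^{\StrInd}$ and $\Univ_{\FreqInd} \setminus \UniqUniv_{\FreqInd}^{\StrInd}$, substitute $|\Part_{\FreqInd,\PlayInd}^{\StrInd}| = \Prop_{\FreqInd}|\Univ_{\FreqInd}|/\Card$ and $|\Univ_{\FreqInd}\setminus\UniqUniv_{\FreqInd}^{\StrInd}| = (1-\Prop_{\FreqInd})|\Univ_{\FreqInd}|$, then simplify with $\Prop_{\FreqInd} = (\Card-\FreqInd)/(\Card-1)$. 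The only differences are cosmetic (you factor the coefficient as $1 - \tfrac{\Card-1}{\Card}\Prop_{\FreqInd}$ before substituting, and you explicitly flag the disjointness of the two pieces and the integrality of $|\UniqUniv_{\FreqInd}^{\StrInd}|/\Card$, both of which the paper treats implicitly).
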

    
    \begin{claimproof}
        The proportion of $\Univ_{\FreqInd}$ that $\SetS_{\PlayInd}^{\StrInd}$ covers is $|\SetS_{\PlayInd}^{\StrInd} \cap \Univ_{\FreqInd}|/|\Univ_{\FreqInd}|$, which we prove to be $\FreqInd / \Card$ below.
        \begin{align*}
            \frac{|\SetS_{\PlayInd}^{\StrInd} \cap \Univ_{\FreqInd}|}{|\Univ_{\FreqInd}|}
            &= \frac{|\Part_{\FreqInd,\PlayInd}^{\StrInd}|}{|\Univ_{\FreqInd}|} + \frac{|\Univ_{\FreqInd} \setminus \UniqUniv_{\FreqInd}^{\StrInd}|}{|\Univ_{\FreqInd}|}
            = \frac{|\UniqUniv_{\FreqInd}^{\StrInd}|}{\Card|\Univ_{\FreqInd}|} + \frac{|\Univ_{\FreqInd} \setminus \UniqUniv_{\FreqInd}^{\StrInd}|}{|\Univ_{\FreqInd}|}
            = \frac{\Prop_{\FreqInd}}{\Card} + 1 - \Prop_{\FreqInd} = \frac{\Card-\FreqInd}{\Card(\Card-1)} + 1 - \frac{\Card-\FreqInd}{\Card-1} \\
            &= \frac{\Card-\FreqInd}{\Card(\Card-1)} + \frac{\FreqInd-1}{\Card-1}
            = \frac{\Card-\FreqInd + \Card\FreqInd-\Card}{\Card(\Card-1)}
            = \frac{\Card\FreqInd - \FreqInd}{\Card(\Card-1)}
            = \frac{\FreqInd(\Card-1)}{\Card(\Card-1)}
            = \frac{\FreqInd}{\Card}\,.
            \tag*{\claimqedhere}
        \end{align*}
    \end{claimproof}
    
    To complete the construction, each player $\PlayInd$ inserts set $\SetS_{\PlayInd}^{\StrInd}$ into the stream iff $\StrInd \in \SetD_{\PlayInd}$. There are $\Theta(\NumSets)$ sets inserted into the stream since $\NumSets / 4 \leq |\SetD_1 \cup \dots \cup \SetD_\Card | \leq \NumSets $.

   \subparagraph{Upper Bound on Optimal Unique Coverage in a NO Instance.}
    Next, we prove \cref{lemma:uniq cover of dist i}, which implies the required upper bound on the optimal unique coverage in a NO instance. We say that a collection $\DistCol = \{ \SetS_{\PlayInd_{1}}^{\StrInd_{1}}, \dots, \SetS_{\PlayInd_{\NumSel}}^{\StrInd_{\NumSel}} \}$ with distinct $\StrInd_{1}, \dots, \StrInd_{\NumSel}$ is a \emph{player-distinct collection}; we also say that $\DistCol$ is \emph{feasible} if it contains at most $k$ sets. Note that in the \MUCprob{} instance generated from a NO instance of \Disj{}, every feasible solution is a player-distinct collection. Thus, it suffices to upper bound the unique coverage of every feasible player-distinct collection.
    \begin{lemma} \label{lemma:uniq cover of dist i}
        With probability at least~$0.95$, every feasible player-distinct collection $\DistCol$ satisfies
            $|\UniqCover(\DistCol)| <
            \SizeParam \Card^2
            ( {3}/{2} + {3}/\sqrt{2\Card} )$.
    \end{lemma}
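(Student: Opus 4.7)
The plan is to fix any feasible player-distinct collection $\DistCol = \{\SetS_{\PlayInd_1}^{\StrInd_1},\dots,\SetS_{\PlayInd_\NumSel}^{\StrInd_\NumSel}\}$ (with $\NumSel \leq \Card$ and distinct $\StrInd_j$'s), upper bound $\E[|\UniqCover(\DistCol)|]$, establish exponential concentration around this mean, and then union-bound over all such $\DistCol$.

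\textbf{Expectation.} The randomness driving different $\StrInd$'s is mutually independent, and by \cref{claim:prop of univ t covered} each fixed $\Elem \in \Univ_\FreqInd$ lies in $\SetS_{\PlayInd_j}^{\StrInd_j}$ with marginal probability $\FreqInd/\Card$. Hence the $\NumSel$ containment events across $j$ are independent $\text{Bernoulli}(\FreqInd/\Card)$, and
\[
\Pr[\Elem\text{ uniquely covered by }\DistCol] = \NumSel(\FreqInd/\Card)(1-\FreqInd/\Card)^{\NumSel-1}.
\]
Summing over $\Elem \in \Univ_\FreqInd$ and $\FreqInd \in [\Card]$ with $|\Univ_\FreqInd|=\Card(\Card-1)\lceil \SizeParam/\FreqInd\rceil$, then using $\lceil\SizeParam/\FreqInd\rceil \leq \SizeParam/\FreqInd + 1$ and the Riemann-sum estimates $\sum_{\FreqInd=1}^{\Card} (1-\FreqInd/\Card)^{\NumSel-1} \leq \Card/\NumSel$ and $\sum_{\FreqInd=1}^{\Card} (\FreqInd/\Card)(1-\FreqInd/\Card)^{\NumSel-1} \leq \Card/(\NumSel(\NumSel+1))$ (with the exact evaluations at $\NumSel=1$), I would obtain $\E[|\UniqCover(\DistCol)|] \leq \Card^2\SizeParam + \Card^3/(\NumSel+1)$. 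The worst case is $\NumSel=1$, and since $\SizeParam \geq \Card$ follows from the theorem's choice of $\SizeParam$, we get $\E[|\UniqCover(\DistCol)|] \leq (3/2)\SizeParam\Card^2$.

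\textbf{Concentration.} A key structural observation is that $\SetS_{\PlayInd_j}^{\StrInd_j} \cap \Univ_\FreqInd$ is a uniformly random subset of $\Univ_\FreqInd$ of the fixed size $(\FreqInd/\Card)|\Univ_\FreqInd|$ (by symmetry of the sunflower construction), and these subsets are mutually independent across all pairs $(j,\FreqInd)$. Writing $X := |\UniqCover(\DistCol)| = \sum_{\FreqInd=1}^{\Card} X_\FreqInd$ with $X_\FreqInd$ the number of uniquely covered elements in $\Univ_\FreqInd$, the $X_\FreqInd$'s are mutually independent. Set $\mu := \E[X] \leq (3/2)\Card^2\SizeParam$ and $\lambda := (3/\sqrt{2\Card})\Card^2\SizeParam$; the relative deviation $\delta := \lambda/\mu = \sqrt{2/\Card} \leq 1$ satisfies $\delta^2\mu/3 \geq \Card\SizeParam$, so a Chernoff-style upper tail would give $\Pr[X \geq \mu + \lambda] \leq \exp(-\Card\SizeParam)$. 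With $\SizeParam = \Card\log\NumSets + \log(\Card/0.05)$, this evaluates to at most $\NumSets^{-\Card^2}\,(0.05/\Card)^{\Card}$.

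\textbf{Union bound and main obstacle.} The number of feasible player-distinct collections is at most $\sum_{\NumSel=1}^{\Card} \binom{\NumSets}{\NumSel}\Card^{\NumSel} \leq \Card(\NumSets\Card)^{\Card}$; multiplying by the per-collection tail gives total failure probability at most $\Card \cdot \NumSets^{\Card-\Card^2}\cdot 0.05^{\Card} \leq 0.05$, as desired. The principal technical obstacle is justifying the Chernoff-type tail: the unique-coverage indicator $I_\Elem$ is a non-monotone function of the underlying $A_{\Elem,j} := [\Elem \in \SetS_{\PlayInd_j}^{\StrInd_j}]$, so the closure of negative association under monotone maps does not directly deliver Chernoff for $\sum_\Elem I_\Elem$ — even though, within each stratum, the $A_{\Elem,j}$'s are independent across $j$ and negatively associated across $\Elem$ for fixed $j$. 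The route I would pursue is to use independence across $\FreqInd$ to reduce to per-stratum tails and then, within each stratum, couple the fixed-size uniform subsets to independent $\text{Bernoulli}(\FreqInd/\Card)$ selections (Poissonization), absorbing the $O(\sqrt{|\Univ_\FreqInd|})$ coupling slack into the $\Omega(\Card^{3/2}\SizeParam)$-scale slack in $\lambda$; Chernoff then applies cleanly to the resulting sum of fully independent Bernoulli indicators.
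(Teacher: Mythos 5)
Your proposal has the right high-level structure (expectation bound, concentration, union bound) and the expectation calculation is essentially correct, but the concentration step has a genuine quantitative gap.

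The paper proves this lemma in a tighter and quite different way: it bounds $\E[|\UniqCover(\DistCol)\cap\Univ_\FreqInd|]$ \emph{per sub-universe} (its \cref{claim:exp uniq cover of dist i in t}), then applies Hoeffding's inequality separately to each $\sum_{\Elem\in\Univ_\FreqInd}\RandVar_{\Elem,\DistCol}$ (its \cref{claim:uniq cover of dist i in t}), with per-stratum deviation $\Bound_\FreqInd=\Card(\SizeParam+\FreqInd)/\sqrt{2\FreqInd}$, justifying the use of Hoeffding on these \emph{dependent} indicators by the exchangeability of $(\RandVar_{\Elem,\DistCol})_{\Elem\in\Univ_\FreqInd}$ and a citation to the exchangeability literature. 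The paper then sums the per-stratum bounds. This completely sidesteps any coupling or Poissonization.

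The gap in your route is the claimed $O(\sqrt{|\Univ_\FreqInd|})$ coupling slack. A coupling between a fixed-size uniform subset of $\Univ_\FreqInd$ of cardinality $(\FreqInd/\Card)|\Univ_\FreqInd|$ and a $\text{Bernoulli}(\FreqInd/\Card)$ subset has symmetric difference concentrated at scale $\sqrt{|\Univ_\FreqInd|(\FreqInd/\Card)(1-\FreqInd/\Card)} \approx \sqrt{\Card\SizeParam}$, and each of the $\NumSel$ sets must be coupled. Moreover, because the event must be union-bounded over $\Theta(\NumSets^\Card\Card^\Card)$ collections, each individual coupling must succeed with failure probability $\lesssim\exp(-\SizeParam)$, which inflates the per-set slack by another factor of $\Theta(\sqrt{\SizeParam})$. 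The total per-stratum slack is therefore $\Omega(\NumSel\sqrt{\Card}\SizeParam)$, which for $\NumSel=\Theta(\Card)$ is $\Omega(\Card^{3/2}\SizeParam)$ \emph{per stratum}; summed over the $\Card$ strata it is $\Omega(\Card^{5/2}\SizeParam)$, exceeding your deviation budget $\lambda=\Theta(\Card^{3/2}\SizeParam)$ by a factor of $\Card$. Equivalently, de-Poissonizing by conditioning on the $\NumSel$ subset sizes costs a multiplicative $\Theta(|\Univ_\FreqInd|^{\NumSel/2})$ in the tail probability, which overwhelms the $\exp(-\SizeParam)$ tail once $\Card\gtrsim\log\NumSets$. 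So the step you flagged as the ``principal technical obstacle'' is in fact fatal as proposed; the paper's per-stratum Hoeffding-for-exchangeables argument (or, alternatively, a negative-association argument after splitting the non-monotone unique-coverage indicator as the difference of the monotone $\geq 1$-coverage and $\geq 2$-coverage indicators) is the key missing ingredient.
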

    \begin{proof}   
    First, we upper bound  $\E[|\UniqCover(\DistCol) \cap \Univ_t|]$ for every feasible player-distinct collection,~$\DistCol$, and for every sub-universe~$U_t$ (\cref{claim:exp uniq cover of dist i in t}), then we use Hoeffding's inequality to prove an upper bound on $|\UniqCover(\DistCol) \cap \Univ_t|$ that with high probability, holds simultaneously for every~$\DistCol$ and~$U_t$ (\cref{claim:uniq cover of dist i in t}). Summing the bound in \cref{claim:uniq cover of dist i in t} over all~$\Card$ sub-universes suffices. 
    
    For a feasible player-distinct collection $\DistCol$, let $\RandVar_{\Elem, \DistCol}$ be the random variable such that $\RandVar_{\Elem, \DistCol} = 1$ if element $\Elem \in \UniqCover( \DistCol )$, and $\RandVar_{\Elem, \DistCol} = 0$ otherwise. This means that for each sub-universe $\Univ_t$, we have
    \begin{equation}
    |\UniqCover(\DistCol) \cap \Univ_\FreqInd| =  \sum_{\Elem \in \Univ_{\FreqInd}} \RandVar_{\Elem, \DistCol}; \text{ and so }
    |\UniqCover(\DistCol)| = \sum_{\FreqInd = 1}^{\Card} \sum_{\Elem \in \Univ_{\FreqInd}} \RandVar_{\Elem, \DistCol}\,.
    \label{eqn:indicators}
    \end{equation}
    
    \begin{claim} \label{claim:exp uniq cover of dist i in t}
        For each feasible player-distinct $\DistCol$ of $\NumSel \in [\Card]$ sets and each sub-universe $\Univ_\FreqInd$, it holds that
        $\E\left[|\UniqCover(\DistCol) \cap \Univ_t|\right]
            \leq \Card (\SizeParam + \FreqInd) \NumSel \left(1-{\FreqInd}/{\Card}\right)^{\NumSel-1}$.
    \end{claim}
    \begin{claimproof}

        Consider a sub-universe $\Univ_{\FreqInd}$. Each set $\SetS_{\PlayInd}^{\StrInd} \in \DistCol$ covers a $\FreqInd / \Card$ proportion of $\Univ_{\FreqInd}$ uniformly-at-random by \cref{claim:prop of univ t covered}, and the elements in $\SetS_{\PlayInd}^{\StrInd}$ are independent of the elements in the other sets in $\DistCol$ by the independent choices of $\UniqUniv_{\FreqInd}^{\StrInd}$ and $\Part_{\FreqInd, \PlayInd}^{\StrInd}$. So for each element $\Elem \in \Univ_{\FreqInd}$, we have the following probability:
        \begin{align}
            \Pr[\RandVar_{\Elem, \DistCol} = 1]
            &= \Pr\left[ \Elem \ \text{is uniquely covered by} \ \DistCol \right] \notag\\
            &= \NumSel \Pr\left[ \Elem  \ \text{is uniquely covered by a} \ \SetS_{\PlayInd}^{\StrInd}\in\DistCol \right] \notag\\
            &= \NumSel \frac{\FreqInd}{\Card} \left(1-\frac{\FreqInd}{\Card}\right)^{\NumSel-1}\,.
            \label{eqn:unique t k}
        \end{align}
        Thus, for each $\FreqInd\in[\Card]$, we upper bound $\E\left[|\UniqCover(\DistCol) \cap \Univ_t|\right] = \E[ \sum_{\Elem \in \Univ_{\FreqInd}} \RandVar_{\Elem, \DistCol} ]$ (recall~\cref{eqn:indicators}).
        \begin{align*}
            \E\left[ \sum_{\Elem \in \Univ_{\FreqInd}} \RandVar_{\Elem, \DistCol} \right]
            &= \sum_{\Elem \in \Univ_{\FreqInd}} \Pr\left[ \RandVar_{\Elem, \DistCol}=1 \right] \\
            &= |\Univ_{\FreqInd}| \frac{\NumSel \FreqInd}{\Card} \left(1-\frac{\FreqInd}{\Card}\right)^{\NumSel-1} & \text{\cref{eqn:unique t k}}\\
            &\leq \Card(\Card-1) \frac{\SizeParam + \FreqInd}{\FreqInd} \frac{\NumSel \FreqInd}{\Card} \left(1-\frac{\FreqInd}{\Card}\right)^{\NumSel-1} &\text{$|\Univ_{\FreqInd}| \leq \Card(\Card-1) \frac{\SizeParam + \FreqInd}{\FreqInd}$} \\
            &\leq \Card (\SizeParam + \FreqInd) \NumSel \left(1-\frac{\FreqInd}{\Card}\right)^{\NumSel-1}\,. \tag*{\claimqedhere}
        \end{align*}
        \end{claimproof}
        
        \begin{claim} \label{claim:uniq cover of dist i in t}
            With probability at least $0.95$, for every feasible player-distinct $\DistCol$ of $\NumSel \in [\Card]$ sets and every sub-universe $\Univ_\FreqInd$, it holds that
            \begin{align*}
                |\UniqCover(\DistCol) \cap \Univ_t|
                < \Card (\SizeParam + \FreqInd) \NumSel \left(1-\frac{\FreqInd}{\Card}\right)^{\NumSel-1} + \frac{\Card (\SizeParam+\FreqInd)}{(2\FreqInd)^{1/2}}\,.
            \end{align*}
          
        \end{claim}
        \begin{claimproof} 
            Recall from~\cref{eqn:indicators} that $|\UniqCover(\DistCol) \cap \Univ_t| = \sum_{\Elem \in \Univ_{\FreqInd}} \RandVar_{\Elem, \DistCol}$. Note that it is more convenient to upper bound $\sum_{\Elem\in\Univ_{\FreqInd}} \RandVar_{\Elem, \DistCol}$ separately for each $\FreqInd \in [\Card]$ rather than to upper bound $\sum_{\Elem\in\Univ} \RandVar_{\Elem, \DistCol}$ directly. This is because the random variables $\RandVar_{\Elem, \DistCol}$ are exchangeable over all $\Elem \in \Univ_{\FreqInd}$ but not over all $\Elem \in \Univ$, and exchangeability allows us to use Hoeffding's inequality on $\sum_{\Elem \in \Univ_{\FreqInd}}\RandVar_{\Elem, \DistCol}$ despite the $\RandVar_{\Elem, \DistCol}$ being dependent random variables (see \cite{Ramdas2023} for example). Let $\Mean_{\FreqInd,\NumSel} = \E[\sum_{\Elem \in \Univ_{\FreqInd}} \RandVar_{\Elem, \DistCol}]$ and $\Bound_{\FreqInd} = \Card (\SizeParam+\FreqInd)/(2\FreqInd)^{1/2}$. For each choice of $\DistCol$ and $\FreqInd \in [\Card]$, we upper bound $\Pr[\sum_{\Elem \in \Univ_{\FreqInd}} \RandVar_{\Elem, \DistCol} \geq \Mean_{\FreqInd,\NumSel} + \Bound_{\FreqInd}]$ using Hoeffding's inequality below.
            \begin{align*}
                \Pr\left[ \sum_{\Elem \in \Univ_{\FreqInd}} \RandVar_{\Elem, \DistCol} \geq \Mean_{\FreqInd,\NumSel} + \Bound_{\FreqInd} \right] 
                &\leq \exp\left(-\frac{2 \Bound_{\FreqInd}^2}{|\Univ_{\FreqInd}|}\right) \\
                &\leq \exp\left( -\frac{2 \Bound_{\FreqInd}^2 \FreqInd}{\Card^2(\SizeParam+\FreqInd)} \right) &\text{$|\Univ_{\FreqInd}| \leq \Card^2 \frac{(\SizeParam+\FreqInd)}{\FreqInd}$} \\
                &= \exp\left( -\frac{2 \Card^2 (\SizeParam+\FreqInd)^2 \FreqInd}{2 \Card^2 (\SizeParam+\FreqInd) \FreqInd} \right) &\text{$\Bound_{\FreqInd}=\frac{\Card (\SizeParam+\FreqInd)}{(2\FreqInd)^{1/2}}$} \\
                &\leq \exp (-\SizeParam) \\
                &= \exp \left( -\log \left(\frac{\NumSets^\Card \Card}{0.05} \right) \right) &\text{$\SizeParam = \Card \log \NumSets + \log\left(\frac{\Card}{0.05}\right)$} \\
                &= \frac{0.05}{\NumSets^{\Card} \Card}\,.
            \end{align*}
            We bound the probability that $\sum_{\Elem \in \Univ_{\FreqInd}} \RandVar_{\Elem, \DistCol} \geq \Mean_{\FreqInd,\NumSel} + \Bound_{\FreqInd}$ occurs for at least one choice of $\DistCol$ and $\FreqInd \in [\Card]$ by $0.05$ as follows: there are at most $\NumSets^{\Card}\Card$ choices $\DistCol$ and $\FreqInd\in[\Card]$ since there are at most $\sum_{\NumSel=1}^{\Card} \binom{\NumSets}{\NumSel} \leq \NumSets^{\Card}$ choices of $\NumSel\in[\Card]$ sets from a stream of length $\NumSets$, and there are $\Card$ choices of $\FreqInd$. Thus, by a union bound, the probability that $\sum_{\Elem \in \Univ_{\FreqInd}} \RandVar_{\Elem, \DistCol} \geq \Mean_{\FreqInd,\NumSel} + \Bound_{\FreqInd}$ occurs for at least one choice of $\DistCol$ and $\FreqInd \in [\Card]$ is at most $0.05$, so $\sum_{\Elem \in \Univ_{\FreqInd}} \RandVar_{\Elem, \DistCol} < \Mean_{\FreqInd,\NumSel} + \Bound_{\FreqInd}$ holds for every $\DistCol$ and every $\FreqInd \in [\Card]$ with probability at least $0.95$.
            
            Since $\Mean_{\FreqInd,\NumSel} = \E[\sum_{\Elem \in \Univ_{\FreqInd}} \RandVar_{\Elem, \DistCol}] \leq \Card (\SizeParam + \FreqInd) \NumSel \left(1-\FreqInd/\Card \right)^{\NumSel-1}$ by \cref{claim:exp uniq cover of dist i in t}, the required upper bound on $\sum_{\Elem \in \Univ_{\FreqInd}} \RandVar_{\Elem, \DistCol}$ holds with probability at least $0.95$, proving \cref{claim:uniq cover of dist i in t}.
        \end{claimproof}

        Finally, summing the inequality of \cref{claim:uniq cover of dist i in t} over the $k$ sub-universes gives an upper bound on $|\UniqCover( \DistCol )|$ that holds simultaneously for every feasible player-distinct collection $\DistCol$ with high probability.
        We finalize the proof of \cref{lemma:uniq cover of dist i}
        in \cref{claim:finalize lemma}.
        \begin{claim}
        With probability at least~$0.95$,
            $|\UniqCover(\DistCol)| <
            \SizeParam \Card^2
            \left( {3}/{2} + {3}/\sqrt{2\Card} \right)$.
            \label{claim:finalize lemma}
        \qedhere
        \end{claim}
        \begin{claimproof}
        \cref{claim:uniq cover of dist i in t} implies that with probability at least $0.95$,
        \begin{align}
            |\UniqCover( \DistCol )|
            &= \sum_{\FreqInd = 1}^{\Card} |\UniqCover(\DistCol) \cap \Univ_t|  \notag \\
            &< \sum_{\FreqInd = 1}^{\Card} \left( \Card (\SizeParam + \FreqInd) \NumSel \left(1-\frac{\FreqInd}{\Card}\right)^{\NumSel-1} + \frac{\Card (\SizeParam+\FreqInd)}{(2\FreqInd)^{1/2}} \right) &\text{\cref{claim:uniq cover of dist i in t}} \notag \\
            &= \Card \NumSel \sum_{\FreqInd = 1}^{\Card} (\SizeParam + \FreqInd) \left(\frac{\Card-\FreqInd}{\Card}\right)^{\NumSel-1} + \frac{\Card}{2^{1/2}} \sum_{\FreqInd = 1}^{\Card} \left( \frac{\SizeParam}{\FreqInd^{1/2}} + \FreqInd^{1/2} \right) \notag \\
            &\leq \Card \frac{\NumSel}{\Card^{\NumSel-1}} \sum_{\FreqInd = 1}^{\Card} (\SizeParam + \FreqInd) (\Card-\FreqInd)^{\NumSel-1} + \frac{\Card}{2^{1/2}} \sum_{\FreqInd = 1}^{\Card} \left( \frac{\SizeParam}{\FreqInd^{1/2}} + \Card^{1/2} \right) &\text{$\FreqInd \leq \Card$} \notag \\
            &= \Card \frac{\NumSel}{\Card^{\NumSel-1}} \sum_{\FreqIndA = 0}^{\Card-1} (\SizeParam + \Card-\FreqIndA) \FreqIndA^{\NumSel-1} + \frac{\Card}{2^{1/2}} \left(\sum_{\FreqInd = 1}^{\Card} \frac{\SizeParam}{\FreqInd^{1/2}} + \Card^{3/2} \right) &\text{let $\FreqIndA=\Card-\FreqInd$}  \notag \\
            &\leq \Card \frac{\NumSel}{\Card^{\NumSel-1}} \int_{0}^{\Card} \left( (\SizeParam+\Card)\FreqIndA^{\NumSel-1} - \FreqIndA^{\NumSel} \right) d\FreqIndA + \frac{\Card}{2^{1/2}} \left( \int_{0}^{\Card} \frac{\SizeParam}{\FreqInd^{1/2}} \, d\FreqInd + \Card^{3/2} \right)  \notag \\
            &= \Card \frac{\NumSel}{\Card^{\NumSel-1}} \left( \frac{(\SizeParam+\Card)\Card^{\NumSel}}{\NumSel} - \frac{\Card^{\NumSel+1}}{\NumSel+1} \right) + \frac{\Card}{2^{1/2}} ( 2\SizeParam \Card^{1/2} + \Card^{3/2} ) \notag \\
            &= \Card \frac{\Card^\NumSel}{\Card^{\NumSel-1}} \left( \frac{(\SizeParam+\Card)\NumSel}{\NumSel} - \frac{\Card\NumSel}{\NumSel+1} \right) + \frac{\Card^{3/2}}{2^{1/2}} ( 2\SizeParam + \Card ) \notag \\
            &= \Card^2 \left( \SizeParam+\Card - \frac{\Card\NumSel}{\NumSel+1} \right) + \frac{\Card^{3/2}}{2^{1/2}} ( 2\SizeParam + \Card ) \notag \\
            &= \Card^2 \left( \SizeParam + \frac{\Card(\NumSel + 1) - \Card\NumSel}{\NumSel+1} \right) + \frac{\Card^{3/2}}{2^{1/2}} ( 2\SizeParam + \Card ) \notag \\
            &= \Card^2 \left( \SizeParam+\frac{\Card}{\NumSel+1} \right) + \frac{\Card^{3/2}}{2^{1/2}} ( 2\SizeParam + \Card ) \notag \\
            &\leq \Card^2 \left( \SizeParam + \frac{\SizeParam}{\NumSel+1} \right)+ \frac{\Card^{3/2}}{2^{1/2}} 3\SizeParam
            &\text{$\Card \leq \SizeParam $} \notag \\
            &= \SizeParam \Card^2 \left( 1 + \frac{1}{\NumSel+1} + \frac{3}{(2 \Card)^{1/2}}  \right) \notag \\
            &\leq \SizeParam \Card^2 \left( \frac{3}{2} + \frac{3}{\sqrt{2\Card}} \right). &\text{%
            max.\ at $\NumSel=1$} \tag*{\claimqedhere}
        \end{align}
        \end{claimproof}
    \end{proof}

    \subparagraph{Lower Bound on Optimal Unique Coverage in a YES Instance.}
     \cref{lemma:uniq cover of iden i} supports the required lower bound on the optimal unique coverage in a YES instance. 
    \begin{lemma} \label{lemma:uniq cover of iden i}
        For all $\StrInd$,
     collection $\IdenCol = \{ \SetS_{1}^{\StrInd}, \dots, \SetS_{\Card}^{\StrInd} \}$ satisfies
            $|\UniqCover(\IdenCol)| \geq \SizeParam \Card^2 (\Har_\Card - 1).$
    \end{lemma}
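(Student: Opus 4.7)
The plan is to exploit the sunflower structure of the sets $\SetS_{1}^{\StrInd}, \dots, \SetS_{\Card}^{\StrInd}$ within each sub-universe $\Univ_{\FreqInd}$. By construction, all $\Card$ of these sets share the common ``kernel'' $\Univ_{\FreqInd} \setminus \UniqUniv_{\FreqInd}^{\StrInd}$, while the disjoint ``petals'' $\Part_{\FreqInd, 1}^{\StrInd}, \dots, \Part_{\FreqInd, \Card}^{\StrInd}$ partition $\UniqUniv_{\FreqInd}^{\StrInd}$, with petal $\Part_{\FreqInd, \PlayInd}^{\StrInd}$ lying only in $\SetS_{\PlayInd}^{\StrInd}$ (restricted to $\Univ_{\FreqInd}$). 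First I would observe that, since $\Card \geq 2$, every element of $\Univ_{\FreqInd} \setminus \UniqUniv_{\FreqInd}^{\StrInd}$ is covered by all $\Card$ sets of $\IdenCol$ and so is \emph{not} uniquely covered, while every element of $\UniqUniv_{\FreqInd}^{\StrInd}$ belongs to exactly one set of $\IdenCol$ and so \emph{is} uniquely covered. Combined with the disjointness of the sub-universes, this yields the sunflower identity $\UniqCover(\IdenCol) \cap \Univ_{\FreqInd} = \UniqUniv_{\FreqInd}^{\StrInd}$ for every $\FreqInd \in [\Card]$.

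Next I would sum over $\FreqInd$, using $|\UniqUniv_{\FreqInd}^{\StrInd}| = \Prop_{\FreqInd}|\Univ_{\FreqInd}| = \Card(\Card - \FreqInd)\lceil \SizeParam/\FreqInd \rceil$ together with $\lceil \SizeParam/\FreqInd \rceil \geq \SizeParam/\FreqInd$:
\begin{equation*}
|\UniqCover(\IdenCol)| \;=\; \sum_{\FreqInd=1}^{\Card} |\UniqUniv_{\FreqInd}^{\StrInd}| \;\geq\; \Card \SizeParam \sum_{\FreqInd=1}^{\Card} \frac{\Card - \FreqInd}{\FreqInd} \;=\; \Card \SizeParam \bigl(\Card \Har_\Card - \Card\bigr) \;=\; \SizeParam \Card^2 (\Har_\Card - 1),
\end{equation*}
which is exactly the claimed lower bound.

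There is no substantive obstacle: the sunflower identity is immediate from the construction (the key point being that the petals live entirely inside $\UniqUniv_{\FreqInd}^{\StrInd}$, so the kernel never contributes to unique coverage), and the remainder is the one-line harmonic computation $\sum_{\FreqInd=1}^{\Card}(\Card-\FreqInd)/\FreqInd = \Card\Har_\Card - \Card$. The only minor thing to flag is that this inequality is deterministic (unlike \cref{lemma:uniq cover of dist i}), since it depends only on the prescribed sizes $|\UniqUniv_{\FreqInd}^{\StrInd}| = \Prop_{\FreqInd}|\Univ_{\FreqInd}|$ rather than on the random choice of which elements make up $\UniqUniv_{\FreqInd}^{\StrInd}$.
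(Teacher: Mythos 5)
Your proof is correct and follows essentially the same route as the paper's: identify $\UniqCover(\IdenCol) \cap \Univ_\FreqInd$ with $\UniqUniv_{\FreqInd}^{\StrInd}$ via the sunflower structure, then sum $\Prop_{\FreqInd}|\Univ_{\FreqInd}|$ over $\FreqInd$ and reduce to $\sum_{\FreqInd=1}^{\Card}(\Card-\FreqInd)/\FreqInd = \Card\Har_\Card - \Card$. The only difference is cosmetic: the paper compresses the sunflower observation into ``by construction,'' whereas you spell it out, and you rightly note the bound is deterministic.
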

    \begin{proof}
    For each $\FreqInd \in [\Card]$, $\IdenCol$ uniquely covers $|\UniqUniv_{\FreqInd}^{\StrInd}|$ by construction. Below, the inequality holds since $|\Univ_{\FreqInd}| = \Card(\Card - 1) \lceil \SizeParam/\FreqInd \rceil \geq \Card(\Card - 1) \SizeParam/\FreqInd$.
        \begin{align*}
            |\UniqCover(\IdenCol)|
            &= \sum_{\FreqInd=1}^{\Card} |\UniqUniv_{\FreqInd}^{\StrInd}|
            = \sum_{\FreqInd=1}^{\Card} \Prop_{\FreqInd}|\Univ_{\FreqInd}| 
            \geq \sum_{\FreqInd=1}^{\Card} \frac{\Card-\FreqInd}{\Card-1} \frac{\Card(\Card-1) \SizeParam}{\FreqInd}
            = \sum_{\FreqInd=1}^{\Card} \frac{\Card-\FreqInd}{\FreqInd} \SizeParam \Card \\
            &= \SizeParam \Card \sum_{\FreqInd=1}^{\Card} \left( \frac{\Card}{\FreqInd} - 1 \right) 
            = \SizeParam \Card \left( \Card \sum_{\FreqInd=1}^{\Card} \frac{1}{\FreqInd} - \Card \right)
            = \SizeParam \Card^2 \left( \Har_{\Card} - 1 \right). \tag*{\qedhere}
        \end{align*}
    \end{proof}

    To conclude,
    when the players reduce from a NO instance of \Disj{}, with probability at least~$0.95$, the optimal unique coverage is less than
    $\SizeParam \Card^2 ( 3/2 + 3 / \sqrt{2\Card} )$, since the streamed sets are player distinct and by \cref{lemma:uniq cover of dist i}; whereas when they reduce from a YES instance, the optimal unique coverage is at least $\SizeParam \Card^2 (\Har_{\Card}-1)$ since the sets $\SetS_{1}^{\IndStar}, \dots, \SetS_{\Card}^{\IndStar}$ are in the stream and by \cref{lemma:uniq cover of iden i}.
    The required optimal unique coverage in a NO instance fails with probability at most $0.05$.
    Let $\Ratio = ( 3/2 + 3 / \sqrt{2\Card} ) / (\Har_{\Card}-1)$.
    Given a randomized $O(\Space)$-space $\Ratio$-approximation streaming algorithm with failure probability at most $0.05$, the players can run this algorithm on the \MUCprob{} instance to distinguish between a NO or YES instance with failure probability at most $0.1$. This implies a protocol for \Disj{} with maximum message size~$O(\Space)$. Thus, a constant-pass randomized $\Ratio$-approximation streaming algorithm with success probability at least~$0.95$ requires $\Omega(\NumSets/\Card^2)$ space.

\section{Subsampling for the Data Stream} \label{sec:subsampling approach}
Here we outline the subsampling approach from \cite{mcgregor2021maximum}. Given a data stream instance of \MUCprob{}, it is possible to construct a number of \emph{subsampled} instances by sampling the universe $\Univ$ at varying rates. By running an algorithm on these subsampled instances in parallel, we lose only a small error in approximation w.h.p. while only needing to store sets of size $O(\Card \log \NumSets / \Error^2)$. We summarize the overall approach in \cref{lemma:subsampling approach} and give a proof sketch.

\begin{proof}[Proof Sketch of \cref{lemma:subsampling approach}.]
Given an instance of \MUCprob{} with universe $\Univ$ and subsets $\Stream$, let $\OptGuess$ be a guess of the optimal solution value; each subsampled instance  corresponds to some value of $\OptGuess$ (we calculate these guesses shortly). Let $\HashFunc : \Univ \rightarrow \{0, 1\}$ be a hash function that is $\Omega(\Card \log \NumSets / \Error^2)$-wise independent such that
\begin{align}
    \Pr[\HashFunc(\Elem) = 1] = \ProbHash = \frac{\Const \Card \log \NumSets}{\Error^2 \OptGuess}\,, \notag
\end{align}
where $\Const$ is a sufficiently large constant. Let $\UnivSample = \{ \Elem \in \Univ : \HashFunc(\Elem) = 1 \}$ be the subsampled universe, $\SetSSample = \SetS \cap \UnivSample$, $\StreamSample = \{ \SetSSample : \SetS \in \Stream \}$ be the subsampled sets, and $\OptValueSample$ be the optimal unique coverage in the subsampled instance. Further, let $\SolSample$ be a solution from $\StreamSample$ and $\Sol$ be the corresponding solution from the original collection $\Stream$. Then \cref{lemma:subsampling} below (a restatement of \cite[Lemma 23]{mcgregor2021maximum}) shows that, in a subsampled instance where $\OptGuess \leq \OptValue$,  w.h.p., the loss in approximation is at most $2 \Error$.
\begin{lemma}[\cite{mcgregor2021maximum}, Lemma 23] \label{lemma:subsampling}
    If $\OptGuess \leq \OptValue$, then with probability at least $1-1/\poly(\NumSets)$, we have that
    \begin{align}
         \ProbHash \OptValue (1+\Error) \geq \OptValueSample \geq \ProbHash \OptValue (1-\Error) \,. \notag
    \end{align}
    Furthermore, for some $\Ratio \in (0,1)$, if $\SolSample \subseteq \StreamSample$ satisfies $|\UniqCover(\SolSample)| \geq \Ratio (1-\Error) \ProbHash \OptValue$, then $|\UniqCover(\Sol)| \geq (\Ratio - 2\Error)\OptValue$.
\end{lemma}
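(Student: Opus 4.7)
The plan is to establish both parts of the lemma from a single concentration inequality for $|\UniqCover(\SolSample)|$ applied to various fixed subcollections $\Sol$, together with a union bound over all feasible choices. First, fix any $\Sol \subseteq \Stream$ with $|\Sol|\leq \Card$ and observe that $|\UniqCover(\SolSample)| = \sum_{\Elem \in \UniqCover(\Sol)} [\HashFunc(\Elem) = 1]$: elements in $\NUniqCover(\Sol)$ remain multiply covered upon subsampling, and elements outside $\Cover(\Sol)$ contribute nothing. Hence $\E[|\UniqCover(\SolSample)|] = \ProbHash |\UniqCover(\Sol)|$, and a standard Chernoff-type bound for $k$-wise independent Bernoulli sums (e.g.\ Schmidt--Siegel--Srinivasan) with $k = \Omega(\Card \log \NumSets / \Error^2)$, using $\OptGuess \leq \OptValue$ to guarantee $\ProbHash \OptValue \geq \Const \Card \log \NumSets / \Error^2$, gives
\begin{equation*}
\Pr\!\left[ \left| |\UniqCover(\SolSample)| - \ProbHash |\UniqCover(\Sol)| \right| > \Error \ProbHash \OptValue \right] \leq \NumSets^{-\Omega(\Card)}
\end{equation*}
once $\Const$ is taken sufficiently large. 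Union-bounding over all $\sum_{j \leq \Card} \binom{\NumSets}{j} \leq \NumSets^{\Card}$ subcollections of size at most $\Card$ then yields an overall failure probability of $1/\poly(\NumSets)$; I would condition on this good event throughout.

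Next I would derive the two-sided bound on $\OptValueSample$. Applying the inequality with $\Sol = \Opt$ shows that the subsampled copy of $\Opt$ is a feasible subcollection in $\StreamSample$ with unique coverage at least $\ProbHash \OptValue (1-\Error)$, so $\OptValueSample \geq \ProbHash \OptValue (1-\Error)$. Conversely, any $\SolSample$ realizing $\OptValueSample$ corresponds to some $\Sol \subseteq \Stream$ with $|\UniqCover(\Sol)| \leq \OptValue$, whence $\OptValueSample \leq \ProbHash |\UniqCover(\Sol)| (1+\Error) \leq \ProbHash \OptValue (1+\Error)$. For the \emph{furthermore} part, any $\SolSample$ with $|\UniqCover(\SolSample)| \geq \Ratio(1-\Error) \ProbHash \OptValue$ has a counterpart $\Sol \subseteq \Stream$, and applying the upper-tail inequality to $\Sol$ gives $\Ratio (1-\Error) \ProbHash \OptValue \leq \ProbHash |\UniqCover(\Sol)| (1+\Error)$, i.e., $|\UniqCover(\Sol)| \geq \Ratio \OptValue (1-\Error)/(1+\Error) \geq \Ratio(1-2\Error)\OptValue \geq (\Ratio - 2\Error)\OptValue$, as required.

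The main technical obstacle I expect is ensuring the Chernoff tail is small enough to survive the $\NumSets^{\Card}$-term union bound; this is precisely what drives the requirement that $\HashFunc$ be $\Omega(\Card \log \NumSets / \Error^2)$-wise independent and that the constant $\Const$ in the sampling rate be taken large enough. A minor subtlety is that for subcollections $\Sol$ whose unique coverage is much smaller than $\OptValue$ a purely multiplicative Chernoff bound would degrade; this is sidestepped by phrasing the deviation additively in terms of the common quantity $\ProbHash \OptValue$, which upper-bounds $\ProbHash |\UniqCover(\Sol)|$ for every such $\Sol$ and therefore still yields the required deviation threshold.
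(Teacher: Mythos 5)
The paper does not actually prove this lemma: it is stated verbatim as Lemma~23 of McGregor et al.\ \cite{mcgregor2021maximum} and cited without a local proof, so there is no paper argument to compare against. Your reconstruction, however, is sound and follows the standard template: the crucial observation is the identity $\UniqCover(\SolSample)=\UniqCover(\Sol)\cap\UnivSample$ (an element multiply covered in $\Sol$ remains multiply covered among the $\SetS\cap\UnivSample$, and an element in no set of $\Sol$ remains in no set of $\SolSample$), which turns $|\UniqCover(\SolSample)|$ into a $\polylog$-wise-independent Bernoulli sum with mean $\ProbHash|\UniqCover(\Sol)|$; the sampling rate $\ProbHash\OptGuess=\Const\Card\log\NumSets/\Error^2$ combined with $\OptGuess\leq\OptValue$ makes $\ProbHash\OptValue$ large enough for a limited-independence Chernoff bound with additive deviation $\Error\ProbHash\OptValue$ to have failure probability $\NumSets^{-\Omega(\Card)}$, which survives a union bound over all $\leq\NumSets^{\Card}$ candidate subcollections; and both claims follow by instantiating the good event at $\Sol=\Opt$ and at the preimage of the algorithm's output.

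One small bookkeeping slip: the concentration event you invoke is the additive bound $\bigl||\UniqCover(\SolSample)|-\ProbHash|\UniqCover(\Sol)|\bigr|\leq\Error\ProbHash\OptValue$, yet in both the upper bound on $\OptValueSample$ and the ``furthermore'' derivation you write the intermediate step $\OptValueSample\leq\ProbHash|\UniqCover(\Sol)|(1+\Error)$. That multiplicative form does not follow from the additive event unless $|\UniqCover(\Sol)|\geq\OptValue$. The correct chain is $\OptValueSample\leq\ProbHash|\UniqCover(\Sol)|+\Error\ProbHash\OptValue\leq\ProbHash\OptValue(1+\Error)$, and in the ``furthermore'' step $\Ratio(1-\Error)\ProbHash\OptValue\leq\ProbHash|\UniqCover(\Sol)|+\Error\ProbHash\OptValue$, giving $|\UniqCover(\Sol)|\geq(\Ratio(1-\Error)-\Error)\OptValue\geq(\Ratio-2\Error)\OptValue$ since $\Ratio<1$. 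Your final conclusions are correct, and you already flag the additive-vs-multiplicative subtlety at the end, but the inline inequalities should be rewritten to match.
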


We guess $\OptGuess = 2^{\Indi}$ for each $\Indi \in [\lceil\log_2 \UnivSize\rceil]$ and construct a subsampled instance for each $\OptGuess$ in parallel. Then, in the particular subsampled instance where $\OptValue / 2 \leq \OptGuess \leq \OptValue$, \cref{lemma:subsampling} implies the following upper bound on every set size $|\SetSSample|$ with probability $1-1/\poly(\NumSets)$.
\begin{align}
    |\SetSSample| \leq \OptValueSample \leq \ProbHash \OptValue (1+\Error) = \frac{\Const \Card \log \NumSets}{\Error^2 \OptGuess} \OptValue (1+\Error) \leq \frac{2\Const \Card \log \NumSets}{\Error^2} (1+\Error) = O\left( \frac{\Card \log\NumSets}{\Error^2} \right). \notag
\end{align}
To ensure that we only ever store sets of size $O( \Card \log\NumSets / \Error^2 )$, we terminate every subsampled instance that contains a set $\SetSSample$ with $|\SetSSample| > (2\Const \Card \log \NumSets/\Error^2) (1+\Error)$. This is safe to do since, w.h.p., we do not terminate the subsampled instance where $\OptValue/2 \leq \OptGuess \leq \OptValue$ by the above upper bound on $|\SetSSample|$ for every $\SetSSample$ in this particular instance.

This means that, out of the nonterminated subsampled instances, we should select the one with the smallest $\OptGuess$ and return the corresponding solution, giving an $(\Ratio-2\Error)$-approximation for the original instance w.h.p. (this works even if the smallest nonterminated guess satisfies $\OptGuess < \OptValue/2$ since \cref{lemma:subsampling} holds for all $\OptGuess \leq \OptValue$).

The overall space complexity, of $\lceil \log_2 \UnivSize \rceil \NumSetsStored \, O(\Card \log \NumSets \log \UnivSize/\Error^2)$, follows from the number of guesses of $\OptGuess$, and by the algorithm storing, for each guess at most $\NumSetsStored$ sets of size $O( \Card \log\NumSets/\Error^2 )$.
\end{proof}

\section{Conclusions} \label{sec:conclusions}
We are pleased to present a suite of algorithms, and a streaming lower bound, for \MUCprob{}.
The component algorithms that build a solution to \MUCprob{} from a solution \MCprob{} serve to support a fixed-parameter tractable approximation scheme (FPT-AS).
The lower bound shows that~$\Omega(\NumSets/\Card^2)$ space is required even to get within a~$(1.5 + o(1))/(\ln \Card - 1)$ factor of optimal.

A plasuible future direction would be to reduce, or indeed eliminate, the role of the upper bound of the unique coverage ratio,~$\UniqRatio$, in the kernel size in a FPT-AS.
This would match the kernel size used in existing FPT-ASs for \MCprob{}, but may not be possible due to the inherent hardness of \MUCprob{}. Another direction would be proving a streaming lower bound with a tighter approximation threshold. This may require a reduction from a different communication problem, rather than the renowned $\Card$-player Set Disjointess.

\newpage
\bibliography{main}

\end{document}